\newtheorem{theorem}{Theorem}[section]
\newtheorem{lemma}[theorem]{Lemma}
\newtheorem{corollary}[theorem]{Corollary}
\newtheorem{proposition}[theorem]{Proposition}
\newtheorem{remark}[theorem]{Remark}
\numberwithin{equation}{section}
\newcommand{\Ai}{\text{Ai\,}}
\newcommand{\re}{\text{Re\,}}
\newcommand{\sgn}{\text{sgn\,}}
\begin{document}
\setcounter{page}{1}

\thanks{Supported by the grant KAW 2010.0063 from the Knut and Alice Wallenberg
Foundation}

\title[Two time distribution in Brownian directed percolation]
{Two time distribution in Brownian directed percolation}
\author[K.~Johansson]{Kurt Johansson}

\address{
Department of Mathematics,
KTH Royal Institute of Technology,
SE-100 44 Stockholm, Sweden}

\email{kurtj@kth.se}

\begin{abstract}
In the zero temperature Brownian semi-discrete directed polymer we study the joint distribution of two last-passage times 
at positions ordered in the time-like direction. This is the situation when we have the slow de-correlation phenomenon. 
We compute the limiting joint distribution function in a scaling limit. This limiting distribution is given by
an expansion in determinants which is not a Fredholm expansion. 
A somewhat similar looking formula was derived non-rigorously in a related model by Dotsenko.

\end{abstract}

\maketitle

\section{Introduction and results}\label{sect1}
Let $B_i(t)$, $t\ge 0$, $i\ge 1$, be independent standard Brownian motions. We consider the zero temperature Brownian semi-discrete
directed polymer, \cite{Bar}, \cite{GraTrWi},\cite{BoJeu}, \cite{OCon}. The last-passage time in this model is defined by
\begin{equation}\label{1.1}
H(\mu,n)=\sup_{0=\tau_0<\tau_1<\dots<\tau_n=\mu}\sum_{i=1}^nB_i(\tau_i)-B_i(\tau_{i-1}).
\end{equation}
We are interested in the asymptotics of the joint distribution function
\begin{equation}\label{1.2}
\mathbb{P}[H(\mu_1,n_1)\le\xi_1,H(\mu_2,n_2)\le\xi_2]
\end{equation}
when $(\mu_1,n_1)$ and $(\mu_2,n_2)$ are ordered in the time-like direction, $\mu_1<\mu_2$, $n_1<n_2$. The random variable (\ref{1.1})
is distributed as the largest eigenvalue of a GUE random matrix, \cite{Bar}. More precisely,
\begin{equation}\label{1.3}
\mathbb{P}[H(\mu,n)\le\xi]=\frac 1{Z_{\mu,n}}\int_{(-\infty,\xi]^n}\prod_{1\le j<k\le n}(x_k-x_j)^2\prod_{j=1}^ne^{-\frac{x_j^2}{2\mu}}\,d^nx.
\end{equation}
By standard results this leads to the following limit law for $H(\mu,n)$. Let $t,\nu$ and $\eta$ be fixed. Then
\begin{equation}\label{1.4}
\lim_{M\to\infty}\mathbb{P}\left[H(tM-\nu(tM)^{2/3},[tM+\nu(tM)^{2/3}])\le 2tM+(\eta-\nu^2)(tM)^{1/3}\right]=F_2(\eta),
\end{equation}
where $F_2$ is the GUE Tracy-Widom distribution,
\begin{equation}\label{1.5}
F_2(\eta)=\det (I-K_{\Ai})_{L^2(\eta,\infty)}.
\end{equation}
Here $K_{\Ai}$ is the Airy kernel,
\begin{equation}\label{1.6}
K_{\Ai}(x,y)=\int_0^\infty\Ai(x+\tau)\Ai(y+\tau)\,d\tau.
\end{equation}
When  $(\mu_1,n_1)$ and $(\mu_2,n_2)$ have a space-like ordering, $\mu_1<\mu_2$, $n_1>n_2$, the asymptotics for (\ref{1.2}) 
analogous to (\ref{1.4}) can be computed and expressed in terms of a Fredholm determinant with the extended Airy kernel. 
This leads to the possibility of proving convergence to the Airy process along space-like paths, \cite{BorOl}, \cite{Fer}. However, the case when 
$(\mu_1,n_1)$ and $(\mu_2,n_2)$ are ordered in the time-like direction (more precisely along a characteristic, see e.g. \cite{Fer}) 
has not been considered previously except non-rigorously in a related model by 
Dotsenko, \cite{Dots}, using the replica method. The main result of this paper is given in the next theorem.

\begin{theorem}\label{thm1.1}
Let $0<t_1<t_2$, $\eta_1,\eta_2,\nu_1,\nu_2\in\mathbb{R}$ be given. Set
\begin{equation}\label{1.7}
\alpha=(t_1/\Delta t)^{1/3},
\end{equation}
where $\Delta t=t_2-t_1$, and let $F_{\text{tt}}(\eta_1, \eta_2;\alpha, \nu_1, \nu_2)$ be given by (\ref{1.18}) below. 
Introduce the scaling
\begin{equation}\label{scaling}
\mu_i=t_iM-\nu_i(t_iM)^{2/3},\,n_i=t_iM+\nu_i(t_iM)^{2/3},\,\xi_i=2t_iM+(\eta_i-\nu_i^2)(t_iM)^{1/3},
\end{equation}
$i=1,2$. With this scaling, define
\begin{equation}\label{1.5'}
F_M(\eta_1, \eta_2;t_1,t_2, \nu_1, \nu_2)=\mathbb{P}[H(\mu_1,n_1)\le\xi_1,H(\mu_2,n_2)\le\xi_2]
\end{equation}
Then,
\begin{equation}\label{1.6'}
\lim_{M\to\infty} F_M(\eta_1, \eta_2;t_1,t_2, \nu_1, \nu_2)=F_{\text{tt}}(\eta_1, \eta_2;\alpha, \nu_1, \nu_2).
\end{equation}

\end{theorem}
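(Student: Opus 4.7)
My plan is to derive an exact finite-$M$ formula for $F_M$ from the integrable structure of the Brownian semi-discrete polymer and then to pass to the scaling limit term-by-term with a uniform tail estimate. First I would exploit the known integrable structure: $H(\mu,n)$ is tightly connected to $n$-dimensional Dyson Brownian motion (DBM), since for fixed $n$, $\mu\mapsto H(\mu,n)$ is the top particle of a DBM, while as $n$ varies at fixed $\mu$ one obtains the Warren process of interlacing diffusions whose $n$-th level marginal is (\ref{1.3}). These two structures intertwine to give a joint description of $(H(\mu_1,n_1),H(\mu_2,n_2))$, and combined with the Karlin--McGregor/Lindstr\"om--Gessel--Viennot formula for non-intersecting path ensembles this yields a double multi-integral representation of $F_M$ over $\mathbb{R}^{n_2}\times\mathbb{R}^{n_2}$.

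Because of the time-like ordering, this integral does not collapse to a single Fredholm determinant as it does in the space-like case. The key algebraic step is to write $\mathbf{1}[x\le\xi_1]=1-\mathbf{1}[x>\xi_1]$ for the top particle at the first time, and then to stratify the complementary event according to the number $k\ge 1$ of first-time particles in $[\xi_1,\infty)$. This gives an expansion
\[
F_M = \mathbb{P}[H(\mu_2,n_2)\le\xi_2] + \sum_{k=1}^{n_1}(-1)^k R_k^{(M)},
\]
in which each $R_k^{(M)}$ is a $k$-fold integral over $[\xi_1,\infty)^k$ whose integrand factorises, after biorthogonalisation and elimination of the intermediate variables, as a $k\times k$ Hermite-type determinant times a $k$-indexed modification of the single-time GUE Fredholm determinant on $(\xi_2,\infty)$. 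This is the finite-$M$ analogue of the non-Fredholm expansion that defines $F_{\text{tt}}$, and it is the form that parallels Dotsenko's replica ansatz.

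With this formula in hand, I would take the scaling limit of each $R_k^{(M)}$ under (\ref{scaling}). Steepest descent on the contour integral representation of the Hermite polynomials produces Airy functions at the spectral edge, and the Gaussian DBM transition density, appropriately rescaled, produces the short-time operator on the Airy scale parametrised by $\alpha=(t_1/\Delta t)^{1/3}$, since the time increment $\Delta t\cdot M$ rescales to a finite Airy-time proportional to $\alpha^{-3}$. This yields a candidate limit $R_k$ for each $k$ and hence a candidate series for $F_{\text{tt}}$.

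The main obstacle is the interchange of limit and summation. Since the series is not Fredholm, standard trace-class bounds do not apply, and what is needed are $k$- and $M$-uniform majorants for $|R_k^{(M)}|$ strong enough for dominated convergence in $k$. The natural route is to combine Hadamard's inequality on the $k\times k$ determinantal factor with Gaussian/Airy decay estimates away from the spectral edge, but this is delicate because the intermediate DBM transition kernel is not sign-definite after the relevant change of variables, so the crude size bounds must be sharpened by exploiting real cancellation before one can close the argument.
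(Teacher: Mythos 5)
Your high-level framework (finite-$M$ determinantal formula, then a non-Fredholm expansion, then a term-by-term scaling limit with a uniform majorant) matches the paper's, but the concrete algebraic step that produces the expansion is genuinely different from the paper's and, as proposed, does not reach the target formula (\ref{1.18}). Stratifying by the number $k$ of first-time particles in $[\xi_1,\infty)$ yields a single-indexed series, and you claim each $R_k^{(M)}$ factors into a $k\times k$ Hermite-type determinant times a modified Fredholm determinant. But (\ref{1.18}) is indexed by three integers $(r,s,t)$, and each term is a single block determinant $W^{(1)}_{r,s,r,t}$ or $W^{(2)}_{r,s,r-1,t}$ of size $r+s+r+t+1$ built from the kernels $\phi,\psi$ — not a product of two factors. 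The paper's expansion arises from an entirely different mechanism: it starts from the discrete geometric model and the transition-density formula of Proposition \ref{prop2.1} (which is where the Warren intertwining enters), passes to multi-contour integrals using the symmetrization identities of Lemma \ref{lem2.3}, takes a diffusive limit, and then works with the $\xi_1$-derivative via an auxiliary parameter $h$ in $Q(h)$ through the identity $\partial_{\xi_1}F_M=\partial_h|_{h=0}Q(h)$ (Proposition \ref{prop2.4}). After inserting Hermite-polynomial contour representations (Lemma \ref{lem3.1}), the non-Fredholm series comes from expanding the products of Kronecker-delta-shifted entries $(\delta_{\ell k}1(\cdot)+A_h(\ell,k))$ and $(\delta_{\ell k}1(\cdot)+B(\ell,k))$ in (\ref{3.10}) — a Fredholm-like expansion of a product of shifted matrices over index sets, with nothing corresponding to a particle-count stratification. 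Without the $h$-device (which lets one expand the derivative rather than $F_M$ itself) and without this Kronecker-delta expansion, your stratification is not set up to produce the $(r,s,t)$-indexed block determinants that the theorem actually asserts; since the statement is exactly the identity with (\ref{1.18}), this is a gap, not a stylistic difference.

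The second gap is in the estimates. You anticipate that the transition kernel is not sign-definite after the change of variables and that real cancellation must be exploited to close the dominated-convergence argument. In the paper no cancellation is used at all: the kernel entries $a_{0,1},b_1,c_2,c_3$ are four-fold (or two-fold) contour integrals, and the needed bounds (Lemma \ref{lem4.2}, Corollary \ref{cor4.3}) come from steepest-descent contour deformations with $x$- and $y$-dependent shifts, together with the pointwise decay of Lemma \ref{lem6.1}, producing bounds of the form $Ce^{-c(x_+^{3/2}+(-y)_+^{3/2})+C(\cdots)}$. Combined with the geometric-mean Hadamard trick in (\ref{4.21})--(\ref{4.22}) these control the triple sum. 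The moral is that the difficulty you foresee is an artifact of trying to bound the DBM/Karlin--McGregor densities directly; starting from contour-integral representations of the kernel entries is what makes crude absolute-value bounds sufficient, and your DBM starting point does not supply those representations without additional work.
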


The theorem will be proved in section \ref{sect4}.

In order to give the formula for the limiting distribution function we first need to define some functions. Set
\begin{align}\label{1.8}
\Delta\nu&=\nu_2\left(\frac{t_2}{\Delta t}\right)^{2/3}-\nu_1\left(\frac{t_1}{\Delta t}\right)^{2/3},
\\
\Delta\eta&=(\eta_2-\nu_2^2)\left(\frac{t_2}{\Delta t}\right)^{1/3}-(\eta_1-\nu_1^2)\left(\frac{t_1}{\Delta t}\right)^{1/3}
+\Delta\nu^2.\notag
\end{align}
Let
\begin{equation}\label{1.9}
\phi_1(x,y)=-\alpha e^{\alpha\Delta\nu x-\nu_1y}\int_0^\infty e^{(\nu_1-\alpha\Delta\nu)\tau}K_{\Ai}(\eta_1-\tau,\eta_1-y)
K_{\Ai}(\Delta\eta+\alpha\tau,\Delta\eta+\alpha x)\,d\tau,
\end{equation}
\begin{equation}\label{1.10}
\psi_1(x,y)=\alpha e^{\alpha\Delta\nu x-\nu_1y}\int_0^\infty e^{-(\nu_1-\alpha\Delta\nu)\tau}K_{\Ai}(\eta_1+\tau,\eta_1-y)
K_{\Ai}(\Delta\eta-\alpha\tau,\Delta\eta+\alpha x)\,d\tau,
\end{equation}
\begin{equation}\label{1.11}
\phi_2(x,y)=\alpha e^{\alpha\Delta\nu(x-y)}K_{\Ai}(\Delta\eta+\alpha x,\Delta\eta+\alpha y),
\end{equation}
and
\begin{equation}\label{1.12}
\phi_3(x,y)=e^{\nu_1(x-y)}K_{\Ai}(\eta_1-x,\eta_1-y).
\end{equation}
Define
\begin{equation}\label{1.13}
\phi(x,y)=\phi_1(x,y)+1(y\ge 0)\phi_2(x,y)-1(x<0)\phi_3(x,y),
\end{equation}
and
\begin{equation}\label{1.14}
\psi(x,y)=-\psi_1(x,y)-1(y>0)\phi_2(x,y)+1(x\le0)\phi_3(x,y),
\end{equation}
where $1(\cdot)$ is the indicator function. We will use the following notation in block matrices. If $f$ is a function of two real variables,
$\mathbf{x}\in\mathbb{R}^s$ and $\mathbf{y}\in\mathbb{R}^t$ we write
\begin{equation}\label{1.15}
f(\mathbf{x},\mathbf{y})=(f(x_i,y_j))_{\substack{ 1\le i\le s\\1\le j\le t}}\,\,\,,
\end{equation}
for a matrix block. When $s$ or $t$ is equal to zero this block is just empty and left out of the block matrix. 
Let $r_1,r_2,s,t\ge 0$, $\mathbf{x}\in\mathbb{R}^{r_1}$, $\mathbf{x'}\in\mathbb{R}^{s}$, $\mathbf{y}\in\mathbb{R}^{r_2}$,
$\mathbf{y'}\in\mathbb{R}^{t}$ and $0\in\mathbb{R}$. Define the determinants
\begin{equation}\label{1.16}
W_{r_1,s,r_2,t}^{(1)}(\mathbf{x},\mathbf{x'},\mathbf{y},\mathbf{y'})=
\left|\begin{matrix} \psi(\mathbf{x},\mathbf{x}) &\psi(\mathbf{x},\mathbf{x'}) &\psi(\mathbf{x},0) &\psi(\mathbf{x},\mathbf{y})
&\psi(\mathbf{x},\mathbf{y'})\\
\phi(\mathbf{x'},\mathbf{x}) &\phi(\mathbf{x'},\mathbf{x'}) &\phi(\mathbf{x'},0) &\phi(\mathbf{x'},\mathbf{y})
&\phi(\mathbf{x'},\mathbf{y'})\\
\psi(0,\mathbf{x}) &\psi(0,\mathbf{x'}) &\psi(0,0) &\psi(0,\mathbf{y})
&\psi(0,\mathbf{y'})\\
\phi(\mathbf{y},\mathbf{x}) &\phi(\mathbf{y},\mathbf{x'}) &\phi(\mathbf{y},0) &\phi(\mathbf{y},\mathbf{y})
&\phi(\mathbf{y},\mathbf{y'})\\
\psi(\mathbf{y'},\mathbf{x}) &\psi(\mathbf{y'},\mathbf{x'}) &\psi(\mathbf{y'},0) &\psi(\mathbf{y'},\mathbf{y})
&\psi(\mathbf{y'},\mathbf{y'})
\end{matrix}\right|
\end{equation}
(the determinant is of size $r_1+s+r_2+t+1$) and
\begin{equation}\label{1.17}
W_{r_1,s,r_2,t}^{(2)}(\mathbf{x},\mathbf{x'},\mathbf{y},\mathbf{y'})=
\left|\begin{matrix} \psi(\mathbf{x},\mathbf{x}) &\psi(\mathbf{x},\mathbf{x'}) &\psi(\mathbf{x},0) &\psi(\mathbf{x},\mathbf{y})
&\psi(\mathbf{x},\mathbf{y'})\\
\phi(\mathbf{x'},\mathbf{x}) &\phi(\mathbf{x'},\mathbf{x'}) &\phi(\mathbf{x'},0) &\phi(\mathbf{x'},\mathbf{y})
&\phi(\mathbf{x'},\mathbf{y'})\\
\phi(0,\mathbf{x}) &\phi(0,\mathbf{x'}) &\phi(0,0) &\phi(0,\mathbf{y})
&\phi(0,\mathbf{y'})\\
\phi(\mathbf{y},\mathbf{x}) &\phi(\mathbf{y},\mathbf{x'}) &\phi(\mathbf{y},0) &\phi(\mathbf{y},\mathbf{y})
&\phi(\mathbf{y},\mathbf{y'})\\
\psi(\mathbf{y'},\mathbf{x}) &\psi(\mathbf{y'},\mathbf{x'}) &\psi(\mathbf{y'},0) &\psi(\mathbf{y'},\mathbf{y})
&\psi(\mathbf{y'},\mathbf{y'})
\end{matrix}\right|.
\end{equation}
We can now give the expression for the distribution function $F_{\text{tt}}(\eta_1, \eta_2;\alpha, \nu_1, \nu_2)$ in theorem \ref{thm1.1}.
Define
\begin{align}\label{1.18}
&F_{\text{tt}}(\eta_1^\ast, \eta_2;\alpha, \nu_1, \nu_2)
\\&=F_2(\eta_2)-
\sum\limits_{r,s,t=0}^\infty\frac 1{(r!)^2s!t!}\int\limits_{\eta_1^\ast}^\infty d\eta_1\int\limits_{(-\infty,0]^r}d^rx\int\limits_{(-\infty,0]^s}d^sx'
\int\limits_{[0,\infty)^r}d^ry\int\limits_{[0,\infty)^t}d^ty' 
W_{r,s,r,t}^{(1)}(\mathbf{x},\mathbf{x'},\mathbf{y},\mathbf{y'})\notag\\
&-\sum\limits_{r=1}^\infty\sum\limits_{s,t=0}^\infty\frac 1{r!(r-1)!s!t!}\int\limits_{\eta_1^\ast}^\infty\,d\eta_1\int\limits_{(-\infty,0]^r}d^rx
\int\limits_{(-\infty,0]^s}d^sx'
\int\limits_{[0,\infty)^{r-1}}d^{r-1}y\int_{[0,\infty)^t}d^ty' 
W_{r,s,r-1,t}^{(2)}(\mathbf{x},\mathbf{x'},\mathbf{y},\mathbf{y'})\notag,
\end{align}
where $F_2$ is the Tracy-Widom distribution given by (\ref{1.5}). Recall that the Tracy-Widom distribution $F_2$ in (\ref{1.5}) can be written as a
Fredholm expansion. The two-time distribution function $F_{\text{tt}}$ is not given by
a Fredholm expansion although the expansion in (\ref{1.18}) has some similarities with a block Fredholm expansion.

We will derive the formulas that we will use to prove (\ref{1.6'}) by thinking of $H(\mu,n)$ as a limit of a last-passage 
time in a discrete model. Let $\left(w(i,j)\right)_{i,j\ge 1}$ be independent geometric random variables with parameter $q$,
$$
\mathbb{P}[w(i,j)=k]=(1-q)q^k,\quad k\ge 0.
$$
Consider the last-passage times
\begin{equation}\label{1.19}
G(m,n)=\max_{\pi:(1,1)\nearrow (m,n)} \sum_{(i,j)\in\pi} w(i,j),
\end{equation}
where the maximum is over all up/right paths from $(1,1)$ to $(m,n)$, see \cite{JoSh}. We have the following limit law
\begin{equation}\label{1.20}
\frac{G([\mu T],n)-\frac q{1-q}[\mu T]}{\frac{\sqrt{q}}{1-q}\sqrt{T}}
\to  H(\mu,n)
\end{equation}
in distribution as $T\to\infty$, see \cite{Bar}. The distribution function
$\mathbb{P}[G(m_1,n_1)\le v_1, G(m_2, n_2)\le v_2]$ will be analyzed using a formula from \cite{JoMar}, see section \ref{sect2} below.

\begin{remark}\label{rem1} 
{\rm As mentioned above Dotsenko has given a non-rigorous derivation of the limiting distribution function
$F_{\text{tt}}$. The formula in \cite{Dots} has similarities with (\ref{1.18}) but we have not attempted to prove that
they are the same. Dotsenko also used a similar derivation in the space-like direction, \cite{Dots2}, see also \cite{ImSaSp}. }
\end{remark}

\begin{remark}\label{rem2}
{\rm This paper is a contribution to the understanding of models in the so called KPZ universality class, which have been of great interest in the last 15 years. We will not survey this development here, see for example the papers \cite{BorCorw}, \cite{BorPet}, \cite{Corw}, \cite{JoHouch},
\cite{Quas} and references therein. In particular the results of this paper could be of interest in understanding the so called Airy sheet, a conjectural limiting object for many models, see \cite{CorwQua} and \cite{CorLiuWa}. One aspect about last-passage percolation models in the time direction has been studied previously, namely the so called slow de-correlation phenomenon, see \cite{CorFerPec}, \cite{Fer}. This means that the scaling
exponent in the time direction (characteristic direction) is 1; we need $\mu_1$ and $\Delta\mu$ to be of order $M$ above to get a 
non-trivial limit.   }
\end{remark}

\begin{remark}\label{rem3}
{\rm It is not so hard to check, disregarding technical details, that $F(\eta_1,\eta_2;\alpha)\to F_2(\eta_1)$ as $\eta_2\to\infty$ and 
$F(\eta_1,\eta_2;\alpha)\to F_2(\eta_2)$ as $\eta_1\to\infty$. We also expect that $F(\eta_1,\eta_2;\alpha)\to F_2(\eta_1)F_2(\eta_2)$
as $\alpha\to 0+$. This limit can be checked heuristically but appears to be rather subtle and we will not discuss it further.} 
\end{remark}

\begin{remark}\label{rem3}
{\rm Below we will derive formulas for the corresponding problem for the last-passage times $G(m,n)$ before
taking the limit to $H(\mu,n)$. It should be possible to carry out the whole proof below but with $G(m,n)$ instead, but some of the computations
in section \ref{sect3} appear to be harder. The role of the Hermite polynomials there would then be replaced by the Meixner polynomials.}
\end{remark}

The outline of the paper is as follows. In section \ref{sect2} we will prove a formula for the joint distribution function 
$\mathbb{P}[G(m_1,n_1)\le v_1, G(m_2, n_2)\le v_2]$ based on results from \cite{JoMar}. By taking a limit this leads to a formula 
for (\ref{1.2}). This computation involves certain symmetrization identities that will be proved in section \ref{sect5}.
In section \ref{sect3} the formula from section \ref{sect2} will be rewritten and expanded in terms of determinants. Section \ref{sect4}
gives the proof of theorem \ref{thm1.1} based on the expansion, certain asymptotic limits and estimates. These limits and estimates are
finally proved in section \ref{sect6}.

Throughout this paper $\gamma_r$ will denote a positively oriented circle around the origin with radius $r$, and $\Gamma_d$ will
denote a straight line through $d$ parallell to the imaginary axis and oriented upwards.

\bigskip
{\bf Acknowledgement.} This work was inspired by a talk by Victor Dotsenko at the Simons Symposium
{\it The Kardar-Parisi-Zhang Equation and Universality Class}, which appeared as \cite{Dots}. I thank the 
Simons Foundation for the
invitation. The work was started while visiting the Institute of Advanced Study. I thank the Institute
for inviting me to the special year on {\it Non-equilibrium Dynamics and Random Matrices}. I thank an anonymous referee
for many valuable comments and suggestions.


\section{A formula for the joint distribution function}\label{sect2}

Let $G(m,n)$, $m,n\ge 1$, be the last-passage times defined by (\ref{1.19}), and write
$$
\mathbf{G}(m)=(G(m,1),\dots,G(m,n)).
$$
We put $\mathbf{G}(0)=\mathbf{0}$. Introduce the difference operators $\Delta f(x)=f(x+1)-f(x)$, $\Delta^{-1}f(x)
=\sum_{y=-\infty}^{x-1} f(y)$, where $f:\mathbb{Z}\to\mathbb{R}$ is a given function. Set, for $m\ge 1$, $x\in\mathbb{Z}$,
\begin{equation}\label{2.1}
w_m(x)=(1-q)^m\binom{x+m-1}{x} q^x1(x\ge 0).
\end{equation} 
Also, we let $W_n=\{x\in \mathbb{Z}^n\,;\,x_1\le x_2\le\dots\le x_n\}$. In \cite{JoMar} the following result was proved, inspired by \cite{Warr}.

\begin{proposition}\label{prop2.1} For $x,y\in W_n$ and $m>\ell\ge 0$,
\begin{equation}\label{2.2}
\mathbb{P}[\mathbf{G}(m)=y\,|\,\mathbf{G}(\ell)=x]=\det \left(\Delta^{j-i} w_{m-\ell}(y_j-x_i)\right)_{1\le i,j\le n}.
\end{equation} 
In particular
\begin{equation}\label{2.3}
\mathbb{P}[\mathbf{G}(m)=x]= \det\left(\Delta^{j-i} w_{m}(x_j)\right)_{1\le i,j\le n}.
\end{equation} 
\end{proposition}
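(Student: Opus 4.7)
The identity \eqref{2.3} is the specialization $\ell=0$, $x=\mathbf{0}$ of \eqref{2.2} (using $w_m(y_j-0)=w_m(y_j)$), so the work is in proving \eqref{2.2}. The plan is to exploit the Markov property of the sequence $\{\mathbf{G}(m)\}_{m\ge 0}$ on $W_n$ and reduce to verifying the one-step transition formula, then to combine this with a Chapman--Kolmogorov semigroup identity for the determinantal kernel on the right-hand side.

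First, the recursion
\[
G(m,j)=\max\bigl(G(m-1,j),\,G(m,j-1)\bigr)+w(m,j),\qquad G(\cdot,0)=0,
\]
shows that $\mathbf{G}(m)$ depends on $\sigma(\mathbf{G}(0),\ldots,\mathbf{G}(m-1))$ only through $\mathbf{G}(m-1)$ together with the independent column $(w(m,1),\dots,w(m,n))$. So $\mathbf{G}$ is a Markov chain on $W_n$, and the proposition reduces to (i) computing the one-step transition kernel and showing it equals $\det(\Delta^{j-i}w_1(y_j-x_i))_{1\le i,j\le n}$, and (ii) verifying the semigroup identity
\[
\sum_{z \in W_n}\det\bigl(\Delta^{j-i}w_a(z_j-x_i)\bigr)_{i,j}\det\bigl(\Delta^{j-i}w_b(y_j-z_i)\bigr)_{i,j}=\det\bigl(\Delta^{j-i}w_{a+b}(y_j-x_i)\bigr)_{i,j}
\]
for $a,b\ge 1$ and $x,y\in W_n$. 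Identity (ii) I would obtain from the convolution law $w_a*w_b=w_{a+b}$ (as $w_m$ is the law of a sum of $m$ independent geometrics), from the commutation of $\Delta$ with convolution, and from a Lindstr\"om--Gessel--Viennot-style manipulation that extends the inner sum to $z\in\mathbb{Z}^n$ via antisymmetrization, applies Cauchy--Binet, and then uses that contributions from $z\notin W_n$ cancel in the alternating sum.

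For step (i), the cleanest route is Warren's approach \cite{Warr} as adapted in \cite{JoMar}: one intertwines the chain $\mathbf{G}(m)$, via a triangular Markov kernel on $W_n$, with the much simpler chain of $n$ independent random walks having one-step distribution $w_1$. The Karlin--McGregor formula for $n$ free walks, combined with the Doob $h$-transform produced by the intertwining (the $h$-function being a Vandermonde-type harmonic function on the discrete chamber), yields precisely the determinantal form with the $\Delta^{j-i}$ operators. An alternative is an RSK-based argument identifying $\mathbf{G}(m)$ with the shape of the top row of a random Young tableau under the appropriate Schur measure, whose row evolution is a Markov chain on $W_n$ of exactly this Meixner type.

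The main obstacle is step (i). The max-plus recursion mixes the coordinates of $x$ and $y$ nonlinearly through running maxima, and the operators $\Delta^{j-i}$ (with $\Delta^{-k}$ interpreted as iterated summation) encode the nested order statistics that this mixing produces. Making the one-step formula rigorous therefore requires either the Warren intertwining, which is delicate to set up but then routine to execute, or a direct inclusion-exclusion argument on the geometric increments $w(m,1),\dots,w(m,n)$ that reorganizes the combinatorics of the max-plus recursion into the alternating sum defining the determinant.
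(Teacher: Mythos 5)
The paper does not prove Proposition \ref{prop2.1} at all: it states it as a result ``proved in \cite{JoMar}, inspired by \cite{Warr}'' and moves on. So there is nothing in this paper to compare your argument against line by line; what you can usefully do is assess whether your sketch would actually deliver the result.

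Your reduction of \eqref{2.3} to \eqref{2.2} is correct, and so is the structural plan: establish the Markov property, prove the one-step transition formula, and then propagate to $m-\ell$ steps via a semigroup identity for the determinantal kernel. Two comments, one minor and one more serious. The minor one concerns step (ii): the matrix entries $\Delta^{j-i}w_a(z_j-x_i)$ are \emph{not} of the separable form $A(i,z_j)\,B(z_i,j)$, so Cauchy--Binet/Andr\'eief does not apply directly. The identity does hold, but the clean argument is different: use that the product of determinants is symmetric in $z$ and vanishes on the diagonal to replace $\sum_{z\in W_n}$ by $\tfrac{1}{n!}\sum_{z\in\mathbb{Z}^n}$, expand both determinants, observe that the $z$-sum factors into single-variable convolutions $\sum_z (\Delta^p w_a)(z-u)(\Delta^q w_b)(v-z)=(\Delta^{p+q}w_{a+b})(v-u)$, and then reindex the resulting double permutation sum $\sum_{\sigma,\tau}$ by $\pi=\tau\sigma$; the crucial point is that because the exponent of $\Delta$ is $j-i$, the composed exponents telescope correctly. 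This is a Lindstr\"om--Gessel--Viennot-\emph{flavored} manipulation, but invoking Cauchy--Binet as the mechanism is misleading. The more serious issue is that step (i) is the entire content of the proposition, and your sketch defers it to Warren's intertwining or an RSK identification without carrying either out. Since the max-plus recursion is where all the combinatorial difficulty lives, an outline that says ``this is delicate to set up but then routine'' has not yet produced a proof. What you have written is a correct roadmap and correctly locates the references where the real work is done, but it is not a self-contained argument for \eqref{2.2}.
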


It follows from (\ref{2.2}) and (\ref{2.3}) that 
\begin{align}
&\mathbb{P}[\mathbf{G}(m_1)=x,\mathbf{G}(m_2)=y]=\mathbb{P}[\mathbf{G}(m_2)=y\,|\,\mathbf{G}(m_1)=x]\mathbb{P}[\mathbf{G}(m_1)=x]
\notag\\
&=\det \left(\Delta^{j-i} w_{m_2-m_1}(y_j-x_i)\right)_{1\le i,j\le n}\det\left(\Delta^{j-i} w_{m_1}(x_j)\right)_{1\le i,j\le n}.
\notag
\end{align}
Thus,
\begin{align}\label{2.4}
P&:=\mathbb{P}[G(m_1,n_1)\le v_1, G(m_2,n_2)\le v_2]
\\
&=\sum\limits_{u=-\infty}^{v_1}\sum\limits_{\substack{x\in W_{n_2}\\x_{n_1}=u}}\sum\limits_{\substack{y\in W_{n_2}\\y_{n_2}\le v_2}}
\det\left(\Delta^{j-i} w_{m_1}(x_j)\right)_{1\le i,j\le n_2}\det \left(\Delta^{j-i} w_{m_2-m_1}(y_j-x_i)\right)_{1\le i,j\le n_2},
\notag
\end{align}
where $1\le m_1<m_2$ and $1\le n_1\le n_2$. This formula is the starting point of our analysis. In order to get a more useful formula
we rewrite it in terms of multiple contour integrals. We can write $w_m$ in (\ref{2.1}) as
\begin{equation}\label{2.5}
w_m(x)=\frac{(1-q)^m}{2\pi i}\int_{\gamma_r}\frac{dz}{(1-qz)^mz^{x+1}},
\end{equation}
where $\gamma_r$ is a positively oriented circle around the origin with radius $r$ and $0<r<1/q$. This gives
\begin{equation}\label{2.6}
\Delta^kw_m(x)=\frac{(1-q)^m}{2\pi i}\int_{\gamma_r}\frac{(1-z)^kdz}{(1-qz)^mz^{x+k+1}},
\end{equation}
for all $k\in\mathbb{Z}$ if $0<r<1$. Inserting (\ref{2.6}) into (\ref{2.4}) will after some rather lengthy and non-trivial 
manipulations lead to the following formula for $P$.

\begin{proposition}\label{prop2.2}
Let $P$ be defined by (\ref{2.4}), and let $0<s_1<r_1<1$, $0<r_2<s_2<1$. Assume that $m_1<m_2$ and $n_1<n_2$ and let
$\Delta n=n_2-n_1$, $\Delta m=m_2-m_1$.
Then
\begin{align}\label{2.7}
&P=\sum\limits_{u=-\infty}^{v_1}\frac{(1-q)^{m_2n_2}(-1)^{n_2(n_2-1)/2}}{(2\pi i)^{2n_2}n_1!^2(\Delta n)!^2}
\int\limits_{\gamma_{s_1}^{n_1}}d^{n_1}z\int\limits_{\gamma_{s_2}^{\Delta n}}d^{\Delta n}z
\int\limits_{\gamma_{r_1}^{n_1}}d^{n_1}w\int\limits_{\gamma_{r_2}^{\Delta n}}d^{\Delta n}w \\
&\times\det \left(z_j^{i-1}\right)_{1\le i,j\le n_2}\det \left(w_j^{i-1}\right)_{1\le i,j\le n_2}
\det \left(\frac 1{w_j-z_i}\right)_{1\le i,j\le n_1}\det \left(\frac 1{z_j-w_i}\right)_{n_1<i,j\le n_2}\notag\\
&\times\prod\limits_{j=n_1+1}^{n_2}\frac{1-z_j}{1-w_j}\left(1-\prod\limits_{j=1}^{n_1}\frac{z_j}{w_j}\right)
\prod\limits_{j=1}^{n_2}\frac{w_j^{u-v_2-\Delta n}}
{z_j^{u+n_1}(1-z_j)^{\Delta n}(1-qz_j)^{m_1}(1-w_j)^{n_1}(1-qw_j)^{\Delta m}}.
\notag
\end{align}
\end{proposition}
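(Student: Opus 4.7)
The strategy is to insert the contour integral representation (\ref{2.6}) into each entry of both determinants in (\ref{2.4}) and then evaluate the resulting sums over $x$, $y$, and $u$ as geometric series. I would assign a distinct contour variable $z_j$ to the $j$-th column of $\det(\Delta^{j-i}w_{m_1}(x_j))$ and $w_j$ to the $j$-th column of $\det(\Delta^{j-i}w_{\Delta m}(y_j-x_i))$, and use multilinearity to pull the $2n_2$ contour integrals outside both determinants.

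After this substitution, each column $j$ of the first reduced determinant contains the common factor $(1-z_j)^j/[(1-qz_j)^{m_1}z_j^{x_j+j+1}]$ that can be extracted, leaving $\det((z_j/(1-z_j))^{i-1})$ times $\prod_j z_j/(1-z_j)$. Setting $\zeta_j=z_j/(1-z_j)$ and using $\zeta_j-\zeta_k=(z_j-z_k)/[(1-z_j)(1-z_k)]$ reduces the Vandermonde $\det(\zeta_j^{i-1})$ to the Vandermonde $\det(z_j^{i-1})_{1\le i,j\le n_2}$ appearing in (\ref{2.7}), divided by $\prod_{j<k}(1-z_j)(1-z_k)$. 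The analogous manipulation on the second determinant produces $\det(w_j^{i-1})_{1\le i,j\le n_2}$; the key difference is that here the $x_i$-dependence lives in the row variable of the entries, so after the column extraction it remains inside the determinant as a factor $w_j^{x_i}$ rather than being fully pulled out. This residual $w_j^{x_i}$ is what couples the $x$-sums to the $w$-contour variables and will generate the Cauchy-type determinants below.

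The product of the two full determinants is symmetric in $(x_1,\dots,x_{n_2})$ (the two antisymmetries cancel) and alternating in $(y_1,\dots,y_{n_2})$, and it vanishes whenever two $x$'s or two $y$'s coincide. Consequently the ordered sum over $y\in W_{n_2}$ with $y_{n_2}\le v_2$ equals $(n_2!)^{-1}$ times the unordered sum over $y\in\mathbb{Z}^{n_2}$ with $y_j\le v_2$, and each $y_j$-sum is the geometric series $\sum_{y_j\le v_2}w_j^{-y_j-1}=w_j^{-v_2-1}/(1-w_j)$, valid whenever $|w_j|<1$. Similarly, splitting the ordered $x$-tuple at the pivot $x_{n_1}=u$ into blocks $x_1,\dots,x_{n_1-1}<u$ and $x_{n_1+1},\dots,x_{n_2}>u$, the symmetry-plus-vanishing reduces each block to an unordered $\mathbb{Z}$-sum up to factorial normalizations. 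In each block the $x_i$-dependence, of the form $\prod_j(w_j/z_i)^{x_i}$, is geometric and sums to a rational function; after antisymmetrization this rational function collapses (via the Cauchy determinant identity $\det(1/(w_j-z_i))=\prod_{i<j}(z_j-z_i)(w_j-w_i)/\prod_{i,j}(w_j-z_i)$) to the two Cauchy-like determinants $\det(1/(w_j-z_i))_{1\le i,j\le n_1}$ and $\det(1/(z_j-w_i))_{n_1<i,j\le n_2}$ of (\ref{2.7}). Finally the sum over $u\le v_1$ of the residual $u$-dependence $\prod_{j=1}^{n_1}(z_j/w_j)^u$ is a geometric series whose value accounts for the telescoping factor $1-\prod_{j=1}^{n_1}z_j/w_j$ together with the boundary term $\prod w_j^{u-v_2-\Delta n}/z_j^{u+n_1}$ evaluated at $u=v_1$.

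Convergence of these geometric series forces the asymmetric contour conditions $s_1<r_1$ (so that $|z_i|<|w_j|$ in the lower block, $i,j\le n_1$) and $r_2<s_2$ (so that $|w_j|<|z_i|$ in the upper block, $n_1<i,j\le n_2$). The main obstacle is the combinatorial bookkeeping in the previous step: one must identify exactly how the sums over the two $x$-blocks factor into two Cauchy determinants of the right sizes, carrying with them the mixed prefactor $\prod_{j=n_1+1}^{n_2}(1-z_j)/(1-w_j)$ and the correct split of Vandermonde factors. This is where the symmetrization identities proved in section~\ref{sect5} are essential. With those identities in hand, matching the remaining prefactors $(1-q)^{m_2n_2}$, $(-1)^{n_2(n_2-1)/2}$, and $1/[(2\pi i)^{2n_2}n_1!^2(\Delta n)!^2]$ to the integrand in (\ref{2.7}) is routine.
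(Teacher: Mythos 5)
Your high-level plan (insert contour integrals, extract Vandermondes, reduce $x$- and $y$-sums to geometric series, invoke Lemma~\ref{lem2.3}) is in the same spirit as the paper, but the central summation step is wrong and in a way that no amount of bookkeeping can repair. You assert that the product of determinants is alternating in $(y_1,\dots,y_{n_2})$ and then deduce that the ordered sum over $y\in W_{n_2}$ with $y_{n_2}\le v_2$ equals $(n_2!)^{-1}$ times the unordered sum over $\{y\in\mathbb{Z}^{n_2}:y_j\le v_2\}$. That identity holds for \emph{symmetric} summands that vanish on coincidences; for an alternating summand the unordered sum over a symmetric domain is identically zero by cancellation, while the ordered sum is manifestly not zero (it produces the probability), so your reduction is self-contradictory. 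The ordered sum over $W_{n_2}$ is instead a \emph{nested} geometric sum: for each permutation $\sigma$ coming from the Vandermonde expansion of the second determinant one has, as in (\ref{2.14}),
\begin{equation*}
\sum_{\substack{y\in W_{n_2}\\y_{n_2}\le v_2}}\prod_{j=1}^{n_2}w_{\sigma(j)}^{-y_j}
=\prod_{j=1}^{n_2}w_j^{-v_2}\,\frac{1}{(1-w_{\sigma(1)})(1-w_{\sigma(1)}w_{\sigma(2)})\cdots(1-w_{\sigma(1)}\cdots w_{\sigma(n_2)})},
\end{equation*}
and the nontrivial content you are skipping is exactly identity (\ref{2.9}) of Lemma~\ref{lem2.3}: the sum of these $\sigma$-indexed telescoping products against $\sgn(\sigma)$ and the powers $((1-w_{\sigma(j)})/w_{\sigma(j)})^j$ is what collapses to $\det(w_j^{i-1})$. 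Your proposed factorization into independent one-variable sums $\sum_{y_j\le v_2}w_j^{-y_j-1}=w_j^{-v_2-1}/(1-w_j)$ computes the unordered sum and is not the quantity that occurs.

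Two further points. First, you assign $w_j$ to \emph{column} $j$ of the second determinant, whereas the paper assigns $w_i$ to \emph{row} $i$ (see (\ref{2.11})). With your choice the residual factor $w_j^{x_i}$ indeed stays inside, but then the reduced determinant is $\det\big(w_j^{x_i}(w_j/(1-w_j))^{\,i}\big)$, not the Vandermonde $\det(w_j^{i-1})$ you claim to obtain; the paper's row assignment is what lets $w_i^{x_i}$ be pulled out as a common row factor, leaving only the $y$-dependence entangled with the permutation sum that (\ref{2.9}) resolves. Second, the factor $1-\prod_{j=1}^{n_1}z_j/w_j$ in (\ref{2.7}) does \emph{not} come from the $u$-sum. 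The $u$-sum is still present in (\ref{2.7}); it is only carried out later, in the proof of Proposition~\ref{prop2.4}, and yields the different factor $\bigl(1-\prod_{j=1}^{n_2}z_j/w_j\bigr)^{-1}$. The $1-\prod_{j\le n_1}z_j/w_j$ in (\ref{2.7}) arises from the truncated nested $x$-sums in (\ref{2.22})--(\ref{2.24}), through identity (\ref{2.10}) of Lemma~\ref{lem2.3}.
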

Here we have used the notation
\begin{equation}\label{2.8}
\int\limits_{\gamma_{s_1}^{n_1}}d^{n_1}z\int\limits_{\gamma_{s_2}^{\Delta n}}d^{\Delta n}z=
\int\limits_{\gamma_{s_1}}dz_1\dots\int\limits_{\gamma_{s_1}}dz_{n_1}\int\limits_{\gamma_{s_2}}dz_{n_1+1}\dots\int\limits_{\gamma_{s_2}}dz_{n_2}.
\end{equation}

Before we can prove (\ref{2.7}) we need some preliminary results.

\begin{lemma}\label{lem2.3}
We have the following two algebraic symmetrization identities,
\begin{align}\label{2.9}
&\sum\limits_{\sigma\in S_n}\sgn(\sigma)\prod\limits_{j=1}^n\left(\frac{1-w_{\sigma(j)}}{w_{\sigma(j)}}\right)^j
\frac 1{(1-w_{\sigma(1)})(1-w_{\sigma(1)}w_{\sigma(2)})\cdots (1-w_{\sigma(1)}\cdots
w_{\sigma(n)})}\\
&=(-1)^{\frac{n(n-1)}2}\prod\limits_{j=1}^n \frac 1{w_j^n}\det\left(w_j^{i-1}\right)_{1\le i,j\le n},\notag
\end{align}
and
\begin{align}\label{2.10}
&\sum\limits_{\sigma_1,\sigma_2\in S_n}\sgn(\sigma_1\sigma_2)\prod\limits_{j=1}^n
\left(\frac{w_{\sigma_2(j)}(1-z_{\sigma_1(j)})}{z_{\sigma_1(j)}(1-w_{\sigma_2(j)})}\right)^j
\frac 1{\left(1-\frac{z_{\sigma_1(1)}}{w_{\sigma_2(1)}}\right)\left(1-\frac{z_{\sigma_1(1)}z_{\sigma_1(2)}}{w_{\sigma_2(1)}w_{\sigma_2(2)}}\right)\cdots
\left(1-\frac{z_{\sigma_1(1)}\cdots z_{\sigma_1(n)}}{w_{\sigma_2(1)}\cdots w_{\sigma_2(n)}}\right)}\\
&=\prod\limits_{j=1}^n \frac {w_j^{n+1}(1-z_j)^n}{z_j^n(1-w_j)^n}\det\left(\frac 1{w_j-z_i}\right)_{1\le i,j\le n}.\notag
\end{align}
\end{lemma}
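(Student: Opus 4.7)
Plan.

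The strategy is to expand the prefix-product denominators appearing in both identities as geometric series, interchange the order of summation with the $S_n$-antisymmetrization, and evaluate the resulting inner sum as a determinantal identity. The key input is
\begin{equation*}
\prod_{k=1}^n\frac{1}{1-x_1x_2\cdots x_k}=\sum_{\ell_1\ge\ell_2\ge\cdots\ge\ell_n\ge 0}x_1^{\ell_1}x_2^{\ell_2}\cdots x_n^{\ell_n},
\end{equation*}
obtained by writing $1/(1-x_1\cdots x_k)=\sum_{m_k\ge 0}(x_1\cdots x_k)^{m_k}$ and substituting $\ell_i=m_i+m_{i+1}+\cdots+m_n$; this is valid as a formal power series and converges absolutely when the $|x_i|$ are small enough.

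For (\ref{2.9}), setting $x_i=w_{\sigma(i)}$ and swapping sums transforms the left-hand side into
\begin{equation*}
\sum_{\ell_1\ge\cdots\ge\ell_n\ge 0}\det\bigl((1-w_j)^{i}w_j^{\ell_i-i}\bigr)_{1\le i,j\le n}.
\end{equation*}
I would prove by induction on $n$ that this equals $(-1)^{n(n-1)/2}\prod_jw_j^{-n}\det(w_j^{i-1})_{ij}$. The inductive step peels off the unconstrained outer summation using $\sum_{\ell_1\ge\ell_2}w_j^{\ell_1-1}=w_j^{\ell_2-1}/(1-w_j)$ to cancel the $(1-w_j)^1$ factor in the top row; a Laplace expansion along the resulting simplified row $w_j^{\ell_2-1}$, after factoring $(1-w_k)/w_k$ from each remaining column (using $(1-w_k)^i w_k^{\ell_i-i}=[(1-w_k)/w_k](1-w_k)^{i-1}w_k^{\ell_i-i+1}$), reduces the inner minors to the rank-$(n-1)$ instance of the identity, with the Vandermonde factor emerging from combining the Laplace contributions.

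For (\ref{2.10}) the substitution $x_i=z_{\sigma_1(i)}/w_{\sigma_2(i)}$ has the extra feature that it \emph{decouples} the double antisymmetrization over $\sigma_1,\sigma_2$ into a product of two determinants:
\begin{equation*}
\mathrm{LHS}=\sum_{\ell_1\ge\cdots\ge\ell_n\ge 0}\det\bigl(z_j^{\ell_i-i}(1-z_j)^i\bigr)_{ij}\det\bigl(w_j^{i-\ell_i}(1-w_j)^{-i}\bigr)_{ij}.
\end{equation*}
The prototype without the $(1-z_j)^i$ and $(1-w_j)^{-i}$ factors is the Schur--Cauchy identity
$\sum_\lambda\det(z_j^{\lambda_i+n-i})\det(w_j^{-(\lambda_i+n-i)})=\prod_jw_j\cdot\det(1/(w_j-z_i))$,
which is the Cauchy--Binet expansion of the geometric series $1/(w_j-z_i)=\sum_{k\ge 0}z_i^k/w_j^{k+1}$; combined with the column-shift $z_j^n$, $w_j^{-n}$ needed to match our exponents $\ell_i-i$ to the Schur form $\lambda_i+n-i$, this produces the prefactor $\prod_jw_j^{n+1}/z_j^n$ and the Cauchy kernel in the stated identity. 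To accommodate the row-dependent factors $(1-z_j)^i$ and $(1-w_j)^{-i}$, I would expand each via the binomial theorem, use multilinearity of the determinant, and track which terms survive the antisymmetrization; the surviving contributions should combine to produce the global prefactor $\prod_j(1-z_j)^n/(1-w_j)^n$ in (\ref{2.10}). The main technical obstacle I expect is precisely this last step: because $(1-z_j)^i$ depends on both row and column, it cannot be pulled out directly, and the combinatorial identification that the binomial cross-terms reorganize into the pure $n$-th power requires careful bookkeeping. The inductive step in (\ref{2.9}) is of similar flavor but less delicate.
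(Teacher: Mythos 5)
Your approach is genuinely different from the paper's. For (\ref{2.9}) the paper factors the left-hand side so that the black-box Tracy--Widom ASEP identity (1.7) of \cite{TrWi} applies directly; for (\ref{2.10}) the paper proceeds by induction on $n$, fixing $\sigma_1(n),\sigma_2(n)$ and reducing to a partial-fractions identity (\ref{5.4}) that is verified by a double contour integral. You instead expand the telescoping denominators as nested geometric series so that the antisymmetrizations become determinantal sums over partitions, aiming to finish via a Schur--Cauchy/Cauchy--Binet identity. Your reduction of the ``prototype'' of (\ref{2.10}) (without the $(1-z_j)^i$, $(1-w_j)^{-i}$ weights) to $\prod_j w_j^{n+1}/z_j^n\,\det(1/(w_j-z_i))$ via $\sum_\lambda\det(x_j^{\lambda_i+n-i})\det(y_j^{\lambda_i+n-i})=\det(1/(1-x_iy_j))$ is correct and is a conceptually cleaner starting point than the paper's opaque partial-fractions lemma.

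However, the inductive step you sketch for (\ref{2.9}) does not close. After summing out the outermost index $\ell_1$, the top row becomes $w_j^{\ell_2-1}$, so \emph{both} rows $1$ and $2$ of the remaining determinant depend on $\ell_2$. A Laplace expansion along the new row $1$ produces, in the cofactor attached to deleting column $m$, the sum $\sum_{\ell_2\ge\cdots\ge\ell_n\ge 0}w_m^{\ell_2-1}M_{1m}(\ell_2,\dots,\ell_n)$; the extraneous weight $w_m^{\ell_2-1}$ involves the deleted variable $w_m$, which is absent from the minor, so this sum is \emph{not} the $(n-1)$-variable instance of the identity and the induction hypothesis does not apply. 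One sees the obstruction already at $n=2$: after summing out $\ell_1$, the $\ell_2$-sum of each Laplace term produces a factor $1/(1-w_1w_2)$, which only cancels after the two terms are recombined; the intermediate objects are not $S_1(w_m)$. A correct implementation would have to account for these cross-factors $(1-w_j w_m)^{-1}$, and it is not obvious how to do so by the route you describe.

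For (\ref{2.10}) you flag the remaining step yourself as ``the main technical obstacle,'' and it really is one: the row-dependent weights $(1-z_j)^i$ and $(1-w_j)^{-i}$ enter the determinants in a way that defeats both simple column-factorization and a direct application of Cauchy--Binet, and the claim that their binomial expansions ``reorganize into the pure $n$-th power'' $\prod_j(1-z_j)^n/(1-w_j)^n$ is an assertion, not an argument. Without that piece the proof is incomplete. If you want to pursue this direction, a promising reformulation is to note $(1-w_j)^iw_j^{\ell_i-i}=u_j^{\,i}w_j^{\ell_i}$ with $u_j=(1-w_j)/w_j$ (and analogously in $z$), which exposes the $u$-Vandermonde structure; you would then need an identity for $\sum_\lambda\det(u_j^{\,i-1}w_j^{\lambda_i})\det(v_j^{\,i-1}z_j^{\lambda_i})$ that specializes to the Cauchy kernel with the extra $u,v$ weights. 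Alternatively, the paper's contour-integral proof of (\ref{5.4}) is quite short and may be worth adopting directly once you reach that bottleneck.
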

The first identity is a direct consequence of one of the Tracy-Widom ASEP identities, \cite{TrWi}. The other identity is new as far as we know. The lemma will be proved in sect. \ref{sect5}.

We can now give the proof of proposition \ref{prop2.2}.
\begin{proof}({\it Proposition \ref{prop2.2}}) Inserting (\ref{2.6}) into (\ref{2.4}) gives
\begin{align}\label{2.11}
P&=\sum\limits_{u=-\infty}^{v_1}\sum\limits_{\substack{x\in W_{n_2}\\x_{n_1}=u}}\sum\limits_{\substack{y\in W_{n_2}\\y_{n_2}\le v_2}}
\det\left(\frac{(1-q)^{m_1}}{2\pi i}\int_{\gamma_{r_1}}\left(\frac{1-z_j}{z_j}\right)^{j-i}\frac{dz_j}{(1-qz_j)^{m_1}z_j^{x_j+1}}\right)_{1\le i,j\le n_2}
\\
&\det\left(\frac{(1-q)^{\Delta m}}{2\pi i}\int_{\gamma_{r_2}}
\left(\frac{1-w_i}{w_i}\right)^{j-i}\frac{dw_i}{(1-qw_i)^{\Delta m}w_i^{y_j-x_i+1}}\right)_{1\le i,j\le n_2},
\notag
\end{align}
where $0<r_1,r_2<1$. Now, the first determinant in (\ref{2.11}) can be rewritten as
\begin{align}\label{2.12}
&\det\left(\frac{(1-q)^{m_1}}{2\pi i}\int_{\gamma_{r_1}}\left(\frac{1-z_j}{z_j}\right)^{j-i}\frac{dz_j}{(1-qz_j)^{m_1}z_j^{x_j+1}}\right)_{1\le i,j\le n_2}
\\
&=\frac{(1-q)^{m_1n_2}}{(2\pi i)^{n_2}}\int_{\gamma_{r_1}^{n_2}}d^{n_2}z\prod\limits_{j=1}^{n_2}\left(\frac{1-z_j}{z_j}\right)^{j}
\prod\limits_{j=1}^{n_2}\frac 1{(1-qz_j)^{m_1}z_j^{x_j+1}}\det\left(\left(\frac{z_j}{1-z_j}\right)^{i}\right)_{1\le i,j\le n_2}
\notag\\
&=\frac{(1-q)^{m_1n_2}}{(2\pi i)^{n_2}}\int_{\gamma_{r_1}^{n_2}}d^{n_2}z\prod\limits_{j=1}^{n_2}\left(\frac{1-z_j}{z_j}\right)^{j}
\prod\limits_{j=1}^{n_2}\frac 1{(1-qz_j)^{m_1}(1-z_j)^{n_2}z_j^{x_j}}\det\left(z_j^{i-1}\right)_{1\le i,j\le n_2}.
\notag
\end{align}
Here we used the identity
\begin{equation}
\det\left(\left(\frac{z_j}{1-z_j}\right)^{i}\right)_{1\le i,j\le n_2}=\left(\prod\limits_{j=1}^{n_2}\frac{z_j}{(1-z_j)^{n_2}}\right)
\det\left(z_j^{i-1}\right)_{1\le i,j\le n_2}.
\notag
\end{equation}
This follows from the following computation,
\begin{align}
&\det\left(\left(\frac{z_j}{1-z_j}\right)^{i}\right)_{1\le i,j\le n_2}=
\det\left(\left(\frac{z_j}{1-z_j}\right)^{i-1}\right)_{1\le i,j\le n_2}\prod\limits_{j=1}^{n_2}\frac{z_j}{1-z_j}
\notag\\
&=\prod\limits_{j=1}^{n_2}\frac{z_j}{1-z_j}\prod_{1\le i<j\le n_2}\left(\frac{z_j}{1-z_j}-\frac{z_i}{1-z_i}\right)
\notag\\
&=\prod\limits_{j=1}^{n_2}\frac{z_j}{1-z_j}\prod_{1\le i<j\le n_2}\frac 1{(1-z_j)(1-z_i)}\prod_{1\le i<j\le n_2}
\left(z_j(1-z_i)-z_i(1-z_j)\right)
\notag\\
&=\prod\limits_{j=1}^{n_2}\frac{z_j}{1-z_j}\prod\limits_{j=2}^{n_2}\frac 1{(1-z_j)^{j-1}}
\prod\limits_{i=1}^{n_2-1}\frac 1{(1-z_i)^{n_2-i}}\prod_{1\le i<j\le n_2}(z_j-z_i)
\notag\\
&=\prod\limits_{j=1}^{n_2}\frac{z_j}{1-z_j}\prod\limits_{j=1}^{n_2}\frac 1{(1-z_j)^{j-1}}\frac 1{(1-z_j)^{n_2-j}}
\det\left(z_j^{i-1}\right)_{1\le i,j\le n_2}
=
\det\left(z_j^{i-1}\right)_{1\le i,j\le n_2}\prod\limits_{j=1}^{n_2}\frac{z_j}{(1-z_j)^{n_2}}.
\notag
\end{align}

Consider now the second determinant in (\ref{2.11}) together with the $y$-summation. We get
\begin{align}\label{2.13}
&\sum\limits_{\substack{y\in W_{n_2}\\y_{n_2}\le v_2}}\det\left(\frac{(1-q)^{\Delta m}}{2\pi i}\int_{\gamma_{r_2}}
\left(\frac{1-w_i}{w_i}\right)^{j-i}\frac{dw_i}{(1-qw_i)^{\Delta m}w_i^{y_j-x_i+1}}\right)_{1\le i,j\le n_2}
\\
&=\frac{(1-q)^{\Delta mn_2}}{(2\pi i)^{n_2}}\int_{\gamma_{r_2}^{n_2}}d^{n_2}w\sum\limits_{\substack{y\in W_{n_2}\\y_{n_2}\le v_2}}
\sum\limits_{\sigma\in S_{n_2}}\sgn(\sigma)\prod\limits_{j=1}^{n_2}\left(\frac{1-w_{\sigma(j)}}{w_{\sigma(j)}}\right)^{j-\sigma(j)}
\frac{1}{(1-qw_{\sigma(j)})^{\Delta m}w_{\sigma(j)}^{y_j-x_{\sigma(j)}+1}}
\notag \\
&=\frac{(1-q)^{\Delta mn_2}}{(2\pi i)^{n_2}}\int_{\gamma_{r_2}^{n_2}}d^{n_2}w\prod\limits_{j=1}^{n_2}\left(\frac{w_j}{1-w_j}\right)^{j}
\frac{w_j^{x_j-1}}{(1-qw_j)^{\Delta m}}
\notag\\
&\times\sum\limits_{\sigma\in S_{n_2}}\sgn(\sigma)\prod\limits_{j=1}^{n_2}\left(\frac{1-w_{\sigma(j)}}{w_{\sigma(j)}}\right)^{j}
\left(\sum\limits_{\substack{y\in W_{n_2}\\y_{n_2}\le v_2}}\prod\limits_{j=1}^{n_2}\frac 1{w_{\sigma(j)}^{y_j}}\right).
\notag
\end{align}
Since $0<r_2<1$ we see, by summing the geometric series, that
\begin{align}\label{2.14}
\sum\limits_{\substack{y\in W_{n_2}\\y_{n_2}\le v_2}}\prod\limits_{j=1}^{n_2}\frac 1{w_{\sigma(j)}^{y_j}}
&=\sum\limits_{y_{n_2}=-\infty}^{v_2}(w_{\sigma(1)}\dots w_{\sigma(n_2)})^{-y_{n_2}}
\sum\limits_{y_{n_2-1}=-\infty}^{y_{n_2}}(w_{\sigma(1)}\dots w_{\sigma(n_2-1)})^{y_{n_2}-y_{n_2-1}}\dots
\sum\limits_{y_{1}=-\infty}^{y_2}w_{\sigma(1)}^{y_2-y_1}\notag\\
&=\prod\limits_{j=1}^{n_2}\frac 1{w_j^{v_2}}\frac 1{(1-w_{\sigma(1)})(1-w_{\sigma(1)}w_{\sigma(2)})\cdots (1-w_{\sigma(1)}\cdots
w_{\sigma(n_2)})}.
\end{align}
Combining (\ref{2.14}) with the identity (\ref{2.9}) we get
\begin{equation}
\sum\limits_{\sigma\in S_{n_2}}\sgn(\sigma)\prod\limits_{j=1}^{n_2}\left(\frac{1-w_{\sigma(j)}}{w_{\sigma(j)}}\right)^{j}
\left(\sum\limits_{\substack{y\in W_{n_2}\\y_{n_2}\le v_2}}\prod\limits_{j=1}^{n_2}\frac 1{w_{\sigma(j)}^{y_j}}\right)
=(-1)^{\frac{n_2(n_2-1)}2}\prod\limits_{j=1}^{n_2}\frac 1{w_j^{n_2+v_2}}\det\left(w_j^{i-1}\right)_{1\le i,j\le n_2}.\notag
\end{equation}
We can now use this identity in (\ref{2.13}) and obtain
\begin{align}\label{2.15}
&\sum\limits_{\substack{y\in W_{n_2}\\y_{n_2}\le v_2}}\det\left(\frac{(1-q)^{\Delta m}}{2\pi i}\int_{\gamma_{r_2}}
\left(\frac{1-w_i}{w_i}\right)^{j-i}\frac{dw_i}{(1-qw_i)^{\Delta m}w_i^{y_j-x_i+1}}\right)_{1\le i,j\le n_2}
\\
&=\frac{(1-q)^{\Delta mn_2}(-1)^{\frac{n_2(n_2-1)}2}}{(2\pi i)^{n_2}}\int_{\gamma_{r_2}^{n_2}}d^{n_2}w
\prod\limits_{j=1}^{n_2}\left(\frac{w_j}{1-w_j}\right)^{j}
\frac{w_j^{x_j-v_2-n_2-1}}{(1-qw_j)^{\Delta m}}\det\left(w_j^{i-1}\right)_{1\le i,j\le n_2}.
\notag
\end{align}
Next, insert (\ref{2.12}) and (\ref{2.15}) into (\ref{2.11}) to get
\begin{align}\label{2.16}
P&=\sum\limits_{u=-\infty}^{v_1}\sum\limits_{\substack{x\in W_{n_2}\\x_{n_1}=u}}
\frac{(1-q)^{m_2n_2}(-1)^{\frac{n_2(n_2-1)}2}}{(2\pi i)^{2n_2}}
\int_{\gamma_{r_1}^{n_2}}d^{n_2}z\int_{\gamma_{r_2}^{n_2}}d^{n_2}w
\det\left(z_j^{i-1}\right)_{1\le i,j\le n_2}\det\left(w_j^{i-1}\right)_{1\le i,j\le n_2}
\\
&\times \prod\limits_{j=1}^{n_2}\left(\frac{1-z_j}{z_j}\right)^{j}\left(\frac{w_j}{1-w_j}\right)^{j}
\frac 1{(1-qz_j)^{m_1}(1-z_j)^{n_2}z_j^{x_j}}\frac{w_j^{x_j-v_2-n_2-1}}{(1-qw_j)^{\Delta m}}.
\notag
\end{align}
In this expression we symmetrize in $\{z_j\}$ and  $\{w_j\}$. We find
\begin{align}\label{2.17}
P&=\sum\limits_{u=-\infty}^{v_1}\sum\limits_{\substack{x\in W_{n_2}\\x_{n_1}=u}}
\frac{(1-q)^{m_2n_2}(-1)^{\frac{n_2(n_2-1)}2}}{(2\pi i)^{2n_2}(n_2!)^2}
\int_{\gamma_{r_1}^{n_2}}d^{n_2}z\int_{\gamma_{r_2}^{n_2}}d^{n_2}w
\det\left(z_j^{i-1}\right)_{1\le i,j\le n_2}\det\left(w_j^{i-1}\right)_{1\le i,j\le n_2}
\\
&\times \prod\limits_{j=1}^{n_2}
\frac 1{(1-qz_j)^{m_1}(1-z_j)^{n_2}w_j^{v_2+n_2+1}(1-qw_j)^{\Delta m}}
\notag\\
&\times
\left(\sum\limits_{\sigma_1,\sigma_2\in S_n}\sgn(\sigma_1\sigma_2)\prod\limits_{j=1}^{n_2}\left(\frac{1-z_{\sigma_1(j)}}{z_{\sigma_1(j)}}\right)^j
\left(\frac{w_{\sigma_2(j)}}{1-w_{\sigma_2(j)}}\right)^j\left(\frac{w_{\sigma_2(j)}}{z_{\sigma_1(j)}}\right)^{x_j}\right).
\notag
\end{align}

Let $(S_j^-,S_j^+)$, $j=1,2$, be two partitions of $[1,n_2]=\{1,\dots n_2\}$, such that $|S_j^-|=n_1$ and  $|S_j^+|=\Delta n$.
For $\sigma_1,\sigma_2\in S_{n_2}$, we say that $\sigma_j\in S_{n_2}(S_j^-,S_j^+)$ if $\sigma_j([1,n_1])=S_j^-$ and
consequently $\sigma_j([n_1+1,n_2])=S_j^+$, $j=1,2$. Write
$$
\sigma_j^-=\sigma_j\left|_{[1,n_1]}\right.,\,\,\sigma_j^+=\sigma_j\left|_{[n_1+1,n_2]}\right.,\,\,j=1,2,
$$
for the restricted permutations. Given $S_j^-\,(S_j^+)$ we can identify $\sigma_j^-\,(\sigma_j^+)$ with a permutation
in $S_{n_1}\,(S_{\Delta n})$.
We do this by taking the order preserving bijection from $S_j^-\,(S_j^+)$ to $S_{n_1}\,(S_{\Delta n})$.
The signs are related by
\begin{equation}\label{2.18}
\sgn(\sigma_j)=(-1)^{\kappa(S_j^-,S_j^+)}\sgn(\sigma_j^-)\sgn(\sigma_j^+),
\end{equation}
where
$$
\kappa(U,V)=|\{(i,j)\,;\,i\in U,j\in V,i>j\}|.
$$
We will now choose our radii in the circles in the contour integrals depending on $S_j^\pm$. Recall that
$$
0<s_1<r_1<1,\,\,0<r_2<s_2<1,
$$
which we assumed in the proposition. Given $S_j^-$, $j=1,2$, we take
\begin{align}\label{2.19}
&|z_k|=s_1,\,k\in S_1^-,\,\,|w_k|=r_1,\,k\in S_2^-,
\\
&|z_k|=s_2,\,k\in S_1^+,\,\,|w_k|=r_2,\,k\in S_2^+.
\notag
\end{align}
We can write
\begin{align}\label{2.20}
&\prod\limits_{j=1}^{n_2}\left(\frac{1-z_{\sigma_1(j)}}{z_{\sigma_1(j)}}\right)^j
\left(\frac{w_{\sigma_2(j)}}{1-w_{\sigma_2(j)}}\right)^j\left(\frac{w_{\sigma_2(j)}}{z_{\sigma_1(j)}}\right)^{x_j}
\\
&=\prod\limits_{j=1}^{n_1}\left(\frac{1-z_{\sigma_1^-(j)}}{z_{\sigma_1^-(j)}}\right)^j
\left(\frac{w_{\sigma_2^-(j)}}{1-w_{\sigma_2^-(j)}}\right)^j\left(\frac{w_{\sigma_2^-(j)}}{z_{\sigma_1^-(j)}}\right)^{x_j}
\prod\limits_{j=n_1+1}^{n_2}\left(\frac{1-z_{\sigma_1^+(j)}}{z_{\sigma_1^+(j)}}\right)^j
\left(\frac{w_{\sigma_2^+(j)}}{1-w_{\sigma_2^+(j)}}\right)^j\left(\frac{w_{\sigma_2^+(j)}}{z_{\sigma_1^+(j)}}\right)^{x_j}
\notag
\end{align}
From (\ref{2.17}), (\ref{2.18}) and (\ref{2.20}) we find
\begin{align}\label{2.21}
P&=\sum\limits_{S_1^-,S_2^-}(-1)^{\kappa(S_1^-,S_1^+)+\kappa(S_2^-,S_2^+)}\sum\limits_{u=-\infty}^{v_1}
\frac{(1-q)^{m_2n_2}(-1)^{\frac{n_2(n_2-1)}2}}{(2\pi i)^{2n_2}(n_2!)^2}
\\
&\times\int_{\gamma_{s_1}^{n_1}}\prod\limits_{j\in S_1^-}dz_j\int_{\gamma_{s_2}^{\Delta n}}\prod\limits_{j\in S_1^+}dz_j
\int_{\gamma_{r_1}^{n_1}}\prod\limits_{j\in S_2^-}dw_j\int_{\gamma_{r_2}^{\Delta n}}\prod\limits_{j\in S_2^+}dw_j
\notag\\
&\times\det\left(z_j^{i-1}\right)_{1\le i,j\le n_2}\det\left(w_j^{i-1}\right)_{1\le i,j\le n_2}\prod\limits_{j=1}^{n_2}
\frac 1{(1-qz_j)^{m_1}(1-z_j)^{n_2}w_j^{v_2+n_2+1}(1-qw_j)^{\Delta m}}
\notag\\
&\times\sum\limits_{\substack{x\in W_{n_2}\\x_{n_1}=u}}\left(\sum_{\sigma_1^-,\sigma_2^-}\sgn(\sigma_1^-)\sgn(\sigma_2^-)
\prod\limits_{j=1}^{n_1}\left(\frac{1-z_{\sigma_1^-(j)}}{z_{\sigma_1^-(j)}}\right)^j
\left(\frac{w_{\sigma_2^-(j)}}{1-w_{\sigma_2^-(j)}}\right)^j\left(\frac{w_{\sigma_2^-(j)}}{z_{\sigma_1^-(j)}}\right)^{x_j}\right)
\notag\\
&\times\left(\sum_{\sigma_1^+,\sigma_2^+}\sgn(\sigma_1^+)\sgn(\sigma_2^+)\prod\limits_{j=n_1+1}^{n_2}\left(\frac{1-z_{\sigma_1^+(j)}}{z_{\sigma_1^+(j)}}\right)^j
\left(\frac{w_{\sigma_2^+(j)}}{1-w_{\sigma_2^+(j)}}\right)^j\left(\frac{w_{\sigma_2^+(j)}}{z_{\sigma_1^+(j)}}\right)^{x_j}\right).
\notag
\end{align}

The next step is to do the $x$-summations,
\begin{align}\label{2.22}
&\sum\limits_{x_1\le\dots\le x_{n_1-1}\le x_{n_1}=u}\prod\limits_{j=1}^{n_1}\left(\frac{w_{\sigma_2^-(j)}}{z_{\sigma_1^-(j)}}\right)^{x_j}\\
&=\prod\limits_{j=1}^{n_1}\left(\frac{w_{\sigma_2^-(j)}}{z_{\sigma_1^-(j)}}\right)^{u}
\sum\limits_{y_1\le\dots\le y_{n_1-1}\le u}\left(\frac{z_{\sigma_1^-(j)}}{w_{\sigma_2^-(j)}}\right)^{-y_j}\notag\\
&=\prod\limits_{j=1}^{n_1}\left(\frac{w_{\sigma_2^-(j)}}{z_{\sigma_1^-(j)}}\right)^u
\frac 1{\left(1-\frac{z_{\sigma_1^-(1)}}{w_{\sigma_2^-(1)}}\right)\left(1-\frac{z_{\sigma_1^-(1)}z_{\sigma_1^-(2)}}{w_{\sigma_2^-(1)}w_{\sigma_2^-(2)}}\right)\cdots
\left(1-\frac{z_{\sigma_1^-(1)}\cdots z_{\sigma_1^-(n_1-1)}}{w_{\sigma_2^-(1)}\cdots w_{\sigma_2^-(n_1-1)}}\right)},
\notag
\end{align}
by the same computation as (\ref{2.14}). Here we used the fact that $|z_{\sigma_1^-(j)}/w_{\sigma_2^-(j)}|=s_1/r_1<1$. Similarly,
\begin{align}\label{2.23}
&\sum\limits_{u\le x_{n_1+1}\le\dots\le x_{n_2}} \prod\limits_{j=n_1+1}^{n_2}\left(\frac{w_{\sigma_2^+(j)}}{z_{\sigma_1^+(j)}}\right)^{x_j}\\
&=\prod\limits_{j=n_1+1}^{n_2}\left(\frac{w_{\sigma_2^+(j)}}{z_{\sigma_1^+(j)}}\right)^u
\frac 1{\left(1-\frac{w_{\sigma_2^+(n_2)}}{z_{\sigma_1^+(n_2)}}\right)\left(1-\frac{w_{\sigma_2^+(n_2)}w_{\sigma_2^+(n_2-1)}}{z_{\sigma_1^+(n_2)}z_{\sigma_1^+(n_2-1)}}\right)\cdots
\left(1-\frac{w_{\sigma_2^+(n_2)}\cdots w_{\sigma_2^+(n_1+1)}}{z_{\sigma_1^+(n_2)}\cdots z_{\sigma_1^+(n_1+1)}}\right)},
\notag
\end{align}
since $|w_{\sigma_2^+(j)}/z_{\sigma_1^+(j)}|=r_2/s_2<1$.

We can now apply (\ref{2.10}) in lemma \ref{lem2.3} to see that
\begin{align}\label{2.24}
&\sum_{\sigma_1^-,\sigma_2^-}\sgn(\sigma_1^-)\sgn(\sigma_2^-)
\prod\limits_{j=1}^{n_1}\left(\frac{w_{\sigma_2^-(j)}(1-z_{\sigma_1^-(j)})}{z_{\sigma_1^-(j)}(1-w_{\sigma_2^-(j)})}\right)^j\\
&\times
\frac 1{\left(1-\frac{z_{\sigma_1^-(1)}}{w_{\sigma_2^-(1)}}\right)\left(1-\frac{z_{\sigma_1^-(1)}z_{\sigma_1^-(2)}}{w_{\sigma_2^-(1)}w_{\sigma_2^-(2)}}\right)\cdots
\left(1-\frac{z_{\sigma_1^-(1)}\cdots z_{\sigma_1^-(n_1-1)}}{w_{\sigma_2^-(1)}\cdots w_{\sigma_2^-(n_1-1)}}\right)}
\notag\\
&=\left(1-\prod\limits_{j\in S_1^-}z_j\prod\limits_{j\in S_2^-}\frac 1{w_j}\right)
\prod\limits_{j\in S_1^-}\frac{(1-z_j)^{n_1}}{z_j^{n_1}}\prod\limits_{j\in S_2^-}\frac{w_j^{n_1+1}}{(1-w_j)^{n_1}}
\det\left(\frac 1{w_j-z_i}\right)_{\substack{i\in S_1^-\\j\in S_2^-}}.
\notag
\end{align}
From (\ref{2.23}) we see that we also want to compute
\begin{align}\label{2.25}
&\sum_{\sigma_1^+,\sigma_2^+}\sgn(\sigma_1^+)\sgn(\sigma_2^+)
\prod\limits_{j=n_1+1}^{n_2}\left(\frac{w_{\sigma_2^+(j)}(1-z_{\sigma_1^+(j)})}{z_{\sigma_1^+(j)}(1-w_{\sigma_2^+(j)})}\right)^j\\
&\times
\frac 1{\left(1-\frac{w_{\sigma_2^+(n_2)}}{z_{\sigma_1^+(n_2)}}\right)\left(1-\frac{w_{\sigma_2^+(n_2)}w_{\sigma_2^+(n_2-1)}}{z_{\sigma_1^+(n_2)}z_{\sigma_1^+(n_2-1)}}\right)\cdots
\left(1-\frac{w_{\sigma_2^+(n_2)}\cdots w_{\sigma_2^+(n_1+1)}}{z_{\sigma_1^+(n_2)}\cdots z_{\sigma_1^+(n_1+1)}}\right)}.
\notag
\end{align}
Let $\tau(j)=n_2+1-j$, $1\le j\le\Delta n$ and $\tilde{\sigma}^+_i=\sigma^+_i\circ\tau$, $i=1,2$. Then,
$\tilde{\sigma}^+_i\,:\,[1,\Delta n]\to S_i^+$ and
\begin{equation*}
\sgn(\tilde{\sigma}^+_1)\sgn(\tilde{\sigma}^+_2)=\sgn(\sigma^+_1)\sgn(\sigma^+_2).
\end{equation*}
Also,
\begin{align*}
&\prod\limits_{j=n_1+1}^{n_2}\left(\frac{w_{\sigma_2^+(j)}(1-z_{\sigma_1^+(j)})}{z_{\sigma_1^+(j)}(1-w_{\sigma_2^+(j)})}\right)^j
=\prod\limits_{j=1}^{\Delta n}\left(\frac{w_{\tilde{\sigma}_2^+(j)}(1-z_{\tilde{\sigma}_1^+(j)})}{z_{\tilde{\sigma}_1^+(j)}(1-w_{\tilde{\sigma}_2^+(j)})}\right)^{n_2+1-j}\\
&=\prod\limits_{j\in S_1^+}\left(\frac{1-z_j}{z_j}\right)^{n_2+1}\prod\limits_{j\in S_2^+}\left(\frac{w_j}{1-w_j}\right)^{n_2+1}
\prod\limits_{j=1}^{\Delta n}\left(\frac{z_{\tilde{\sigma}_1^+(j)}(1-w_{\tilde{\sigma}_2^+(j)})}{w_{\tilde{\sigma}_2^+(j)}(1-z_{\tilde{\sigma}_1^+(j)})}\right)^j.
\end{align*}
Thus (\ref{2.25}) can be written
\begin{align*}
&\prod\limits_{j\in S_1^+}\left(\frac{1-z_j}{z_j}\right)^{n_2+1}\prod\limits_{j\in S_2^+}\left(\frac{w_j}{1-w_j}\right)^{n_2+1}
\sum_{\tilde{\sigma}_1^+,\tilde{\sigma}_2^+}\sgn(\tilde{\sigma}_1^+)\sgn(\tilde{\sigma}_2^+)
\prod\limits_{j=1}^{\Delta n}\left(\frac{z_{\tilde{\sigma}_1^+(j)}(1-w_{\tilde{\sigma}_2^+(j)})}{w_{\tilde{\sigma}_2^+(j)}(1-z_{\tilde{\sigma}_1^+(j)})}\right)^j\\
&\times
\frac 1{\left(1-\frac{w_{\tilde{\sigma}_2^+(n_2)}}{z_{\tilde{\sigma}_1^+(n_2)}}\right)
\left(1-\frac{w_{\tilde{\sigma}_2^+(n_2)}w_{\tilde{\sigma}_2^+(n_2-1)}}{z_{\tilde{\sigma}_1^+(n_2)}z_{\tilde{\sigma}_1^+(n_2-1)}}\right)\cdots
\left(1-\frac{w_{\tilde{\sigma}_2^+(n_2)}\cdots w_{\tilde{\sigma}_2^+(n_1+1)}}{z_{\tilde{\sigma}_1^+(n_2)}\cdots z_{\tilde{\sigma}_1^+(n_1+1)}}\right)}
\end{align*}
and by (\ref{2.10}) in lemma \ref{lem2.3} this equals
\begin{align*}
&\prod\limits_{j\in S_1^+}\left(\frac{1-z_j}{z_j}\right)^{n_2+1}\prod\limits_{j\in S_2^+}\left(\frac{w_j}{1-w_j}\right)^{n_2+1}
\prod\limits_{j\in S_1^+}\frac{z_j^{\Delta n+1}}{(1-z_j)^{\Delta n}}\prod\limits_{j\in S_2^+}\left(\frac{1-w_j}{w_j}\right)^{\Delta n}
\det\left(\frac 1{z_i-w_j}\right)_{\substack{i\in S_1^+\\j\in S_2^+}}\\
&=\prod\limits_{j\in S_1^+}\frac{(1-z_j)^{n_1+1}}{z_j^{n_1}}\prod\limits_{j\in S_2^+}\frac{w_j^{n_1+1}}{(1-w_j)^{n_1+1}}
\det\left(\frac 1{z_i-w_j}\right)_{\substack{i\in S_1^+\\j\in S_2^+}}.
\end{align*}
Using this, (\ref{2.22}), (\ref{2.23}) and (\ref{2.24}) in  (\ref{2.21}) we find
\begin{align}\label{2.26}
P&=\sum\limits_{S_1^-,S_2^-}(-1)^{\kappa(S_1^-,S_1^+)+\kappa(S_2^-,S_2^+)}\sum\limits_{u=-\infty}^{v_1}
\frac{(1-q)^{m_2n_2}(-1)^{\frac{n_2(n_2-1)}2}}{(2\pi i)^{2n_2}(n_2!)^2}
\\
&\times\int_{\gamma_{s_1}^{n_1}}\prod\limits_{j\in S_1^-}dz_j\int_{\gamma_{s_2}^{\Delta n}}\prod\limits_{j\in S_1^+}dz_j
\int_{\gamma_{r_1}^{n_1}}\prod\limits_{j\in S_2^-}dw_j\int_{\gamma_{r_2}^{\Delta n}}\prod\limits_{j\in S_2^+}dw_j
\notag\\
&\times\det\left(z_j^{i-1}\right)_{1\le i,j\le n_2}\det\left(w_j^{i-1}\right)_{1\le i,j\le n_2}
\det\left(\frac 1{w_j-z_i}\right)_{\substack{i\in S_1^-\\j\in S_2^-}}\det\left(\frac 1{z_i-w_j}\right)_{\substack{i\in S_1^+\\j\in S_2^+}}
\notag\\
&\times \left(1-\prod\limits_{j\in S_1^-}z_j\prod\limits_{j\in S_2^-}\frac 1{w_j}\right)
\prod\limits_{j\in S_1^-}\frac{(1-z_j)^{n_1}}{z_j^{n_1}}\prod\limits_{j\in S_2^-}\frac{w_j^{n_1+1}}{(1-w_j)^{n_1}}
\prod\limits_{j\in S_1^+}\frac{(1-z_j)^{n_1+1}}{z_j^{n_1}}\prod\limits_{j\in S_2^+}\frac{w_j^{n_1+1}}{(1-w_j)^{n_1+1}}
\notag\\
&\times \prod\limits_{j=1}^{n_2}\frac{w_j^u}{z_j^u}\prod\limits_{j=1}^{n_2}
\frac 1{(1-qz_j)^{m_1}(1-z_j)^{n_2}w_j^{v_2+n_2+1}(1-qw_j)^{\Delta m}}.
\notag
\end{align}

To see that the summation over $S_1^-,S_2^-$ in (\ref{2.26}) is actually trivial, in the sense that the summand does not depend on the choice
of $S_1^-,S_2^-$, we use the following observation.
Write
$$
\Delta_S(z)=\prod\limits_{j<k,j,k\in S}(z_k-z_j)
$$
for $S\subseteq [1,n_2]$. Then, by the standard formula for a Vandermonde determinant,
\begin{equation*}
\det\left(z_j^{i-1}\right)_{1\le i,j\le n_2}=\prod\limits_{1\le j<k\le n_2}(z_k-z_j)
=\Delta_{S_1^-}(z)\Delta_{S_1^+}(z)\prod\limits_{j\in S_1^-, k\in S_1^+}(z_k-z_j)
(-1)^{\kappa(S_1^-,S_1^+)}.
\end{equation*}
If we insert this into (\ref{2.26}) for both $z$ and $w$ we see that we can relabel the indices
\begin{align}
&(z_j)_{j\in S_1^-}\to (z_j)_{j=1}^{n_1}\,\,,\,\,(z_j)_{j\in S_1^+}\to (z_j)_{j=n_1+1}^{n_2}
\notag\\
&(w_j)_{j\in S_2^-}\to (z_j)_{j=1}^{n_1}\,\,,\,\,(w_j)_{j\in S_2^+}\to (w_j)_{j=n_1+1}^{n_2}
\notag
\end{align}
and then the sums over $S_1^-,S_2^-$ become trivial. Note that
$$
\sum\limits_{S_i^-}1=\binom{n_2}{n_1},
$$
$i=1,2$. Formula (\ref{2.26}) then reduces to (\ref{2.7}) and we have proved the proposition.
\end{proof}

From proposition \ref{prop2.2} we can, by a limiting procedure, obtain a corresponding formula in the Brownian directed
polymer model.

Let $\Gamma_d$ denote the vertical straight line contour through $d\in\mathbb{R}$ oriented upwards, $\Gamma_d:t\to d+it$, $t\in\mathbb{R}$.
Define
\begin{align}\label{2.27}
&Q(h)=\frac{(-1)^{\frac{n_2(n_2-1)}2}}{(2\pi i)^{2n_2}n_1!\Delta n!}\int_{\Gamma_{d_1}^{n_1}}d^{n_1}z\int_{\Gamma_{d_2}^{\Delta n}}d^{\Delta n}z
\int_{\Gamma_{d_3}^{n_1}}d^{n_1}w\int_{\Gamma_{d_4}^{\Delta n}}d^{\Delta n}w\det\left(z_j^{i-1}\right)_{1\le i,j\le n_2}\det\left(w_j^{i-1}\right)_{1\le i,j\le n_2}
\\
&\times\prod\limits_{j=1}^{n_1}\frac{e^{\frac 12\mu_1z_j^2-\xi_1z_j+\frac 12\Delta\mu w_j^2-\Delta\xi w_j}}{z_j^{\Delta n}w_j^{n_1}}\left(\frac 1{z_j-w_j}-h\right)
\prod\limits_{j=n_1+1}^{n_2}\frac{e^{\frac 12\mu_1z_j^2-\xi_1z_j+\frac 12\Delta\mu w_j^2-\Delta\xi w_j}}{z_j^{\Delta n-1}w_j^{n_1+1}(w_j-z_j)}
\notag
\end{align}
where
\begin{equation}\label{2.28}
d_1<d_3<0\,\,,\,\,d_4<d_2<0.
\end{equation}
Here, we have written
\begin{equation}\label{2.29}
\Delta\xi=\xi_2-\xi_1\,\,,\,\,\Delta\mu=\mu_2-\mu_1.
\end{equation}
We can now state a proposition concerning the joint distribution function that we are interested in in theorem \ref{thm1.1}.

\begin{proposition}\label{prop2.4}
Let $H(\mu,n)$ be defined by (\ref{1.1}). Then
\begin{equation}\label{2.30}
\frac{\partial}{\partial\xi_1}\mathbb{P}\left[H(\mu_1,n_1)\le\xi_1,H(\mu_2,n_2)\le\xi_2\right]=\left.\frac{\partial}{\partial h}\right|_{h=0}Q(h).
\end{equation}
\end{proposition}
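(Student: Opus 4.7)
The plan is to pass to the $T\to\infty$ limit in Proposition~\ref{prop2.2} using the scaling
\[
m_i=[\mu_i T],\qquad v_i=\frac{q}{1-q}[\mu_i T]+\frac{\sqrt q}{1-q}\sqrt T\,\xi_i,
\]
so that the convergence (\ref{1.20}) implies convergence of the joint distribution function on the left of (\ref{2.30}). Since $v_1$ enters (\ref{2.7}) only through the upper limit of the $u$-sum, the discrete difference $P(v_1)-P(v_1-1)$ extracts precisely the $u=v_1$ term of that sum; multiplying by the Jacobian $dv_1/d\xi_1=\sqrt{qT}/(1-q)$ then converts this discrete difference into $\partial/\partial\xi_1$ in the limit. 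The task reduces to showing that the resulting scaled integrand converges to the integrand of $\partial_h Q(h)|_{h=0}$.

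The critical points for the $z$- and $w$-integrands are located at $z_c=w_c=1$, as one checks by differentiating $-m\log(1-qz)-v\log z$ with $v=qm/(1-q)$. I therefore change variables
\[
z_j=1+\frac{1-q}{\sqrt{qT}}\,\tilde z_j,\qquad w_j=1+\frac{1-q}{\sqrt{qT}}\,\tilde w_j,
\]
under which the circles $\gamma_{s_1},\gamma_{r_1},\gamma_{r_2},\gamma_{s_2}$ (satisfying $s_1<r_1<1$, $r_2<s_2<1$) locally deform to the vertical lines $\Gamma_{d_1},\Gamma_{d_3},\Gamma_{d_4},\Gamma_{d_2}$ with the ordering $d_1<d_3<0$, $d_4<d_2<0$ required in (\ref{2.28}). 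Setting $\epsilon=(1-q)/\sqrt{qT}$, a Taylor expansion of $(1-qz_j)^{m_i}$ and of the $v_i$-dependent power $z_j^{-(v_i+\text{const})}$ to second order about $1$ produces the Gaussian exponentials $e^{\frac12\mu_1 z^2-\xi_1 z}$ and $e^{\frac12\Delta\mu\, w^2-\Delta\xi\, w}$ appearing in (\ref{2.27}). The remaining factors $(1-z_j)^{-\Delta n},(1-w_j)^{-n_1},\ldots$ contribute the rational powers $\tilde z_j^{-\Delta n},\tilde w_j^{-n_1},\ldots$ of (\ref{2.27}), together with explicit powers of $\epsilon$ that combine with the Jacobian of the change of variables and with the $\epsilon^{-n_1},\epsilon^{-\Delta n}$ emerging from the two Cauchy determinants to reproduce the precise normalization of $Q$.

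The key subtle step is the identification of $\bigl(1-\prod_{j=1}^{n_1}z_j/w_j\bigr)$ together with the Cauchy determinants as the $h$-derivative structure in $Q$. Expanding,
\[
1-\prod_{j=1}^{n_1}\frac{z_j}{w_j}=-\epsilon\sum_{j=1}^{n_1}(\tilde z_j-\tilde w_j)+O(\epsilon^2),
\]
and combined with the factor $1/\epsilon$ from $\partial/\partial\xi_1$ this produces the finite limit $-\sum_{j=1}^{n_1}(\tilde z_j-\tilde w_j)$. By the antisymmetry of $\det(z_j^{i-1}),\det(w_j^{i-1})$ and the remaining symmetries of the integrand, each Cauchy determinant $\det(1/(w_j-z_i))_{1\le i,j\le n_1}$ and $\det(1/(z_j-w_i))_{n_1<i,j\le n_2}$ collapses, under integration, to $n_1!$ (resp.\ $\Delta n!$) copies of the diagonal product $\prod_{j=1}^{n_1}1/(w_j-z_j)$ (resp.\ $\prod_{j=n_1+1}^{n_2}1/(z_j-w_j)$); the combinatorial factor $n_1!\,\Delta n!$ reconciles the prefactor $1/(n_1!^2\Delta n!^2)$ of (\ref{2.7}) with $1/(n_1!\,\Delta n!)$ of (\ref{2.27}). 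Finally,
\[
\Bigl(-\sum_{j=1}^{n_1}(\tilde z_j-\tilde w_j)\Bigr)\prod_{j=1}^{n_1}\frac{1}{\tilde z_j-\tilde w_j}=-\sum_{k=1}^{n_1}\prod_{j\ne k}\frac{1}{\tilde z_j-\tilde w_j}=\left.\frac{\partial}{\partial h}\right|_{h=0}\prod_{j=1}^{n_1}\Bigl(\frac{1}{\tilde z_j-\tilde w_j}-h\Bigr),
\]
producing exactly $\partial_h Q(h)|_{h=0}$.

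The main obstacle is the rigorous interchange of the $T\to\infty$ limit with the multiple contour integration and the $\partial/\partial\xi_1$ derivative, which requires uniform, integrable tail bounds on the rescaled integrand along each $\Gamma_{d_i}$; these follow from a standard saddle-point estimate controlling the real part of the exponent away from the critical point. A secondary difficulty is the careful bookkeeping of the sign prefactors $(-1)^{n_2(n_2-1)/2}$ and the orientation conventions in the two Cauchy blocks (with $w-z$ in the denominator of one and $z-w$ of the other), which must combine consistently to give the precise form of $Q(h)$ in (\ref{2.27}).
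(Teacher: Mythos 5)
Your proposal is correct and follows essentially the same route as the paper: pass to the $T\to\infty$ limit in Proposition~\ref{prop2.2} via the change of variables centered at the double critical point $z=w=1$, collapse each Cauchy determinant to $n_1!$ resp.\ $\Delta n!$ copies of its diagonal product by symmetrization (reconciling the prefactors), and identify the $\xi_1$-derivative with the $h$-derivative via $\bigl(\sum_{k=1}^{n_1}(w_k-z_k)\bigr)\prod_{j=1}^{n_1}(z_j-w_j)^{-1}=\partial_h\big|_{h=0}\prod_{j=1}^{n_1}\bigl((z_j-w_j)^{-1}-h\bigr)$. The one small variation is that you extract the $u=v_1$ term of (\ref{2.7}) by a discrete difference in $v_1$ before scaling, whereas the paper first performs the geometric $u$-sum (under the extra radius condition (\ref{2.32})), takes the limit to obtain (\ref{2.39}), and only then applies $\partial_{\xi_1}$ so that the factor $\sum_{j=1}^{n_2}(w_j-z_j)$ brought down from the exponentials cancels the denominator $\sum_{j=1}^{n_2}(w_j-z_j)$; both lead to the same limit, and your variant avoids (\ref{2.32}) at the price of having to justify that the scaled discrete difference converges to $\partial_{\xi_1}$ of the limit.
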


\begin{proof} Just as in (\ref{1.20}) we have the formula
\begin{align}\label{2.31}
&\mathbb{P}\left[H(\mu_1,n_1)\le\xi_1,H(\mu_2,n_2)\le\xi_2\right]
\\
&=\lim\limits_{T\to\infty}\mathbb{P}\left[G([\mu_1T],n_1)\le \frac{q}{1-q}[\mu_1T]+\xi_1\frac{\sqrt{q}}{1-q}\sqrt{T},
G([\mu_2T],n_2)\le \frac{q}{1-q}[\mu_2T]+\xi_2\frac{\sqrt{q}}{1-q}\sqrt{T}\right].
\notag
\end{align}
In the formula (\ref{2.7}) we assume that we have chosen $r_1,r_2, s_1, s_2$ so that
\begin{equation}\label{2.32}
(r_1/s_1)^{n_1}>(s_2/r_2)^{\Delta n},
\end{equation}
which can always be done for fixed $n_1, n_2$. We can then do the $u$-summation in (\ref{2.7}) to get
\begin{equation}\label{2.33}
\sum\limits_{u=-\infty}^{v_1}\left(\prod\limits_{j=1}^{n_2}\frac{w_j}{z_j}\right)^u=\frac{\prod\limits_{j=1}^{n_2}w_j^{v_1}/z_j^{v_1}}
{1-\prod\limits_{j=1}^{n_2}z_j/w_j}.
\end{equation}
Insert this into (\ref{2.7}), expand the Cauchy determinants and symmetrize. This gives
\begin{align}\label{2.34}
&P=\frac{(1-q)^{m_2n_2}(-1)^{n_2(n_2-1)/2}}{(2\pi i)^{2n_2}n_1!(\Delta n)!}
\int\limits_{\gamma_{s_1}^{n_1}}d^{n_1}z\int\limits_{\gamma_{s_2}^{\Delta n}}d^{\Delta n}z
\int\limits_{\gamma_{r_1}^{n_1}}d^{n_1}w\int\limits_{\gamma_{r_2}^{\Delta n}}d^{\Delta n}w \frac{1-\prod\limits_{j=1}^{n_1}z_j/w_j}
{1-\prod\limits_{j=1}^{n_2}z_j/w_j}\\
&\times\det \left((z_j-1)^{i-1}\right)_{1\le i,j\le n_2}\det \left((w_j-1)^{i-1}\right)_{1\le i,j\le n_2}
\prod\limits_{j=1}^{n_1}\frac 1{w_j-z_j}\prod\limits_{j=n_1+1}^{n_2}\frac 1{z_j-w_j}\notag\\
&\times\prod\limits_{j=n_1+1}^{n_2}\frac{1-z_j}{1-w_j}
\prod\limits_{j=1}^{n_2}\frac{1}
{z_j^{v_1+n_1}(1-z_j)^{\Delta n}(1-qz_j)^{m_1}w_j^{v_2-v_1+\Delta n}(1-w_j)^{n_1}(1-qw_j)^{\Delta m}}.
\notag
\end{align}
Here, we have also used the fact that
$
\det \left(z_j^{i-1}\right)=\det \left((z_j-1)^{i-1}\right),
$
by the standard product formula for a Vandermonde determinant.
We now want to take the limit in (\ref{2.31}) using the formula (\ref{2.34}), i.e. we let
\begin{equation}\label{2.36}
m_i=[\mu_iT]\,\,,\,\,v_i=\frac{q}{1-q}[\mu_iT]+\xi_i\frac{\sqrt{q}}{1-q}\sqrt{T},
\end{equation}
$i=1,2$. Let $\Gamma_d^{(T)}$ be given by $t\to d+it$, $|t|\le\pi(1-q)^{-1}\sqrt{qT}$. In (\ref{2.34}) we make the change of variables
\begin{align}\label{2.37}
z_j&=e^{(1-q)z_j'/\sqrt{qT}}\,\,,\,\,z_j'\in\Gamma_{d_1}^{(T)}\,\,,\,\,1\le j\le n_1,
\\
z_j&=e^{(1-q)z_j'/\sqrt{qT}}\,\,,\,\,z_j'\in\Gamma_{d_2}^{(T)}\,\,,\,\,n_1+1\le j\le n_2,
\notag\\
w_j&=e^{(1-q)w_j'/\sqrt{qT}}\,\,,\,\,w_j'\in\Gamma_{d_3}^{(T)}\,\,,\,\,1\le j\le n_1,
\notag\\
w_j&=e^{(1-q)w_j'/\sqrt{qT}}\,\,,\,\,w_j'\in\Gamma_{d_4}^{(T)}\,\,,\,\,n_1+1\le j\le n_2,
\notag
\end{align}
where the $d_i$ satisfy (\ref{2.28}).

The condition (\ref{2.32}) becomes
\begin{equation}\label{2.38}
n_1d_3+\Delta n d_4>n_1d_1+\Delta n d_2.
\end{equation}
From (\ref{2.37}) it follows using Taylor expansions and (\ref{2.36}) that
\begin{equation*}
\lim\limits_{T\to\infty}\frac{(1-qz_j)^{m_1}z_j^{v_1+n_1}}{(1-q)^{m_1}}=e^{-\frac 12\mu_1 z_j'^2+\xi_1z_j'},
\end{equation*}
\begin{equation*}
\lim\limits_{T\to\infty}\frac{\sqrt{qT}}{1-q}(z_j-1)=z_j',
\end{equation*}
and similar limits involving $w_j$ instead, and
\begin{equation*}
\lim\limits_{T\to\infty}\frac{1-\prod_{j=1}^{n_1}z_j/w_j}{1-\prod_{j=1}^{n_2}z_j/w_j}=\frac{\sum_{j=1}^{n_1}(w_j'-z_j')}{\sum_{j=1}^{n_2}(w_j'-z_j')}.
\end{equation*}

If we insert (\ref{2.37}) into (\ref{2.34}) and use these limits we can take the limit $T\to\infty$. To make the argument complete we also need some
estimates, but we omit the details. After some computation we find, using 
(\ref{2.31}), that (we have dropped the primes on the $z$- and $w$-variables)
\begin{align}\label{2.39}
&\mathbb{P}\left[H(\mu_1,n_1)\le\xi_1,H(\mu_2,n_2)\le\xi_2\right]\notag\\
&=\frac{(-1)^{n_2(n_2-1)/2}}{(2\pi i)^{2n_2}n_1!(\Delta n)!}\int_{\Gamma_{d_1}^{n_1}}d^{n_1}z\int_{\Gamma_{d_2}^{\Delta n}}d^{\Delta n}z
\int_{\Gamma_{d_3}^{n_1}}d^{n_1}w\int_{\Gamma_{d_4}^{\Delta n}}d^{\Delta n}w\det\left(z_j^{i-1}\right)_{1\le i,j\le n_2}\det\left(w_j^{i-1}\right)_{1\le i,j\le n_2}
\\
&\times \frac{\sum\limits_{j=1}^{n_1}w_j-z_j}{\sum\limits_{j=1}^{n_2}w_j-z_j}
\prod\limits_{j=1}^{n_1}\frac{e^{\frac 12\mu_1z_j^2-\xi_1z_j+\frac 12\Delta\mu w_j^2-\Delta\xi w_j}}{z_j^{\Delta n}w_j^{n_1}(z_j-w_j)}
\prod\limits_{j=n_1+1}^{n_2}\frac{e^{\frac 12\mu_1z_j^2-\xi_1z_j+\frac 12\Delta\mu w_j^2-\Delta\xi w_j}}{z_j^{\Delta n-1}w_j^{n_1+1}(w_j-z_j)}
\notag
\end{align}
From (\ref{2.27}) and (\ref{2.39}) we see that (\ref{2.30}) follows (recall that $\Delta\xi=\xi_2-\xi_1$). Note that in $Q$ the condition
(\ref{2.38}) is no longer important. This completes the proof.
\end{proof}

\section{Expansion}\label{sect3}

In order to use the formula (\ref{2.30}) to prove theorem \ref{thm1.1} we must rewrite $Q(h)$ given in (\ref{2.27}) further
so that we can expand it in a way appropriate for the asymptotic analysis. This expansion is similar in some ways to writing a
distribution function like (\ref{1.3}) as a Fredholm expansion. 
Behind this expansion there is a certain orthogonality related
to the orthogonality of the Hermite polynomials. However, this orthogonality is seen at the level of the generating function for
the Hermite polynomials. We will prove a lemma which is the first step towards the expansion and which uses an integral formula for 
the Hermite polynomials.

\begin{lemma}\label{lem3.1}
The function  $Q(h)$ defined by (\ref{2.27}) is also given by
\begin{align}\label{3.1}
&Q(h)=\frac{1}{(2\pi i)^{4n_2}n_1!(\Delta n)!}\int_{\Gamma_{d_1}^{n_1}}d^{n_1}z\int_{\Gamma_{d_2}^{\Delta n}}d^{\Delta n}z
\int_{\Gamma_{d_3}^{n_1}}d^{n_1}w\int_{\Gamma_{d_4}^{\Delta n}}d^{\Delta n}w\int_{\gamma_{\tau_1}^{n_2}}d^{n_2}\zeta
\int_{\gamma_{\tau_2}^{n_2}}d^{n_2}\omega
\\
&\times\det\left(\frac 1{\zeta_j^i}\right)_{1\le i,j\le n_2}\det\left(\frac 1{\omega_j^{n_2+1-i}}\right)_{1\le i,j\le n_2}
\notag\\
&\times
\prod\limits_{j=1}^{n_1}\frac{z_j^{n_1}w_j^{\Delta n}e^{\frac 12\mu_1z_j^2-\xi_1z_j+\frac 12\Delta\mu w_j^2-\Delta\xi w_j}}
{e^{\frac 12\mu_1\zeta_j^2-\xi_1\zeta_j+\frac 12\Delta\mu \omega_j^2-\Delta\xi \omega_j}(\zeta_j-z_j)(\omega_j-w_j)}\left(
\frac 1{z_j-w_j}-h\right)
\notag\\
&
\times\prod\limits_{j=n_1+1}^{n_2}\frac{z_j^{n_1+1}w_j^{\Delta n-1}e^{\frac 12\mu_1z_j^2-\xi_1z_j+\frac 12\Delta\mu w_j^2-\Delta\xi w_j}}
{e^{\frac 12\mu_1\zeta_j^2-\xi_1\zeta_j+\frac 12\Delta\mu \omega_j^2-\Delta\xi \omega_j}(w_j-z_j)(\zeta_j-z_j)(\omega_j-w_j)},
\notag
\end{align}
where
\begin{equation}\label{3.1'}
d_1<d_3<-\max(\tau_1,\tau_2)<0\,\,,\,\,d_4<d_2<-\max(\tau_1,\tau_2)<0.
\end{equation}
\end{lemma}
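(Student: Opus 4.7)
The plan is to verify the identity by evaluating the $\zeta_j$- and $\omega_j$-integrals in (\ref{3.1}) by residues at the origin and showing that what remains is exactly (\ref{2.27}). The constraint (\ref{3.1'}) forces $|z_j|\ge|\re z_j|>\tau_1$ and $|w_j|>\tau_2$, so the poles at $\zeta_j=z_j$ and $\omega_j=w_j$ lie outside the circles $\gamma_{\tau_1},\gamma_{\tau_2}$; the only interior singularities of the integrand come from $1/\zeta_j^i$ and $1/\omega_j^{n_2+1-i}$ at the origin.

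First I would expand $\det(1/\zeta_j^i)=\sum_\sigma\sgn(\sigma)\prod_j\zeta_j^{-\sigma(j)}$ and invoke Fubini, reducing the $\zeta$-integration to one-dimensional integrals
\begin{equation*}
I_j(i)=\frac{1}{2\pi i}\oint_{\gamma_{\tau_1}}\frac{e^{-\frac12\mu_1\zeta^2+\xi_1\zeta}}{\zeta^i(\zeta-z_j)}\,d\zeta.
\end{equation*}
Writing $1/(\zeta-z_j)=-\sum_{k\ge 0}\zeta^k/z_j^{k+1}$ (valid because $|\zeta|=\tau_1<|z_j|$) and $e^{-\frac12\mu_1\zeta^2+\xi_1\zeta}=\sum_{\ell\ge 0}h_\ell\zeta^\ell$ with $h_0=1$, reading off the residue at $\zeta=0$ gives $I_j(i)=-\sum_{m=1}^ih_{i-m}z_j^{-m}$. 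The resulting matrix $(I_j(i))_{ij}$ factors as $C\cdot D$ with $C$ the $n_2\times n_2$ lower-triangular matrix $C_{i,m}=-h_{i-m}$ (diagonal $-h_0=-1$, so $\det C=(-1)^{n_2}$) and $D_{m,j}=z_j^{-m}$; a standard Vandermonde manipulation gives $\det D=(-1)^{n_2(n_2-1)/2}\prod_jz_j^{-n_2}\det(z_j^{i-1})$. The crucial point is that the higher Taylor coefficients $h_1,h_2,\dots$ do not enter the final result; all that matters is $h_0=1$ along the diagonal of $C$.

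The $\omega$-integration is handled identically after the substitution $i'=n_2+1-i$ that undoes the reversed exponent, which contributes an additional row-reversal sign $(-1)^{n_2(n_2-1)/2}$; the net contribution is $(-1)^{n_2}\prod_jw_j^{-n_2}\det(w_j^{i-1})$. Multiplying the two integrations, the combined sign becomes $(-1)^{2n_2+n_2(n_2-1)/2}=(-1)^{n_2(n_2-1)/2}$, matching (\ref{2.27}); the $(2\pi i)^{-2n_2}$ produced reduces the $(2\pi i)^{-4n_2}$ prefactor in (\ref{3.1}) to the $(2\pi i)^{-2n_2}$ of (\ref{2.27}); and the extra $\prod_jz_j^{-n_2}w_j^{-n_2}$ combines with the existing $z_j^{n_1},z_j^{n_1+1},w_j^{\Delta n},w_j^{\Delta n-1}$ in (\ref{3.1}) to yield precisely $z_j^{-\Delta n},z_j^{-(\Delta n-1)},w_j^{-n_1},w_j^{-(n_1+1)}$ as in (\ref{2.27}). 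The exponential factors $e^{\frac12\mu_1z_j^2-\xi_1z_j+\frac12\Delta\mu w_j^2-\Delta\xi w_j}$ and the rational factors $1/(z_j-w_j)-h$ and $1/(w_j-z_j)$ are unchanged by the $\zeta,\omega$-integration, completing the identification.

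No deformation of contours and no convergence analysis is needed — the only singularities inside $\gamma_{\tau_1},\gamma_{\tau_2}$ are at the origin, and the triangular structure of $C$ makes the determinant evaluation purely algebraic. The main obstacle is simply meticulous sign bookkeeping across the Vandermonde manipulations, the row reversal in the $\omega$ part, and the reconciliation of exponent counts between (\ref{3.1}) and (\ref{2.27}).
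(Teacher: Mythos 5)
Your proposal is correct, and it takes a genuinely different route from the paper. The paper works in the forward direction: starting from $Q(h)$ in (\ref{2.27}), it first rewrites each Vandermonde factor $\det(z_j^{i-1})$ as an $n_2$-fold integral over $(-\infty,a]^{n_2}$ via the identity $\int_0^\infty u^{i-1}e^{uz}\,du/(i-1)!=(-1)^i/z^i$, then re-expresses the resulting polynomial determinants through the Hermite generating function $\det(x_j^{i-1}/(i-1)!)=(2\pi i)^{-n_2}\int_{\gamma_{\tau}^{n_2}}\prod_j e^{x_j\zeta_j-\frac12\mu_1\zeta_j^2}\det(\zeta_j^{-i})\,d^{n_2}\zeta$, and finally integrates out the $x_j,y_j$ to produce the factors $1/((\zeta_j-z_j)(\omega_j-w_j))$ in (\ref{3.1}). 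You run the argument in reverse: evaluate the $\zeta,\omega$-contour integrals in (\ref{3.1}) directly by residues at the origin and reconstitute (\ref{2.27}). Your key observation — that the residue matrix $(I_j(i))$ factors as a unipotent (up to sign) lower-triangular matrix $C$ with $C_{i,m}=-h_{i-m}$ times the inverse-power Vandermonde $(z_j^{-m})$, so the higher Taylor coefficients of $e^{-\frac12\mu_1\zeta^2+\xi_1\zeta}$ drop out of the determinant — is precisely what makes the Hermite machinery unnecessary as a verification device. What the paper's forward derivation buys is the discovery of (\ref{3.1}) rather than its verification, and the intermediate form (\ref{3.4}) with explicit $x_j,y_j$-integrals, which motivates the emergence of the Airy-kernel structure in the limit; what your reverse check buys is economy: it avoids the intermediate representations (\ref{3.2})–(\ref{3.6}) entirely, requires no Hermite polynomials, and reduces the proof to elementary residue calculus plus a triangular factorization. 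The sign bookkeeping you track ($(-1)^{n_2}$ from $\det C$, $(-1)^{n_2(n_2-1)/2}$ from each inverse Vandermonde, one extra row-reversal for the $\omega$-block) reproduces the overall $(-1)^{n_2(n_2-1)/2}$ and the exponent counts $z_j^{-\Delta n},w_j^{-n_1}$ (resp.\ $z_j^{-(\Delta n-1)},w_j^{-(n_1+1)}$) of (\ref{2.27}) exactly, so the argument is complete.
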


\begin{proof}
Reversing the order of the rows in a determinant of size $n$ gives a sign factor $(-1)^{n(n-1)/2}$. If we do this in a Vandermonde determinant
we get the identity
\begin{equation}\label{vandermondeidentity}
\det \left(z_j^{i-1}\right)_{1\le i,j\le n_2}=(-1)^{\frac{n_2(n_2-1)}2}\det \left(z_j^{n_2-i}\right)_{1\le i,j\le n_2}
=(-1)^{\frac{n_2(n_2-1)}2}\left(\prod\limits_{j=1}^{n_2}z_j^{n_2}\right)
\det \left(\frac 1{z_j^i}\right)_{1\le i,j\le n_2}.
\end{equation}
Provided $\re z_j<0$, we see that for $i\ge 1$
$$
\int_0^\infty \frac{u_j^{i-1} e^{u_jz_j}}{(i-1)!}\,du_j=\frac {(-1)^i}{z_j^i}.
$$
It follows from these two identities that
\begin{align}
&\det \left(z_j^{i-1}\right)_{1\le i,j\le n_2}=(-1)^{\frac{n_2(n_2-1)}2}\prod\limits_{j=1}^{n_2}z_j^{n_2}
\det\left(\frac{(-1)^i}{(i-1)!}\int_0^\infty u_j^{i-1} e^{u_jz_j}\,du_j\right)_{1\le i,j\le n_2}
\notag\\
&=(-1)^{\frac{n_2(n_2-1)}2}\prod\limits_{j=1}^{n_2}z_j^{n_2}\det\left(\frac{(-1)^i}{(i-1)!}\int_{-\infty}^a (a-x_j)^{i-1} e^{(a-x_j)z_j}\,dx_j\right)_{1\le i,j\le n_2}
\notag\\
&=(-1)^{\frac{n_2(n_2+1)}2}\prod\limits_{j=1}^{n_2}z_j^{n_2}\int\limits_{(-\infty,a]^{n_2}}\prod\limits_{j=1}^{n_2}e^{(a-x_j)z_j}
\det\left(\frac{(x_j-a)^{i-1}}{(i-1)!}\right)_{1\le i,j\le n_2}\,d^{n_2}x
\notag
\end{align}
for any $a\in\mathbb{R}$. From this we see that
\begin{equation}\label{3.2}
\det \left(z_j^{i-1}\right)_{1\le i,j\le n_2}=(-1)^{\frac{n_2(n_2+1)}2}\prod\limits_{j=1}^{n_2}z_j^{n_2}\int\limits_{(-\infty,a]^{n_2}}\prod\limits_{j=1}^{n_2}e^{(a-x_j)z_j}
\det\left(\frac{(x_j-a)^{i-1}}{(i-1)!}\right)_{1\le i,j\le n_2}\,d^{n_2}x.
\end{equation}
Similarly, we get
\begin{equation}\label{3.3}
\det \left(w_j^{i-1}\right)_{1\le i,j\le n_2}=(-1)^{\frac{n_2(n_2+1)}2}\prod\limits_{j=1}^{n_2}w_j^{n_2}\int\limits_{(-\infty,b]^{n_2}}\prod\limits_{j=1}^{n_2}e^{(b-y_j)w_j}
\det\left(\frac{(y_j-b)^{i-1}}{(i-1)!}\right)_{1\le i,j\le n_2}\,d^{n_2}y,
\end{equation}
for any $b\in\mathbb{R}$. Choose $a=\xi_1$ and $b=\Delta\xi$. Using the identities (\ref{3.2}) and (\ref{3.3}) in (\ref{2.27}) we obtain
\begin{align}\label{3.4}
&Q(h)=\frac{(-1)^{\frac{n_2(n_2-1)}2}}{(2\pi i)^{2n_2}n_1!(\Delta n)!}\int_{\Gamma_{d_1}^{n_1}}d^{n_1}z\int_{\Gamma_{d_2}^{\Delta n}}d^{\Delta n}z
\int_{\Gamma_{d_3}^{n_1}}d^{n_1}w\int_{\Gamma_{d_4}^{\Delta n}}d^{\Delta n}w
\\
&\int\limits_{(-\infty,\xi_1]^{n_2}}d^{n_2}x\int\limits_{(-\infty,\Delta\xi]^{n_2}}d^{n_2}y
\det\left(\frac{x_j^{i-1}}{(i-1)!}\right)_{1\le i,j\le n_2}\det\left(\frac{y_j^{i-1}}{(i-1)!}\right)_{1\le i,j\le n_2}
\notag\\
&\times
\prod\limits_{j=1}^{n_1}z_j^{n_1}w_j^{\Delta n}e^{\frac 12\mu_1z_j^2-x_jz_j+\frac 12\Delta\mu w_j^2-y_jw_j}
\left(\frac 1{z_j-w_j}-h\right)
\notag\\
&
\times\prod\limits_{j=n_1+1}^{n_2}\frac{z_j^{n_1+1}w_j^{\Delta n-1}e^{\frac 12\mu_1z_j^2-x_jz_j+\frac 12\Delta\mu w_j^2-y_jw_j}}
{w_j-z_j}.
\notag
\end{align}
Here we also used the fact that
\begin{equation*}
\det\left(\frac{(x_j-a)^{i-1}}{(i-1)!}\right)_{1\le i,j\le n_2}=
\det\left(\frac{x_j^{i-1}}{(i-1)!}\right)_{1\le i,j\le n_2}
\end{equation*}
by the standard formula for the Vandermonde determinant, and similarly for the other determinant.
Let $H_k(x)=2^kx^k+\dots$, $k\ge 0$, be the standard Hermite polynomials so that, for any $a>0$,
\begin{align}
\det\left(\frac{x_j^{i-1}}{(i-1)!}\right)_{1\le i,j\le n_2}&=\frac 1{a^{\frac{n_2(n_2-1)}2}}\det\left(\frac{(ax_j)^{i-1}}{(i-1)!}\right)_{1\le i,j\le n_2}
=\frac 1{a^{\frac{n_2(n_2-1)}2}}\det\left(\frac{H_{i-1}(ax_j)}{2^{i-1}(i-1)!}\right)_{1\le i,j\le n_2}
\notag\\
&=\frac 1{(a\sqrt{2\mu_1})^{\frac{n_2(n_2-1)}2}}\det\left(\frac 1{2\pi i}\int_{\gamma_{\tau_1}}\frac{e^{a\sqrt{2\mu_1}x_j\zeta_j-\frac 12\mu_1\zeta_j^2}}
{\zeta_j^i}d\zeta_j\right)_{1\le i,j\le n_2},
\notag
\end{align}
where we have chosen $\tau_1$ so that (\ref{3.1'}) holds. Take $a=1/\sqrt{2\mu_1}$. We have shown that
\begin{equation}\label{3.5}
\det\left(\frac{x_j^{i-1}}{(i-1)!}\right)_{1\le i,j\le n_2}=\frac 1{(2\pi i)^{n_2}}\int_{\gamma_{\tau_1}^{n_2}}d^{n_2}\zeta
\prod\limits_{j=1}^{n_2}e^{x_j\zeta_j-\frac 12\mu_1\zeta_j^2}\det\left(\frac 1{\zeta_j^i}\right)_{1\le i,j\le n_2}.
\end{equation}
Similarly,
\begin{equation}\label{3.6}
\det\left(\frac{y_j^{i-1}}{(i-1)!}\right)_{1\le i,j\le n_2}=\frac{(-1)^{\frac{n_2(n_2-1)}2}}  {(2\pi i)^{n_2}}\int_{\gamma_{\tau_2}^{n_2}}d^{n_2}\omega
\prod\limits_{j=1}^{n_2}e^{y_j\omega_j-\frac 12\mu_1\omega_j^2}\det\left(\frac 1{\omega_j^{n_2+1-i}}\right)_{1\le i,j\le n_2},
\end{equation}
where $\tau_2$ satisfies (\ref{3.1'}). If we insert (\ref{3.5}) and (\ref{3.6}) into (\ref{3.4}) the $x_j$- and $y_j$-integrations
become
\begin{equation}
\int_{-\infty}^{\xi_1}e^{x_j(\zeta_j-z_j)}dx_j=\frac{e^{\xi_1(\zeta_j-z_j)}}{\zeta_j-z_j}\,\,,\,\,
\int_{-\infty}^{\Delta\xi}e^{y_j(\omega_j-w_j)}dy_j=\frac{e^{\Delta\xi(\omega_j-w_j)}}{\omega_j-w_j},
\notag
\end{equation}
where the integrals converge because of (\ref{3.1'}). The resulting formula is (\ref{3.1}).

\end{proof}

Write
\begin{equation}\label{3.7}
G_{n,\mu,\xi}(z)=z^ne^{\frac 12\mu z^2-\xi z}.
\end{equation}
Recall (\ref{3.1'}). For $1\le k,\ell\le n_2$, we define $A_h(\ell,k)$ by
\begin{align}\label{3.8}
&\delta_{\ell k}1(\ell\le n_1)+A_h(\ell,k)
\\
&=\frac 1{(2\pi i)^4}\int_{\Gamma_{d_1}}dz\int_{\Gamma_{d_3}}dw\int_{\gamma_{\tau_1}}d\zeta\int_{\gamma_{\tau_2}}d\omega
\frac{G_{n_1,\mu_1,\xi_1}(z)G_{\Delta n,\Delta\mu,\Delta\xi}(w)}{G_{k,\mu_1,\xi_1}(\zeta)G_{n_2+1-\ell,\Delta\mu,\Delta\xi}(\omega)}
\frac 1{(z-\zeta)(w-\omega)}\left(\frac 1{z-w}-h\right),
\notag
\end{align}
and $B(\ell,k)$ by
\begin{align}\label{3.9}
&\delta_{\ell k}1(\ell>n_1)+B(\ell,k)
\\
&=-\frac 1{(2\pi i)^4}\int_{\Gamma_{d_2}}dz\int_{\Gamma_{d_4}}dw\int_{\gamma_{\tau_1}}d\zeta\int_{\gamma_{\tau_2}}d\omega
\frac{G_{n_1+1,\mu_1,\xi_1}(z)G_{\Delta n-1,\Delta\mu,\Delta\xi}(w)}{G_{k,\mu_1,\xi_1}(\zeta)G_{n_2+1-\ell,\Delta\mu,\Delta\xi}(\omega)}
\frac 1{(z-w)(z-\zeta)(w-\omega)}.
\notag
\end{align}
Expand the determinants in (\ref{3.1}),
\begin{align}
\det\left(\frac 1{\zeta_j^i}\right)&=\sum\limits_{\sigma\in S_{n_2}}\sgn(\sigma)\prod\limits_{j=1}^{n_2}\frac 1{\zeta_j^{\sigma(j)}},
\notag\\
\det\left(\frac 1{\omega_j^{n_2+1-i}}\right)&=\sum\limits_{\tau\in S_{n_2}}\sgn(\tau)\prod\limits_{j=1}^{n_2}\frac 1{\omega_j^{n_2+1-\tau(j)}}.
\notag
\end{align}
From (\ref{3.8}) and (\ref{3.9}) it then follows that we can write
\begin{align}\label{3.10}
Q(h)=\frac 1{n_1!\Delta n!}\sum\limits_{\sigma,\tau\in S_{n_2}}\sgn(\sigma\tau)&\prod\limits_{j=1}^{n_1}\left(\delta_{\tau(j),\sigma(j)}1(\tau(j)\le n_1)+A_h(\tau(j),\sigma(j))\right)\\
\times&\prod\limits_{j=n_1+1}^{n_2}\left(\delta_{\tau(j),\sigma(j)}1(\tau(j)>n_1)+B(\tau(j),\sigma(j))\right).
\notag
\end{align}
This way of writing $Q(h)$ is useful because (\ref{3.10}) leads to a determinant expansion of $Q(h)$, and $A_h$ and $B$ can be rewritten in a way that is useful for taking limits, see lemma \ref{lem4.1}.

Write 
$$
[a,b]_<^n=\{(x_1,\dots,x_n)\in[a,b]^n\,;\,x_1<\dots<x_n\},
$$
which is empty if $n=0$,
and recall the notation (\ref{1.15}). By expanding (\ref{3.10}) we can prove
\begin{proposition}\label{prop3.2}
We have the formula
\begin{equation}\label{3.11}
Q(h)=\sum\limits_{r=0}^{\min(n_1,\Delta n)}\sum\limits_{s=0}^{n_1}\sum\limits_{t=0}^{\Delta n}\sum\limits_{\substack{\mathbf{c}\in[1,n_1]_<^r\\
\mathbf{c'}\in[1,n_1]_<^s}}\sum\limits_{\substack{\mathbf{d}\in[n_1+1,n_2]_<^r\\
\mathbf{d'}\in[n_1+1,n_2]_<^t}}
\det\left(\begin{matrix} B(\mathbf{c},\mathbf{c}) &B(\mathbf{c},\mathbf{c'}) &B(\mathbf{c},\mathbf{d}) &B(\mathbf{c},\mathbf{d'}) \\
A_h(\mathbf{c'},\mathbf{c}) &A_h(\mathbf{c'},\mathbf{c'}) &A_h(\mathbf{c'},\mathbf{d}) &A_h(\mathbf{c'},\mathbf{d'})\\
A_h(\mathbf{d},\mathbf{c}) &A_h(\mathbf{d},\mathbf{c'}) &A_h(\mathbf{d},\mathbf{d}) &A_h(\mathbf{d},\mathbf{d'})\\
B(\mathbf{d'},\mathbf{c}) &B(\mathbf{d'},\mathbf{c'}) &B(\mathbf{d'},\mathbf{d}) &B(\mathbf{d'},\mathbf{d'}) \end{matrix}\right).
\end{equation}
\end{proposition}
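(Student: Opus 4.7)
My plan is to expand (3.10) by first interpreting the inner $\sigma$-sum as a determinant, then using a row permutation to absorb the outer $\tau$, and finally expanding the resulting determinant multilinearly via the Kronecker-delta parts. Writing $f_1(\ell,k)=\delta_{\ell k}1(\ell\le n_1)+A_h(\ell,k)$ and $f_2(\ell,k)=\delta_{\ell k}1(\ell>n_1)+B(\ell,k)$, and $i(j)=1$ for $j\le n_1$, $i(j)=2$ otherwise, the inner $\sigma$-sum equals $\det M_\tau$ where $M_\tau(j,k)=f_{i(j)}(\tau(j),k)$. Since row $j$ of $M_\tau$ is row $\tau(j)$ of the matrix $F^{(i(j))}(\ell,k):=f_{i(j)}(\ell,k)$, rearranging rows by $\tau$ transforms $M_\tau$ into $N_\tau(k,\ell):=f_{i(\tau^{-1}(k))}(k,\ell)$, giving $\det M_\tau=\sgn(\tau)\det N_\tau$; combined with $\sgn(\sigma\tau)$ this produces $\sgn(\tau)^2=1$. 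The crucial point is that $N_\tau$ depends on $\tau$ only through the subset $R_1:=\tau([1,n_1])$, because $i(\tau^{-1}(k))=1\iff k\in R_1$; hence the $n_1!\,\Delta n!$ permutations sharing a given $R_1$ cancel the prefactor and
\[
Q(h)=\sum_{R_1\subset[1,n_2],\,|R_1|=n_1}\det N(R_1).
\]

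For the multilinear expansion, each row of $N(R_1)$ is the sum of a delta part (weighted by an indicator) and an operator part ($A_h$ if $k\in R_1$, $B$ otherwise). A delta row is nonzero only when the indicator holds, in which case it equals the unit vector $e_k$, forcing $\sigma(k)=k$ in the determinant and collapsing the matrix to the submatrix on the remaining rows and columns. The set of delta rows then decomposes as $U_1\subset R_1\cap[1,n_1]$ plus $U_2\subset[n_1+1,n_2]\setminus R_1$, and the surviving rows split into four groups according to which half of $[1,n_2]$ they lie in and whether they belong to $R_1$: setting $\mathbf{c}:=[1,n_1]\setminus R_1$ (rows of $B$), $\mathbf{d}:=R_1\cap[n_1+1,n_2]$ (rows of $A_h$), $\mathbf{c'}:=(R_1\cap[1,n_1])\setminus U_1$ (rows of $A_h$), and $\mathbf{d'}:=([n_1+1,n_2]\setminus R_1)\setminus U_2$ (rows of $B$), a direct count gives $|\mathbf{c}|=|\mathbf{d}|=n_1-|R_1\cap[1,n_1]|=:r$, while $|\mathbf{c'}|=s$ and $|\mathbf{d'}|=t$ are free subject to $s\le n_1$, $t\le\Delta n$. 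The block structure of the resulting submatrix (rows in $\mathbf{c},\mathbf{d'}$ using $B$, rows in $\mathbf{c'},\mathbf{d}$ using $A_h$) matches (3.11) exactly.

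The final identification uses that reordering rows and columns of the submatrix into the block order $(\mathbf{c},\mathbf{c'},\mathbf{d},\mathbf{d'})$ applies the same permutation on both sides, leaving the determinant invariant, so the submatrix determinant equals the block determinant in (3.11). The parametrization by $(R_1,U_1,U_2)$ produces only disjoint $\mathbf{c}\cap\mathbf{c'}=\emptyset$ and $\mathbf{d}\cap\mathbf{d'}=\emptyset$, whereas (3.11) sums over the four tuples independently; this is harmless because any overlap creates two identical columns in the block matrix, making the determinant vanish. The main obstacle in carrying out this plan will be the careful combinatorial bookkeeping in the multilinear expansion — specifically, verifying that the delta-row cofactor expansion incurs no extraneous signs (which holds because the unit vectors $e_k$ sit on the diagonal), and that the four groups $\mathbf{c},\mathbf{c'},\mathbf{d},\mathbf{d'}$ extracted from $(R_1,U_1,U_2)$ reassemble exactly into the block structure of (3.11) with the correct assignment of each row block to $A_h$ or $B$.
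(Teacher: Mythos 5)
Your argument is correct and is essentially the paper's proof: both extract the $\tau$-sum to collapse the $1/(n_1!\Delta n!)$ prefactor into a sum over $n_1$-subsets (your $R_1$, the paper's $J_-$, related by $\tau\leftrightarrow\tau^{-1}$), both then multilinearly expand the diagonal Kronecker-delta terms, identify the four row blocks by their membership in $[1,n_1]$ versus $[n_1+1,n_2]$ and in $R_1$ or not, reorder into block form by a simultaneous row/column permutation, and finally drop the disjointness constraints because overlaps produce repeated columns. The only cosmetic difference is that you reach the subset-sum by a "permute rows of $M_\tau$" interpretation while the paper reindexes the product algebraically to pass from (3.13) to (3.14).
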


\begin{proof}
Set
\begin{equation}\label{3.12}
E_h(j;\ell,k)=\begin{cases}
   \delta_{\ell k}1(\ell\le n_1)+A_h(\ell,k) & \text{if } 1\le j\le n_1 \\
   \delta_{\ell k}1(\ell >n_1)+B(\ell,k)     & \text{if } n_1<j\le n_2
  \end{cases}.
\end{equation}
Then, by (\ref{3.10}),
\begin{equation}\label{3.13}
Q(h)=\frac 1{n_1!\Delta n!}\sum\limits_{\sigma,\tau\in S_{n_2}}\sgn(\sigma\tau)\prod\limits_{j=1}^{n_2}E_h(j;\tau(j),\sigma(j)).
\end{equation}
By reordering the product we get
\begin{align*}
Q(h)&=\frac 1{n_1!\Delta n!}\sum\limits_{\sigma,\tau\in S_{n_2}}\sgn(\sigma\tau)\prod\limits_{j=1}^{n_2}E_h(\tau^{-1}(j);j,\sigma(\tau^{-1}(j)))\\
&=\frac 1{n_1!\Delta n!}\sum\limits_{\sigma,\tau\in S_{n_2}}\sgn(\sigma\tau^{-1})\prod\limits_{j=1}^{n_2}E_h(\tau^{-1}(j);j,\sigma(\tau^{-1}(j))),
\end{align*}
since $\sgn(\sigma\tau^{-1})=\sgn(\sigma\tau)$.
If we replace $\sigma\tau^{-1}$ by $\sigma$ and then $\tau^{-1}$ by $\tau$, we see that
\begin{equation}\label{3.14}
Q(h)=\frac 1{n_1!\Delta n!}\sum\limits_{\sigma,\tau\in S_{n_2}}\sgn(\sigma)\prod\limits_{j=1}^{n_2}E_h(\tau(j);j,\sigma(j)).
\end{equation}
Let $J_-\subseteq[1,n_2]$, $|J_-|=n_1$ and $J_+=[1,n_2]\setminus J_-$. Then, by (\ref{3.12}) and (\ref{3.14}),
\begin{align}\label{3.15}
Q(h)&=\frac 1{n_1!\Delta n!}\sum\limits_{J_-}\sum\limits_{\sigma\in S_{n_2}}\sgn(\sigma)\sum\limits_{\tau\in S_{n_2};\tau(J_-)=[1,n_1]}
\prod\limits_{j\in J_-}E_h(\tau(j);j,\sigma(j))\prod\limits_{j\in J_+}E_h(\tau(j);j,\sigma(j))
\\
&=\sum\limits_{J_-}\sum\limits_{\sigma\in S_{n_2}}\sgn(\sigma) \prod\limits_{j\in J_-}(\delta_{j,\sigma(j)}1(j\le n_1)+A_h(j,\sigma(j)))
\prod\limits_{j\in J_+}(\delta_{j,\sigma(j)}1(j>n_1)+B(j,\sigma(j))),
\notag
\end{align}
since
$$
\sum\limits_{\tau\in S_{n_2}:\tau(J_-)=[1,n_1]}1=n_1!\Delta n!.
$$
We can rewrite (\ref{3.15}) as
\begin{align}\label{3.16}
Q(h)
=\sum\limits_{J_-}\sum\limits_{\sigma\in S_{n_2}}\sgn(\sigma) &\prod\limits_{j\in J_-\cap[1,n_1]}(\delta_{j,\sigma(j)}+A_h(j,\sigma(j)))
\prod\limits_{j\in J_-\cap[n_1+1,n_2]}A_h(j,\sigma(j))
\\
\times&\prod\limits_{j\in J_+\cap[1,n_1]}B(j,\sigma(j))
\prod\limits_{j\in J_+\cap[n_1+1,n_2]}(\delta_{j,\sigma(j)}+B(j,\sigma(j))).
\notag
\end{align}

We want to expand the products involving the Kronecker deltas. Let
$$
\gamma=J_+\cap[1,n_1]\,\,,\,\,\delta=J_-\cap[n_1+1,n_2].
$$
Set $r=|J_+\cap[1,n_1]|$. Then, $|J_-\cap[1,n_1]|=n_1-r$ and we see that $0\le r\le n_1$. Since $|J_-|=n_1$, we get
$$
|J_-\cap[n_1+1,n_2]|=n_1-|J_-\cap[1,n_1]|=r.
$$
Thus, $|\gamma|=|\delta|=r$. Given $\gamma,\delta$ we see that $J_-=\delta\cup([1,n_1]\setminus \gamma)$, so $J_-$ is uniquely
determined by $\gamma,\delta$. Hence, (\ref{3.16}) can be written as
\begin{align}\label{3.17}
Q(h)
=\sum\limits_{r=0}^{\min(n_1,\Delta n)}\sum\limits_{\gamma,\delta}\sum\limits_{\sigma\in S_{n_2}}\sgn(\sigma) &\prod\limits_{j\in [1,n_1]\setminus\gamma}
\left(\delta_{j,\sigma(j)}+A_h(j,\sigma(j))\right)
\prod\limits_{j\in\delta}A_h(j,\sigma(j))
\\
\times&\prod\limits_{j\in\gamma}B(j,\sigma(j))
\prod\limits_{j\in [n_1+1,n_2]\setminus\delta}\left(\delta_{j,\sigma(j)}+B(j,\sigma(j))\right),
\notag
\end{align}
where we sum over all $\gamma,\delta$ such that $\gamma\subseteq [1,n_1]$, $\delta\subseteq [n_1+1,n_2]$, $|\gamma|=|\delta|=r$.

Now,
\begin{equation}\label{3.18}
\prod\limits_{j\in [1,n_1]\setminus\gamma}(\delta_{j,\sigma(j)}+A_h(j,\sigma(j)))
=\sum\limits_{\gamma'\subseteq [1,n_1]\setminus\gamma}\prod\limits_{j\in[1,n_1]\setminus(\gamma\cup\gamma')}
\delta_{j,\sigma(j)}\prod\limits_{j\in\gamma'}A_h(j,\sigma(j))
\end{equation}
and
\begin{equation}\label{3.19}
\prod\limits_{j\in [n_1+1,n_2]\setminus\delta}(\delta_{j,\sigma(j)}+B(j,\sigma(j)))
=\sum\limits_{\delta'\subseteq [n_1+1,n_2]\setminus\delta}\prod\limits_{j\in[n_1+1,n_2]\setminus(\delta\cup\delta')}
\delta_{j,\sigma(j)}\prod\limits_{j\in\delta'}B(j,\sigma(j)).
\end{equation}
Inserting (\ref{3.18}) and (\ref{3.19}) into (\ref{3.17}) yields
\begin{align}\label{3.20}
&Q(h)
=\sum\limits_{r=0}^{\min(n_1,\Delta n)}\sum\limits_{s=0}^{n_1-r}\sum\limits_{t=0}^{\Delta n-r}
\sum\limits_{\gamma,\gamma',\delta,\delta'}\sum\limits_{\sigma\in S_{n_2}}\sgn(\sigma) 
\\
&\times\prod\limits_{j\in[1,n_1]\setminus(\gamma\cup\gamma')}\delta_{j,\sigma(j)}\prod\limits_{j\in[n_1+1,n_2]\setminus(\delta\cup\delta')}\delta_{j,\sigma(j)}
\prod\limits_{j\in\gamma}B(j,\sigma(j))\prod\limits_{j\in\gamma'}A_h(j,\sigma(j))\prod\limits_{j\in\delta}A_h(j,\sigma(j))
\prod\limits_{j\in\delta'}B(j,\sigma(j))
\notag
\end{align}
where we sum over all $\gamma,\gamma',\delta,\delta'$ such that
\begin{align}\label{3.21}
&\gamma,\gamma'\subseteq[1,n_1]\,,\,\delta,\delta'\subseteq[n_1+1,n_2]\,,\,\gamma\cap\gamma'=\emptyset\,,\,\delta\cap\delta'=\emptyset,
\\
&|\gamma|=|\delta|=r\,,\,|\gamma'|=s\,,\,|\delta'|=t.
\notag
\end{align}
Let $\Lambda=\gamma\cup\gamma'\cup\delta\cup\delta'$ and $L=|\Delta|=2r+t+s$. Terms in (\ref{3.20}) are $\neq 0$ only if $\sigma(j)=j$ for
$j\in[1,n_2]\setminus\Lambda$. The permutation $\sigma$ is then reduced to a bijection $\tilde{\sigma}:\Lambda\to\Lambda$. 
Let $\Lambda=\{\lambda_1,\dots,\lambda_L\}$, where $\lambda_1<\dots<\lambda_L$. Then $\tilde{\sigma}$ is a permutation of $\Lambda$ and we have
$\sgn(\tilde{\sigma})=\sgn(\sigma)$. Thus,
\begin{align}\label{3.22}
&Q(h)
=\sum\limits_{r=0}^{\min(n_1,\Delta n)}\sum\limits_{s=0}^{n_1-r}\sum\limits_{t=0}^{\Delta n-r}
\sum\limits_{\gamma,\gamma',\delta,\delta'}\sum\limits_{\tilde{\sigma}:\Lambda\to\Lambda}\sgn(\tilde{\sigma}) 
\\
&\times
\prod\limits_{j\in\gamma}B(j,\tilde{\sigma}(j))\prod\limits_{j\in\gamma'} A_h(j,\tilde{\sigma}(j))\prod\limits_{j\in\delta} A_h(j,\tilde{\sigma}(j))
\prod\limits_{j\in\delta'}B(j,\tilde{\sigma}(j)).
\notag
\end{align}
Define,
\begin{equation}
T_h(j;\ell,k)=\begin{cases}
  B(\ell,k)  & \text{if } j\in\gamma \\
  A_h(\ell,k) & \text{if } j\in\gamma' \\
  A_h(\ell,k)  & \text{if } j\in\delta \\
  B(\ell,k)  & \text{if } j\in\delta'.
  \end{cases}
\end{equation}
Then, by (\ref{3.22}),
\begin{equation}
Q(h)
=\sum\limits_{r=0}^{\min(n_1,\Delta n)}\sum\limits_{s=0}^{n_1-r}\sum\limits_{t=0}^{\Delta n-r}
\sum\limits_{\gamma,\gamma',\delta,\delta'}\sum\limits_{\tilde{\sigma}:\Lambda\to\Lambda}\sgn(\tilde{\sigma}) \prod\limits_{j\in\Lambda} T_h(j;j,\tilde{\sigma}(j)).
\end{equation}
Define $\tau\in S_L$ by $\tilde{\sigma}(\lambda_j)=\lambda_{\tau(j)}$, $1\le j\le L$. Then
$\sgn(\tilde{\sigma})=\sgn(\tau)$ and we find
\begin{align}\label{3.23}
Q(h)
&=\sum\limits_{r=0}^{\min(n_1,\Delta n)}\sum\limits_{s=0}^{n_1-r}\sum\limits_{t=0}^{\Delta n-r}
\sum\limits_{\gamma,\gamma',\delta,\delta'}\sum\limits_{\tau\in S_L}\sgn(\tau)\prod\limits_{i=1}^L T_h(\lambda_i;\lambda_i,\lambda_{\tau(i)})
\\
&=\sum\limits_{r=0}^{\min(n_1,\Delta n)}\sum\limits_{s=0}^{n_1-r}\sum\limits_{t=0}^{\Delta n-r}
\sum\limits_{\gamma,\gamma',\delta,\delta'}\det\left(T_h(\lambda_i;\lambda_i,\lambda_{j})\right).
\notag
\end{align}
Let
\begin{align}
\gamma&=\{c_1,\dots,c_r\}\,,\, \mathbf{c}=(c_1,\dots,c_r)\in[1,n_1]_<^r,
\notag\\
\gamma'&=\{c'_1,\dots,c'_s\}\,,\, \mathbf{c'}=(c'_1,\dots,c'_r)\in[1,n_1]_<^s,
\notag\\
\delta&=\{d_1,\dots,d_r\}\,,\, \mathbf{d}=(d_1,\dots,d_r)\in[n_1+1,n_2]_<^r,
\notag\\
\delta&=\{d'_1,\dots,d'_t\}\,,\, \mathbf{d'}=(d'_1,\dots,d'_t)\in[n_1+1,n_2]_<^t.
\notag
\end{align}
Notice that the determinant in (\ref{3.23}) is unchanged under permutations of the $\lambda_i$'s. Thus we can reorder the 
$\lambda_i$'s in (\ref{3.23}) so that we get the order
$c_1,\dots,c_r,c'_1,\dots,c'_s,d_1,\dots,d_r,d'_1,\dots,d'_t$. Also, notice that if $c_i=c_j'$ or $d_i=d_j'$
for some $i,j$, then the determinant is $=0$. Hence, we can remove the restrictions $\gamma\cap\gamma'=\emptyset$ and $\delta\cap\delta'=\emptyset$
in (\ref{3.21}). Note that if e.g. $s>n_1-r$, then we must have $c_i=c_j'$ for some $i,j$. Thus, the right side in (\ref{3.23}) equals the right side in (\ref{3.11}).
\end{proof}

We now want to give expressions for $A_h$ and $B$ that will be useful in the asymptotic analysis. First, we need some definitions.
Recall the notation (\ref{3.7}). Let $0<\tau_1,\tau_2<D_1<D_2$ and define
\begin{equation}\label{3.24}
a_{0,1}(\ell,k)=\frac 1{(2\pi i)^4}\int_{\Gamma_{D_1}}dz\int_{\Gamma_{D_2}}dw\int_{\gamma_{\tau_1}}d\zeta\int_{\gamma_{\tau_2}}d\omega
\frac{G_{n_1,\mu_1,\xi_1}(z)G_{\Delta n,\Delta\mu,\Delta\xi}(w)}{G_{k,\mu_1,\xi_1}(\zeta)G_{n_2+1-\ell,\Delta\mu,\Delta\xi}(\omega)(z-w)(z-\zeta)(w-\omega)},
\end{equation}
\begin{equation}\label{3.25}
b_1(\ell,k)=\frac 1{(2\pi i)^4}\int_{\Gamma_{D_2}}dz\int_{\Gamma_{D_1}}dw\int_{\gamma_{\tau_1}}d\zeta\int_{\gamma_{\tau_2}}d\omega
\frac{G_{n_1+1,\mu_1,\xi_1}(z)G_{\Delta n-1,\Delta\mu,\Delta\xi}(w)}{G_{k,\mu_1,\xi_1}(\zeta)G_{n_2+1-\ell,\Delta\mu,\Delta\xi}(\omega)(z-w)(z-\zeta)(w-\omega)}.
\end{equation}
Let $0<\tau<D$ and define
\begin{equation}\label{3.26}
c_2(\ell,k)=\frac 1{(2\pi i)^2}\int_{\Gamma_{D}}dw\int_{\gamma_{\tau}}d\omega\frac{G_{n_2-k,\Delta\mu,\Delta\xi}(w)}
{G_{n_2+1-\ell,\Delta\mu,\Delta\xi}(\omega)(w-\omega)},
\end{equation}
\begin{equation}\label{3.27}
c_3(\ell,k)=\frac 1{(2\pi i)^2}\int_{\Gamma_{D}}dz\int_{\gamma_{\tau}}d\zeta\frac{G_{\ell-1,\mu_1,\xi_1}(z)}
{G_{k,\mu_1,\xi_1}(\zeta)(z-\zeta)}.
\end{equation}
We now set,
\begin{subequations}\label{3.28}
\begin{align}
        a_{0,2}(\ell,k)&=-1(k>n_1)c_2(\ell,k)\\
        a_{0,3}(\ell,k)&=1(\ell\le n_1)c_3(\ell,k)\\
        b_2(\ell,k)&=-1(k>n_1+1)c_2(\ell,k)\\
        b_3(\ell,k)&=1(\ell\le n_1+1)c_3(\ell,k)\\
        a_2^\ast(\ell)&=c_2(\ell,n_1)\\
        a_3^\ast(k)&=c_3(n_1+1,k),
\end{align}
\end{subequations}
and finally, we define
\begin{subequations}\label{3.29}
\begin{align}
        a_0(\ell,k)&=a_{0,1}(\ell,k)-a_{0,2}(\ell,k)-a_{0,3}(\ell,k)\\
        b(\ell,k)&=-b_1(\ell,k)+ b_2(\ell,k)+b_3(\ell,k)\\
        A_0^\ast(\ell,k)&=-(\delta_{k,n_1+1}-a_3^\ast(k))(\delta_{\ell,n_1}-a_2^\ast(\ell))\label{3.29c}.
\end{align}
\end{subequations}
With this notation we can formulate our next lemma.

\begin{lemma}\label{lem3.3} If $A_h(\ell,k)$ and $B(\ell,k)$, $1\le\ell,k\le n_2$, are defined by 
(\ref{3.8}) and (\ref{3.9}) then
\begin{equation}\label{3.30}
A_0(\ell,k)=a_0(\ell,k),
\end{equation}
\begin{equation}\label{3.31}
B(\ell,k)=-\delta_{k,n_1+1}\delta_{\ell,n_1+1}+b(\ell,k)
\end{equation}
and 
\begin{equation}\label{3.32}
\left.\frac{\partial}{\partial h}\right|_{h=0}A_h(\ell,k)=A_0^\ast(\ell,k).
\end{equation}
\end{lemma}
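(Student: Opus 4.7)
My plan is to prove all three identities by contour deformation: push each vertical contour $\Gamma_{d_i}$ in the definitions (\ref{3.8})--(\ref{3.9}) across the small circles $\gamma_{\tau_1},\gamma_{\tau_2}$ onto vertical lines $\Gamma_D$ in the right half-plane (where the right-hand-side quantities $a_{0,1},b_1,c_2,c_3,a_2^\ast,a_3^\ast$ are defined), and account for the residues collected along the way. The Kronecker deltas on the right-hand sides will emerge as Cauchy residues at the origin of the circles after the inner integrations collapse to monomials such as $\zeta^{n_1-k}$ or $\omega^{\ell-n_1-1}$.

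I would start with (\ref{3.32}), which is the cleanest. Since $A_h(\ell,k)$ is affine in $h$, $\partial_h|_{h=0}A_h$ equals $-1$ times the integral obtained from (\ref{3.8}) by deleting the factor $\frac{1}{z-w}-h$, and this integrand factorises into a $(z,\zeta)$-part $I_1(k)$ and a $(w,\omega)$-part $I_2(\ell)$. For each factor I would push the vertical line past the relevant small circle, picking up a single residue at $z=\zeta$ (resp.\ $w=\omega$). The deformed double integrals are precisely $c_3(n_1+1,k)=a_3^\ast(k)$ and $c_2(\ell,n_1)=a_2^\ast(\ell)$ by comparison with (\ref{3.26})--(\ref{3.27}); the residue contributions reduce to the elementary Cauchy integrals $\frac{1}{2\pi i}\oint\zeta^{n_1-k}d\zeta=\delta_{k,n_1+1}$ and $\frac{1}{2\pi i}\oint\omega^{\ell-n_1-1}d\omega=\delta_{\ell,n_1}$. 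Multiplying and reinstating the overall sign produces $A_0^\ast(\ell,k)$ in the factorised form (\ref{3.29c}).

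For (\ref{3.30}) and (\ref{3.31}) the same strategy applies, but the $\frac{1}{z-w}$ factor couples $z$ and $w$, so I would carry out the deformation in two stages, moving $z$ first and then $w$. The fully deformed main term is $a_{0,1}$ for (\ref{3.30}) and $-b_1$ for (\ref{3.31}) by definition. Four residue types appear: (i) at $z=\zeta$, which after the inner Cauchy integral $\frac{1}{2\pi i}\oint\zeta^{n_1-k}/(\zeta-w)\,d\zeta$ produces, when $k>n_1$ (resp.\ $k>n_1+1$), a $(w,\omega)$-integral that after one further deformation of $w$ past $\gamma_{\tau_2}$ matches $-a_{0,2}$ (resp.\ $b_2$) up to a Kronecker $\delta_{\ell,k}$ correction; (ii) symmetrically at $w=\omega$, yielding the $-a_{0,3}$ (resp.\ $b_3$) contribution plus its own $\delta$-correction; and (iii,iv) the residues at the ``crossed'' poles $z=w$ (stage one) and $w=z$ (stage two), which refer to the same geometric pole but are traversed in different orders and which after one more contour push combine into a term that absorbs all remaining Kronecker deltas---including the $\delta_{\ell,k}1(\ell\le n_1)$ from the left-hand side of (\ref{3.8}), giving zero net contribution for (\ref{3.30}), and producing the claimed $-\delta_{k,n_1+1}\delta_{\ell,n_1+1}$ for (\ref{3.31}), where the shifted exponents $n_1+1$ and $\Delta n-1$ move the relevant Laurent coefficient at $\zeta=0$ and $\omega=0$ by one.

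The hardest step will be the signed bookkeeping across the four residue crossings, particularly the $z=w$ and $w=z$ residues, where double counting must be avoided by fixing the order of the two deformation stages. A useful organising principle throughout is to express every Laurent coefficient at the origin as $\frac{1}{2\pi i}\oint\zeta^m d\zeta=\delta_{m,-1}$ and then match the resulting indicator functions against the cutoffs $1(k>n_1), 1(\ell\le n_1), 1(k>n_1+1), 1(\ell\le n_1+1)$ appearing in the definitions (\ref{3.28}) of $a_{0,2},a_{0,3},b_2,b_3$.
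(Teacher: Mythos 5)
Your overall strategy (contour deformation plus residue bookkeeping) is indeed the one the paper uses, and your factorized treatment of (\ref{3.32}) is correct: since the $h$-derivative removes the $\frac{1}{z-w}$ coupling, $\partial_h A_h|_{h=0}$ factors into a $(z,\zeta)$-block and a $(w,\omega)$-block, each of which contributes a $\delta$ from the residue at $z=\zeta$ (resp.\ $w=\omega$) and an $a_3^\ast$ (resp.\ $a_2^\ast$) from the deformed term, reproducing (\ref{3.29c}). This is if anything cleaner than the paper's term-by-term computation of $\partial_h a_{h,i}$ in (\ref{3.44})--(\ref{3.46}).

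The gap is in the deformation order for (\ref{3.30}). In (\ref{3.8}) the $z$-contour $\Gamma_{d_1}$ sits to the \emph{left} of the $w$-contour $\Gamma_{d_3}$ (since $d_1<d_3$), and in the target $a_{0,1}$ of (\ref{3.24}) we again have $z$ on $\Gamma_{D_1}$ to the left of $w$ on $\Gamma_{D_2}$. The paper therefore deforms the rightmost contour first: $\Gamma_{d_3}\to\Gamma_{D_2}-\gamma_{r_1}$ for $w$, and only then $\Gamma_{d_1}\to\Gamma_{D_1}-\gamma_{r_2}$ for $z$. With this order the two vertical lines never swap sides, so the pole $z=w$ is never crossed; the only ``external'' residues are the circle contributions, and the Kronecker deltas emerge from the double-circle term (\ref{3.37}) alone. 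If you instead push $z$ first from $\Gamma_{d_1}$ past $\Gamma_{d_3}$ to $\Gamma_{D_1}$, you necessarily cross $z=w$ while $w$ is still on $\Gamma_{d_3}$, producing an extra residue term $R$ (a $(w,\zeta,\omega)$-integral with $z$ replaced by $w$, carrying $G_{n_2,\mu_2,\xi_2}(w)$ in the numerator) which does not directly match any of $a_{0,1},a_{0,2},a_{0,3}$; deforming $w$ inside $R$ then meets two more poles at $w=\zeta$ and $w=\omega$. You flag this as the ``hardest step'' and assert that all these residues ``combine into a term that absorbs all remaining Kronecker deltas,'' but that assertion is exactly what needs to be proved, and nothing in the proposal establishes it. Note also the asymmetry you seem to have missed: for $B$ in (\ref{3.9}) you have $d_4<d_2$, so $z$ is the \emph{rightmost} variable there, and the paper accordingly deforms $z$ (on $\Gamma_{d_2}$) first in that case, again avoiding $z=w$. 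Your uniform ``$z$ first'' is correct for (\ref{3.31}) but creates an avoidable (and unresolved) $z=w$ complication for (\ref{3.30}). Once the order is reversed for $A_h$, your plan reduces essentially to the paper's proof, with the Laurent-coefficient bookkeeping at $\zeta=0,\omega=0$ producing the indicator cutoffs and Kronecker deltas exactly as you describe.
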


\begin{proof}
Recall the condition (\ref{3.1'}),
\begin{equation}\label{3.33}
d_1<d_3<-\max(\tau_1,\tau_2)<0\,\,,\,\,d_4<d_2<-\max(\tau_1,\tau_2)<0.
\end{equation}
Choose $D_1,D_2,r_1,r_2,\tau_1,\tau_2$ so that
\begin{equation}\label{3.34}
0<\tau_2<\tau_1<r_1<r_2<D_1<D_2.
\end{equation}
In the integral in the right side of (\ref{3.8}) we can deform $\Gamma_{d_3}$ to $\Gamma_{D_2}$ and $-\gamma_{r_1}$, and then 
$\Gamma_{d_1}$ to $\Gamma_{D_1}$ and $-\gamma_{r_2}$ without passing any poles. This gives
\begin{align}\label{3.35}
&\delta_{\ell k}1(\ell\le n_1)+A_h(\ell,k)
=\frac 1{(2\pi i)^4}\left(\int_{\Gamma_{D_1}}dz\int_{\Gamma_{D_2}}dw-\int_{\Gamma_{D_1}}dz\int_{\gamma_{r_1}}dw
-\int_{\gamma_{r_2}}dz\int_{\Gamma_{D_2}}dw\right)
\\&\times
\int_{\gamma_{\tau_1}}d\zeta\int_{\gamma_{\tau_2}}d\omega
\frac{G_{n_1,\mu_1,\xi_1}(z)G_{\Delta n,\Delta\mu,\Delta\xi}(w)}{G_{k,\mu_1,\xi_1}(\zeta)G_{n_2+1-\ell,\Delta\mu,\Delta\xi}(\omega)}
\frac 1{(z-\zeta)(w-\omega)}\left(\frac 1{z-w}-h\right)
\notag\\
&+\frac 1{(2\pi i)^4}\int_{\gamma_{r_2}}dz\int_{\gamma_{r_1}}dw\int_{\gamma_{\tau_1}}d\zeta\int_{\gamma_{\tau_2}}d\omega
\frac{G_{n_1,\mu_1,\xi_1}(z)G_{\Delta n,\Delta\mu,\Delta\xi}(w)}{G_{k,\mu_1,\xi_1}(\zeta)G_{n_2+1-\ell,\Delta\mu,\Delta\xi}(\omega)}
\frac 1{(z-\zeta)(w-\omega)}\left(\frac 1{z-w}-h\right).
\notag
\end{align}
Consider the last integral in (\ref{3.35}). The $w$-integral has its only pole in $w=\omega$ and hence it equals
\begin{equation}\label{3.36}
\frac 1{(2\pi i)^3}\int_{\gamma_{r_2}}dz\int_{\gamma_{\tau_1}}d\zeta\int_{\gamma_{\tau_2}}d\omega
\frac{G_{n_1,\mu_1,\xi_1}(z)}{G_{k,\mu_1,\xi_1}(\zeta)\omega^{n_1+1-\ell}(z-\zeta)}\left(\frac 1{z-\omega}-h\right).
\end{equation}
In this integral the $z$-integral has poles at $z=\zeta$ and at $z=\omega$, which gives
\begin{equation}\label{3.37}
\frac 1{(2\pi i)^2}\int_{\gamma_{\tau_1}}d\zeta\int_{\gamma_{\tau_2}}d\omega
\frac{\zeta^{n_1-k}}{\omega^{n_1+1-\ell}}\left(\frac 1{\zeta-\omega}-h\right)+
\frac 1{(2\pi i)^2}\int_{\gamma_{\tau_1}}d\zeta\int_{\gamma_{\tau_2}}d\omega
\frac{G_{\ell-1,\mu_1,\xi_1}(\omega)}{G_{k,\mu_1,\xi_1}(\zeta)(\omega-\zeta)}.
\end{equation}
Since $\tau_2<\tau_1$, the $\zeta$-integral in the second integral in (\ref{3.37}) is $=0$. The first integral in (\ref{3.37}) equals
$\delta_{\ell, k}1(\ell\le n_1)-h\delta_{k,n_1+1}\delta_{\ell,n_1}$. Combined with (\ref{3.35}) this gives,
\begin{equation}\label{3.38}
A_h(\ell,k)=-h\delta_{k,n_1+1}\delta_{\ell,n_1}+a_{h,1}(\ell,k)-a_{h,2}(\ell,k)-a_{h,3}(\ell,k),
\end{equation}
where
\begin{equation}\label{3.39}
a_{h,1}(\ell,k)=\frac 1{(2\pi i)^4}\int_{\Gamma_{D_1}}dz\int_{\Gamma_{D_2}}dw\int_{\gamma_{\tau_1}}d\zeta\int_{\gamma_{\tau_2}}d\omega
\frac{G_{n_1,\mu_1,\xi_1}(z)G_{\Delta n,\Delta\mu,\Delta\xi}(w)\left(\frac 1{z-w}-h\right)}{G_{k,\mu_1,\xi_1}(\zeta)G_{n_2+1-\ell,\Delta\mu,\Delta\xi}(\omega)
(z-\zeta)(w-\omega)}
\end{equation}
\begin{equation}\label{3.40}
a_{h,2}(\ell,k)=\frac 1{(2\pi i)^4}\int_{\gamma_{r_2}}dz\int_{\Gamma_{D_2}}dw\int_{\gamma_{\tau_1}}d\zeta\int_{\gamma_{\tau_2}}d\omega
\frac{G_{n_1,\mu_1,\xi_1}(z)G_{\Delta n,\Delta\mu,\Delta\xi}(w)\left(\frac 1{z-w}-h\right)}{G_{k,\mu_1,\xi_1}(\zeta)G_{n_2+1-\ell,\Delta\mu,\Delta\xi}(\omega)
(z-\zeta)(w-\omega)}
\end{equation}
and
\begin{equation}\label{3.41}
a_{h,3}(\ell,k)=\frac 1{(2\pi i)^4}\int_{\Gamma_{D_1}}dz\int_{\gamma_{r_1}}dw\int_{\gamma_{\tau_1}}d\zeta\int_{\gamma_{\tau_2}}d\omega
\frac{G_{n_1,\mu_1,\xi_1}(z)G_{\Delta n,\Delta\mu,\Delta\xi}(w)\left(\frac 1{z-w}-h\right)}{G_{k,\mu_1,\xi_1}(\zeta)G_{n_2+1-\ell,\Delta\mu,\Delta\xi}(\omega)
(z-\zeta)(w-\omega)}.
\end{equation}
We see that $a_{0,1}(\ell,k)$ in (\ref{3.39}) agrees with  (\ref{3.24}). Also
\begin{equation}\label{3.43}
a_{0,2}(\ell,k)=\frac 1{(2\pi i)^4}\int_{\gamma_{r_2}}dz\int_{\Gamma_{D_2}}dw\int_{\gamma_{\tau_1}}d\zeta\int_{\gamma_{\tau_2}}d\omega
\frac{G_{n_1,\mu_1,\xi_1}(z)G_{\Delta n,\Delta\mu,\Delta\xi}(w)}{G_{k,\mu_1,\xi_1}(\zeta)G_{n_2+1-\ell,\Delta\mu,\Delta\xi}(\omega)
(z-w)(z-\zeta)(w-\omega)}.
\end{equation}
The $z$-integral in (\ref{3.43}) has its only pole in $z=\zeta$ and hence
\begin{equation}
a_{0,2}(\ell,k)=\frac 1{(2\pi i)^3}\int_{\Gamma_{D_2}}dw\int_{\gamma_{\tau_1}}d\zeta\int_{\gamma_{\tau_2}}d\omega
\frac{G_{\Delta n,\Delta\mu,\Delta\xi}(w)}{\zeta^{k-n_1}G_{n_2+1-\ell,\Delta\mu,\Delta\xi}(\omega)
(\zeta-w)(w-\omega)}.
\notag
\end{equation}
The $\zeta$-integral is $=0$ unless $k>n_1$, and if $k>n_1$ the $\zeta$-integral has $\zeta=w$ as its only pole outside $\gamma_{\tau_1}$.
Thus,
\begin{equation}
a_{0,2}(\ell,k)=-\frac {1(k>n_1)}{(2\pi i)^2}\int_{\Gamma_{D_2}}dw\int_{\gamma_{\tau_2}}d\omega
\frac{G_{n_2-k,\Delta\mu,\Delta\xi}(w)}{G_{n_2+1-\ell,\Delta\mu,\Delta\xi}(\omega)(w-\omega)}
=-1(k>n_1)c_2(\ell,k).
\notag
\end{equation}
Similarly, we can show that
\begin{equation}
a_{0,3}(\ell,k)=\frac {1(\ell\le n_1)}{(2\pi i)^2}\int_{\Gamma_{D_1}}dz\int_{\gamma_{\tau_1}}d\zeta
\frac{G_{\ell-1,\mu_1,\xi_1}(z)}{G_{k,\mu_1,\xi_1}(\zeta)(z-\zeta)}=1(\ell\le n_1)c_3(\ell,k).
\notag
\end{equation}
This proves (\ref{3.30}).
Now,
\begin{align}\label{3.44}
&\left.\frac{\partial}{\partial h}\right|_{h=0}a_{h,1}(\ell,k)
\\
&=-\left(\frac 1{(2\pi i)^2}\int_{\Gamma_{D_1}}dz\int_{\gamma_{\tau_1}}d\zeta\frac{G_{n_1,\mu_1,\xi_1}(z)}{G_{k,\mu_1,\xi_1}(\zeta)(z-\zeta)}\right)
\left(\frac 1{(2\pi i)^2}\int_{\Gamma_{D_2}}dw\int_{\gamma_{\tau_2}}d\omega\frac{G_{\Delta n,\Delta\mu,\Delta\xi}(w)}{G_{n_2+1-\ell,\Delta\mu,\Delta\xi}(\omega)
(w-\omega)}\right)
\notag\\
&=-c_3(n_1+1,k)c_2(\ell,n_1)=-a_3^\ast(k)a_2^\ast(\ell).
\notag
\end{align}
Similarly
\begin{align}\label{3.45}
&\left.\frac{\partial}{\partial h}\right|_{h=0}a_{h,2}(\ell,k)
\\
&=-\left(\frac 1{(2\pi i)^2}\int_{\gamma_{r_2}}dz\int_{\gamma_{\tau_1}}d\zeta\frac{G_{n_1,\mu_1,\xi_1}(z)}{G_{k,\mu_1,\xi_1}(\zeta)(z-\zeta)}\right)
\left(\frac 1{(2\pi i)^2}\int_{\Gamma_{D_2}}dw\int_{\gamma_{\tau_2}}d\omega\frac{G_{\Delta n,\Delta\mu,\Delta\xi}(w)}{G_{n_2+1-\ell,\Delta\mu,\Delta\xi}(\omega)
(w-\omega)}\right)
\notag\\
&=-\left(\frac 1{2\pi i}\int_{\gamma_{\tau_1}}\zeta^{n_1-k}d\zeta\right)c_2(\ell,n_1)==-\delta_{k,n_1+1}a_2^\ast(\ell),
\notag
\end{align}
and
\begin{equation}\label{3.46}
\left.\frac{\partial}{\partial h}\right|_{h=0}a_{h,3}(\ell,k)=-\delta_{\ell,n_1}a_3^\ast(k).
\end{equation}
If we use (\ref{3.44}) - (\ref{3.46}) in (\ref{3.38}) we see that we have proved (\ref{3.32}).

Consider next $B(\ell,k)$. In the integral in the right side
of (\ref{3.9}) we deform  $\Gamma_{d_2}$ to $\Gamma_{D_2}$ and $-\gamma_{r_1}$, and then
$\Gamma_{d_4}$ to $\Gamma_{D_1}$ and $-\gamma_{r_2}$, which can be done without passing any poles. We obtain
\begin{align}\label{3.47}
&\delta_{\ell k}1(\ell> n_1)+B(\ell,k)
=\frac 1{(2\pi i)^4}\left(-\int_{\Gamma_{D_2}}dz\int_{\Gamma_{D_1}}dw+\int_{\Gamma_{D_2}}dz\int_{\gamma_{r_2}}dw
+\int_{\gamma_{r_1}}dz\int_{\Gamma_{D_1}}dw\right)
\\&\times
\int_{\gamma_{\tau_1}}d\zeta\int_{\gamma_{\tau_2}}d\omega
\frac{G_{n_1+1,\mu_1,\xi_1}(z)G_{\Delta n-1,\Delta\mu,\Delta\xi}(w)}{G_{k,\mu_1,\xi_1}(\zeta)G_{n_2+1-\ell,\Delta\mu,\Delta\xi}(\omega)}
\frac 1{(z-w)(z-\zeta)(w-\omega)}
\notag\\
&-\frac 1{(2\pi i)^4}\int_{\gamma_{r_1}}dz\int_{\gamma_{r_2}}dw\int_{\gamma_{\tau_1}}d\zeta\int_{\gamma_{\tau_2}}d\omega
\frac{G_{n_1+1,\mu_1,\xi_1}(z)G_{\Delta n-1,\Delta\mu,\Delta\xi}(w)}{G_{k,\mu_1,\xi_1}(\zeta)G_{n_2+1-\ell,\Delta\mu,\Delta\xi}(\omega)}
\frac 1{(z-w)(z-\zeta)(w-\omega)}.
\notag
\end{align}
Consider the last integral in (\ref{3.47}). The $z$-integral has its only pole at $z=\zeta$ and hence it equals
\begin{equation}\label{3.48}
\frac 1{(2\pi i)^3}\int_{\gamma_{r_2}}dw\int_{\gamma_{\tau_1}}d\zeta\int_{\gamma_{\tau_2}}d\omega
\frac{G_{\Delta n-1,\Delta\mu,\Delta\xi}(w)}{\zeta^{k-(n_1+1)}G_{n_2+1-\ell,\Delta\mu,\Delta\xi}(\omega)(w-\omega)(w-\zeta)}.
\end{equation}
The $w$-integral has poles at $w=\omega$ and $w=\zeta$ and consequently (\ref{3.48}) equals
\begin{equation}\label{3.49}
\frac 1{(2\pi i)^2}\int_{\gamma_{\tau_1}}d\zeta \int_{\gamma_{\tau_2}}d\omega
\frac{\zeta^{n_1+1-k}\omega^{\ell-(n_1+2)}}{\omega-\zeta}+
\frac 1{(2\pi i)^2}\int_{\gamma_{\tau_1}}d\zeta\int_{\gamma_{\tau_2}}d\omega
\frac{G_{n_2-k,\Delta\mu,\Delta\xi}(\zeta)}{G_{n_2+1-\ell,\Delta\mu,\Delta\xi}(\omega)(\zeta-\omega)}.
\end{equation}
The first integral in (\ref{3.49}) equals $-\delta_{\ell,k}1(\ell\le n_1+1)$ and in the second one the $\zeta$-integral
has its only pole at $\zeta=\omega$ and hence equals $\delta_{\ell,k}$. Thus, the integral in (\ref{3.49}) equals 
$\delta_{\ell,k}1(\ell>n_1+1)$ and we see from (\ref{3.47}) that
\begin{align}
B(\ell,k)
&=-\delta_{k,n_1+1}\delta_{\ell,n_1+1}+
\frac 1{(2\pi i)^4}\left(-\int_{\Gamma_{D_2}}dz\int_{\Gamma_{D_1}}dw+\int_{\Gamma_{D_2}}dz\int_{\gamma_{r_2}}dw
+\int_{\gamma_{r_1}}dz\int_{\Gamma_{D_1}}dw\right)
\notag\\
&\times\int_{\gamma_{\tau_1}}d\zeta\int_{\gamma_{\tau_2}}d\omega
\frac{G_{n_1+1,\mu_1,\xi_1}(z)G_{\Delta n-1,\Delta\mu,\Delta\xi}(w)}{G_{k,\mu_1,\xi_1}(\zeta)G_{n_2+1-\ell,\Delta\mu,\Delta\xi}(\omega)}
\frac 1{(z-w)(z-\zeta)(w-\omega)}.
\notag
\end{align}
This leads to the formula (\ref{3.31}) by using an argument that is analogous to how we proved (\ref{3.30}).
\end{proof}

Before we can carry out the asymptotic analysis of the expression for $Q(h)$ in (\ref{3.11}) we have to rewrite it further. Define
\begin{equation}\label{3.492}
\tilde{a}_0(\ell,n_1)=a_0(\ell,n_1)+a_2^\ast(\ell)=
a_{0,1}(\ell,n_1)+c_2(\ell,n_1)-1(\ell\le n_1)c_3(\ell,n_1).
\end{equation}
Set
\begin{equation}\label{3.493}
V(\mathbf{c},\mathbf{c'},\mathbf{d},\mathbf{d'})=
\left( \begin{matrix}
b(\mathbf{c},\mathbf{c}) &b(\mathbf{c},\mathbf{c'}) &b(\mathbf{c},n_1) &b(\mathbf{c},\mathbf{d}) &b(\mathbf{c},\mathbf{d'})
\\
a_0(\mathbf{c'},\mathbf{c}) &a_0(\mathbf{c'},\mathbf{c'}) &\tilde{a}_0(\mathbf{c'},n_1) &a_0(\mathbf{c'},\mathbf{d}) &a_0(\mathbf{c'},\mathbf{d'})
\\
b(n_1+1,\mathbf{c}) &b(n_1+1,\mathbf{c'}) &b(n_1+1,n_1) &b(n_1+1,\mathbf{d}) &b(n_1+1,\mathbf{d'})
\\
a_0(\mathbf{d},\mathbf{c}) &a_0(\mathbf{d},\mathbf{c'}) &\tilde{a}_0(\mathbf{d},n_1) &a_0(\mathbf{d},\mathbf{d}) &a_0(\mathbf{d},\mathbf{d'})
\\
b(\mathbf{d'},\mathbf{c}) &b(\mathbf{d'},\mathbf{c'}) &b(\mathbf{d'},n_1) &b(\mathbf{d'},\mathbf{d}) &b(\mathbf{d'},\mathbf{d'})
\end{matrix}\right),
\end{equation}
\begin{equation}\label{3.494}
U(\mathbf{c},\mathbf{c'},\mathbf{d},\mathbf{d'})=
\left( \begin{matrix}
b(\mathbf{c},\mathbf{c}) &b(\mathbf{c},\mathbf{c'}) &b(\mathbf{c},n_1) &b(\mathbf{c},\mathbf{d}) &b(\mathbf{c},\mathbf{d'})
\\
a_0(\mathbf{c'},\mathbf{c}) &a_0(\mathbf{c'},\mathbf{c'}) &\tilde{a}_0(\mathbf{c'},n_1) &a_0(\mathbf{c'},\mathbf{d}) &a_0(\mathbf{c'},\mathbf{d'})
\\
a_0(n_1+1,\mathbf{c}) &a_0(n_1+1,\mathbf{c'}) &\tilde{a}_0(n_1+1,n_1) &a_0(n_1+1,\mathbf{d}) &a_0(n_1+1,\mathbf{d'})
\\
a_0(\mathbf{d},\mathbf{c}) &a_0(\mathbf{d},\mathbf{c'}) &\tilde{a}_0(\mathbf{d},n_1) &a_0(\mathbf{d},\mathbf{d}) &a_0(\mathbf{d},\mathbf{d'})
\\
b(\mathbf{d'},\mathbf{c}) &b(\mathbf{d'},\mathbf{c'}) &b(\mathbf{d'},n_1) &b(\mathbf{d'},\mathbf{d}) &b(\mathbf{d'},\mathbf{d'})
\end{matrix}\right),
\end{equation}
and
\begin{equation}\label{3.50}
M_h(\mathbf{c},\mathbf{c'},\mathbf{d},\mathbf{d'})=
\left( \begin{matrix}
b(\mathbf{c},\mathbf{c}) &b(\mathbf{c},\mathbf{c'}) &b(\mathbf{c},\mathbf{d}) &b(\mathbf{c},\mathbf{d'})
\\
A_h(\mathbf{c'},\mathbf{c}) &A_h(\mathbf{c'},\mathbf{c'}) &A_h(\mathbf{c'},\mathbf{d}) &A_h(\mathbf{c'},\mathbf{d'})
\\
A_h(\mathbf{d},\mathbf{c}) &A_h(\mathbf{d},\mathbf{c'}) &A_h(\mathbf{d},\mathbf{d}) &A_h(\mathbf{d},\mathbf{d'})
\\
b(\mathbf{d'},\mathbf{c}) &b(\mathbf{d'},\mathbf{c'}) &b(\mathbf{d'},\mathbf{d}) &b(\mathbf{d'},\mathbf{d'})
\end{matrix}\right),
\end{equation}
Define
\begin{equation}\label{3.501}
Q'_1(0)=\sum\limits_{r=0}^{\min(n_1,\Delta n)}\sum\limits_{s=0}^{n_1}\sum\limits_{t=0}^{\Delta n-1}
\sum\limits_{\substack{\mathbf{c}\in [1,n_1]_<^r\\\mathbf{c'}\in [1,n_1]_<^s}}\sum\limits_{\substack{\mathbf{d}\in [n_1+2,n_2]_<^r\\\mathbf{d'}\in [n_1+2,n_2]_<^t}}
\det V(\mathbf{c},\mathbf{c'},\mathbf{d},\mathbf{d'})
\end{equation}
and
\begin{equation}\label{3.502}
Q'_2(0)=\sum\limits_{r=1}^{\min(n_1,\Delta n)}\sum\limits_{s=0}^{n_1}\sum\limits_{t=0}^{\Delta n-1}
\sum\limits_{\substack{\mathbf{c}\in [1,n_1]_<^r\\\mathbf{c'}\in [1,n_1]_<^s}}\sum\limits_{\substack{\mathbf{d}\in [n_1+2,n_2]_<^{r-1}\\\mathbf{d'}\in [n_1+2,n_2]_<^t}}
\det U(\mathbf{c},\mathbf{c'},\mathbf{d},\mathbf{d'}).
\end{equation}
If $A$ is an $n\times n$-matrix and $1\le i,j\le n$, we let $A(\{i\}',\{j\}')$ denote the matrix $A$ with row $i$ and column $j$ removed.
Set (recall $L=2r+s+t$)
\begin{equation}\label{3.503}
Q'_3(0)=\sum\limits_{r=0}^{\min(n_1,\Delta n)}\sum\limits_{s=1}^{n_1}\sum\limits_{t=1}^{\Delta n}
\sum\limits_{\substack{\mathbf{c}\in [1,n_1]_<^r\\\mathbf{c'}\in [1,n_1]_<^s\\c'_s=n_1}}\sum\limits_{\substack{\mathbf{d}\in [n_1+2,n_2]_<^{r}\\\mathbf{d'}\in [n_1+1,n_2]_<^t\\d'_1=n_1+1}}
\sum\limits_{j=1}^L(-1)^{r+s+j}a_3^\ast(f_j)\det M_0(\{r+s\}',\{j\}'),
\end{equation}
where we use the notation
\begin{equation}\label{3.504}
f_j=\begin{cases}
c_j   & \text{if }1\le j\le r
\\
  c'_{j-r}     & \text{if } r<j\le r+s
\\
d_{j-r-s}       & \text{if } r+s<j\le 2r+s
\\
 d'_{j-2r-s}      & \text{if } 2r+s<j\le L
  \end{cases}.
\end{equation}
Also, set
\begin{equation}\label{3.505}
Q'_4(0)=\sum\limits_{r=0}^{\min(n_1,\Delta n)}\sum\limits_{\substack{s=0\\r+s\ge 1}}^{n_1}\sum\limits_{t=1}^{\Delta n}
\sum\limits_{\substack{\mathbf{c}\in [1,n_1]_<^r\\\mathbf{c'}\in [1,n_1]_<^s}}\sum\limits_{\substack{\mathbf{d}\in [n_1+2,n_2]_<^{r}\\\mathbf{d'}\in [n_1+1,n_2]_<^t\\d'_1=n_1+1}}
\sum\limits_{i=r+1}^{2r+s}\sum\limits_{j=1}^L(-1)^{i+j+1}a_2^\ast(f_i)a_3^\ast(f_j)\det M_0(\{i\}',\{j\}').
\end{equation}
Similarly, we define
\begin{equation}\label{3.506}
Q'_5(0)=\sum\limits_{r=1}^{\min(n_1,\Delta n)}\sum\limits_{s=1}^{n_1}\sum\limits_{t=0}^{\Delta n}
\sum\limits_{\substack{\mathbf{c}\in [1,n_1]_<^r\\\mathbf{c'}\in [1,n_1]_<^s\\c'_s=n_1}}\sum\limits_{\substack{\mathbf{d}\in [n_1+1,n_2]_<^{r}\\\mathbf{d'}\in [n_1+2,n_2]_<^t\\d_1=n_1+1}}
\sum\limits_{j=1}^L(-1)^{r+s+j}a_3^\ast(f_j)\det M_0(\{r+s\}',\{j\}'),
\end{equation}
and
\begin{equation}\label{3.506'}
Q'_6(0)=\sum\limits_{r=1}^{\min(n_1,\Delta n)}\sum\limits_{s=0}^{n_1}\sum\limits_{t=0}^{\Delta n}
\sum\limits_{\substack{\mathbf{c}\in [1,n_1]_<^r\\\mathbf{c'}\in [1,n_1]_<^s}}\sum\limits_{\substack{\mathbf{d}\in [n_1+1,n_2]_<^{r}\\\mathbf{d'}\in [n_1+2,n_2]_<^t\\d_1=n_1+1}}
\sum\limits_{i=r+1}^{2r+s}\sum\limits_{j=1}^L(-1)^{i+j}a_2^\ast(f_i)a_3^\ast(f_j)\det M_0(\{i\}',\{j\}').
\end{equation}
Note that the expressions $Q_1'(0)$ and $Q_2'(0)$ have a very similar structure, and this is true also for
$Q_3'(0)$ and $Q_5'(0)$, as well as for $Q_4'(0)$ and $Q_6'(0)$.

In section \ref{sect4} we will compute the asymptotics of $Q'_k(0)$, $1\le k\le 6$, which is all we need because of the next lemma
and proposition \ref{prop2.4}.

\begin{lemma}\label{lem3.4} We have the formula
\begin{equation}\label{3.507}
\left.\frac{\partial}{\partial h}\right|_{h=0} Q(h)=\sum\limits_{k=1}^6 Q'_k(0),
\end{equation}
with $Q'_k(0)$ as defined above.
\end{lemma}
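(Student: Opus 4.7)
The plan is to differentiate the expansion (\ref{3.11}) for $Q(h)$ term by term at $h=0$, using Lemma \ref{lem3.3}, and then reorganize the resulting sums into the six families $Q_1'(0),\ldots,Q_6'(0)$. First I would substitute, inside each determinant in (\ref{3.11}),
\[
A_h(\ell,k) = a_0(\ell,k) + h\,A_0^\ast(\ell,k) + O(h^2), \qquad B(\ell,k) = -\delta_{\ell,n_1+1}\delta_{k,n_1+1} + b(\ell,k),
\]
where by (\ref{3.29c}), $A_0^\ast(\ell,k) = -u(\ell)\,v(k)$ with $u(\ell)=\delta_{\ell,n_1}-a_2^\ast(\ell)$ and $v(k)=\delta_{k,n_1+1}-a_3^\ast(k)$. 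The key observation is that since the $B$-rows (indexed by $\mathbf{c}$ and $\mathbf{d'}$) do not depend on $h$, the derivative of the $L\times L$ matrix $M_h$ at $h=0$ is the rank-one matrix $-\tilde u\,v^T$, where $\tilde u$ equals $u$ on the $A$-row positions (indexed by $\mathbf{c'}\cup\mathbf{d}$) and vanishes elsewhere. By the standard bordered-determinant identity
\[
\det\!\begin{pmatrix} M_0 & \tilde u \\ v^T & 0 \end{pmatrix} \;=\; -\,v^T\,\mathrm{adj}(M_0)\,\tilde u,
\]
the differentiated determinant can be written as a single $(L+1)\times(L+1)$ bordered determinant.

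The next step is to expand this bordered determinant by multilinearity simultaneously in (i) the bordering column $\tilde u=\delta_{\cdot,n_1}-a_2^\ast$ (on $A$-rows), (ii) the bordering row $v=\delta_{\cdot,n_1+1}-a_3^\ast$, and (iii) each $B$-row in the interior of $M_0$ via $B=-\delta_{(n_1+1,n_1+1)}+b$. In each of these three places the ``$\delta$-branch'' is non-trivial only when the relevant special index sits in the appropriate tuple: $\delta_{\cdot,n_1}$ in $\tilde u$ requires $n_1\in\mathbf{c'}$, forcing $c'_s=n_1$; $\delta_{\cdot,n_1+1}$ in $v$ requires $n_1+1\in\mathbf{d}\cup\mathbf{d'}$, forcing $d_1=n_1+1$ or $d'_1=n_1+1$; and the $-\delta\delta$ part of $B$ requires both row and column index $n_1+1$, forcing $d'_1=n_1+1$ together with some $k=n_1+1$ column. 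When a delta-branch is chosen the associated row or column collapses via Laplace expansion, producing cofactor signs $(-1)^{i+j}$, and after relabeling one obtains a smaller determinant with precisely the structure of $V$, $U$, or $M_0(\{i\}',\{j\}')$ in (\ref{3.493})--(\ref{3.50}). In particular, combining the $-a_2^\ast$-branch in $\tilde u$ at the bordering column with the surviving $n_1$-column of $a_0(\cdot,n_1)$-entries from $M_0$ produces exactly the function $\tilde a_0(\ell,n_1)=a_0(\ell,n_1)+a_2^\ast(\ell)$ of (\ref{3.492}).

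The six sums $Q_k'(0)$ are then identified with the six combinations arising from these branches. Specifically, $Q_1'$ and $Q_2'$ collect the ``both-delta'' terms (both $\delta$-choices active in $\tilde u$ and $v$), the difference being whether the extra row for the index $n_1+1$ sits in a $B$-position (the $V$-matrix of (\ref{3.493}), with range $\mathbf{d}\in[n_1+2,n_2]_<^r$, produced when the $-\delta\delta$ from $B$ activates) or in an $A$-position (the $U$-matrix of (\ref{3.494}), with range $\mathbf{d}\in[n_1+2,n_2]_<^{r-1}$ and $d_1=n_1+1$ forced from $\delta_{\cdot,n_1+1}$ in $v$, without the $-\delta\delta$ activation). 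The sums $Q_3',Q_5'$ collect the mixed $\delta_{\cdot,n_1}$-$a_3^\ast$ terms with the two placements of the $n_1+1$-row (constrained respectively by $d'_1=n_1+1$ and $d_1=n_1+1$), and $Q_4',Q_6'$ collect the pure $a_2^\ast a_3^\ast$ terms, again split by the two placements.

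The main obstacle will be the sign and index bookkeeping. Concretely, one must verify (a) that the Laplace-cofactor signs $(-1)^{r+s+j}$ and $(-1)^{i+j+1}$ in (\ref{3.503})--(\ref{3.506'}) correctly arise from the row/column removals associated to the $\delta$-selections and the bordering, (b) that the $\delta$-constraints force precisely $c'_s=n_1$, $d_1=n_1+1$, and $d'_1=n_1+1$ as the extremal elements of the respective ordered tuples, and (c) that the dichotomy ``$-\delta\delta$ from $B$ active'' versus ``inactive'' matches the $B$-type versus $A$-type assignment of the extra $n_1+1$-slot in the six target determinants. Once this bookkeeping is in place, identity (\ref{3.507}) follows by matching terms one by one.
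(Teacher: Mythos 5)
Your overall strategy is close in spirit to the paper's proof: both differentiate (\ref{3.11}) term by term, use the rank-one structure of $\partial_h M_h|_{h=0}$ coming from $A_0^\ast(\ell,k)=-u(\ell)v(k)$, and then reorganize. The bordered-determinant packaging is a legitimate and arguably more elegant way to encode the cofactor sums. The paper instead first simplifies $Q(h)$ as a function of $h$ \emph{before} differentiating, writing $Q=q_0+q_1+q_2$ according to whether $d_1,d_1'\ge n_1+2$, $d_1'=n_1+1$, or $d_1=n_1+1$, then expanding $q_1=q_3-q_4$ by linearity in the $d_1'$-row, and observing the key cancellation $q_0-q_4=0$ (which holds because, after removing the $d_1'=n_1+1$ row/column in $q_4$ and shifting $t\mapsto t-1$, the difference $q_0-q_4$ is supported on $\mathbf{d'}\in[n_1+2,n_2]_<^{\Delta n}=\emptyset$). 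Only then does the paper differentiate and expand the product $(\delta_{f_i,n_1}-a_2^\ast)(\delta_{f_j,n_1+1}-a_3^\ast)$.

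There are two genuine gaps in your proposal that prevent the ``matching terms one by one'' from closing.

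First, the cancellation $q_0-q_4=0$ is nowhere in your argument, but it is essential. When you expand by multilinearity in the $B$-row at $d_1'=n_1+1$, selecting the $-\delta_{\cdot,n_1+1}$ branch and Laplace-expanding removes that row and column and leaves (up to a sign and a shift of $t$) exactly the determinants you would have obtained with $\mathbf{d},\mathbf{d'}\subseteq[n_1+2,n_2]$ and the all-$b$ row; these two families cancel, and only this cancellation makes the number of surviving branches match the six targets. If you skip it, you will be left with the entire $\mathbf{d},\mathbf{d'}\subseteq[n_1+2,n_2]$ family unaccounted for.

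Second, your assignment of branches to $V$ and $U$ is reversed and internally inconsistent. You write that the $V$-matrix underlying $Q_1'$ is ``produced when the $-\delta\delta$ from $B$ activates.'' In fact $V$ (whose $(n_1{+}1)$-row is $b(n_1{+}1,\cdot)$) arises precisely from the $b$-branch, i.e., when the $-\delta\delta$ is \emph{not} selected; the $-\delta\delta$-selected contributions are the ones that cancel as just described. Relatedly, you claim $Q_1',Q_2'$ are the ``both-delta'' terms, yet the paper's $Q_1'$ equals $q_{3,1}'+q_{3,3}'$: the $\delta_{f_i,n_1}\delta_{f_j,n_1+1}$ branch together with the $\delta_{f_j,n_1+1}a_2^\ast(f_i)$ branch, merged through $\tilde a_0=a_0+a_2^\ast$ in (\ref{3.492}). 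You actually mention the $\tilde a_0$ merger in the preceding paragraph, which contradicts calling $Q_1',Q_2'$ ``both-delta.'' To repair the proposal you would need to (a) carry out the $B$-row multilinearity before the delta-vs-$a_2^\ast,a_3^\ast$ split, identify the $q_0-q_4=0$ cancellation, and then (b) recognize that the surviving branches are exactly $q_2'(0)$ ($d_1=n_1+1$, the $U$-family) and $q_3'(0)$ ($d_1'=n_1+1$ with the $b$-row, the $V$-family), each of which is further split into the four $\delta/a_2^\ast\times\delta/a_3^\ast$ branches and regrouped as in the paper.
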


\begin{proof}
From (\ref{3.31}) we see that $B(\ell,k)=b(\ell,k)$ unless $k=\ell=n_1+1$ in which case $B(n_1+1,n_1+1)=-1+b(n_1+1,n_1+1)$. This case can 
occur in the formula (\ref{3.11}) for $Q(h)$ if and only if $d'_1=n_1+1$, which requires $t\ge 1$. Let $E_{2r+s+1}$ be the matrix which is zero everywhere
except at position $(2r+s+1,2r+s+1)$ where it is $=1$. In the sum in (\ref{3.11}) we can assume that $d_1\neq d'_1$ since otherwise the 
determinant is $=0$. Hence, by (\ref{3.11}) we can write
\begin{equation}\label{3.51}
Q(h)=q_0(h)+q_1(h)+q_2(h),
\end{equation}
where
\begin{equation}\label{3.52}
q_0(h)=\sum\limits_{r=0}^{\min(n_1,\Delta n)}\sum\limits_{s=0}^{n_1}\sum\limits_{t=0}^{\Delta n}
\sum\limits_{\substack{\mathbf{c}\in [1,n_1]_<^r\\\mathbf{c'}\in [1,n_1]_<^s}}\sum\limits_{\substack{\mathbf{d}\in [n_1+2,n_2]_<^r\\\mathbf{d'}\in [n_1+2,n_2]_<^t\\d_1\neq d'_1}}
\det M_h,
\end{equation}
where $M_h$ is given by (\ref{3.50}),
\begin{equation}\label{3.53}
q_1(h)=\sum\limits_{r=0}^{\min(n_1,\Delta n)}\sum\limits_{s=0}^{n_1}\sum\limits_{t=1}^{\Delta n}
\sum\limits_{\substack{\mathbf{c}\in [1,n_1]_<^r\\\mathbf{c'}\in [1,n_1]_<^s}}\sum\limits_{\substack{\mathbf{d}\in [n_1+2,n_2]_<^r\\\mathbf{d'}\in [n_1+1,n_2]_<^t\\d'_1=n_1+1}}
\det\left(-E_{2r+s+1}+M_h\right),
\end{equation}
and
\begin{equation}\label{3.54}
q_2(h)=\sum\limits_{r=1}^{\min(n_1,\Delta n)}\sum\limits_{s=0}^{n_1}\sum\limits_{t=0}^{\Delta n}
\sum\limits_{\substack{\mathbf{c}\in [1,n_1]_<^r\\\mathbf{c'}\in [1,n_1]_<^s}}\sum\limits_{\substack{\mathbf{d}\in [n_1+1,n_2]_<^r\\\mathbf{d'}\in [n_1+2,n_2]_<^t\\d_1=n_1+1}}
\det M_h.
\end{equation}
We see from (\ref{3.53}) that
\begin{align}\label{3.55}
q_1(h)&=\sum\limits_{r=0}^{\min(n_1,\Delta n)}\sum\limits_{s=0}^{n_1}\sum\limits_{t=1}^{\Delta n}
\sum\limits_{\substack{\mathbf{c}\in [1,n_1]_<^r\\\mathbf{c'}\in [1,n_1]_<^s}}
\sum\limits_{\substack{\mathbf{d}\in [n_1+2,n_2]_<^r\\\mathbf{d'}\in [n_1+1,n_2]_<^t\\d'_1=n_1+1}}
\det M_h
\\
&-\sum\limits_{r=0}^{\min(n_1,\Delta n)}\sum\limits_{s=0}^{n_1}\sum\limits_{t=0}^{\Delta n-1}
\sum\limits_{\substack{\mathbf{c}\in [1,n_1]_<^r\\\mathbf{c'}\in [1,n_1]_<^s}}
\sum\limits_{\substack{\mathbf{d}\in [n_1+2,n_2]_<^r\\\mathbf{d'}\in [n_1+2,n_2]_<^t}}
\det M_h:=q_3(h)-q_4(h).
\notag
\end{align}
Note that
\begin{equation}
q_0(h)-q_4(h)=\sum\limits_{r=0}^{\min(n_1,\Delta n)}\sum\limits_{s=0}^{n_1}
\sum\limits_{\substack{\mathbf{c}\in [1,n_1]_<^r\\\mathbf{c'}\in [1,n_1]_<^s}}
\sum\limits_{\substack{\mathbf{d}\in [n_1+2,n_2]_<^r\\\mathbf{d'}\in [n_1+2,n_2]_<^{\Delta n}}}\det M_h=0
\notag
\end{equation}
since $[n_1+2,n_2]_<^{\Delta n}=\emptyset$. Thus, by (\ref{3.51}),
\begin{equation}\label{3.56}
Q(h)=q_2(h)+q_3(h).
\end{equation}

If $A$ is a matrix and $\mathbf{v}$ a row vector, $(A|\mathbf{v})_{\text{row\,}(i)}$ will denote the matrix obtained by replacing row $i$ in $A$
with $\mathbf{v}$. Similarly, if  $\mathbf{v}$ is a column vector, 
$(A|\mathbf{v})_{\text{col\,}(j)}$ will denote the matrix obtained by replacing column $j$ in $A$
with $\mathbf{v}$.
Let
\begin{equation}\label{3.57}
\mathbf{v}_i=\left(\begin{matrix} A_0^\ast(f_i,\mathbf{c}) &A_0^\ast(f_i,\mathbf{c'}) &A_0^\ast(f_i,\mathbf{d}) &A_0^\ast(f_i,\mathbf{d'})
\end{matrix}\right),
\end{equation}
where $A_0^\ast$ is given by (\ref{3.29}); recall  (\ref{3.32}). We see then that
\begin{align}\label{3.58}
q_3'(0)&=\left.\frac{\partial}{\partial h}\right|_{h=0} q_3(h)
\\
&=\sum\limits_{r=0}^{\min(n_1,\Delta n)}\sum\limits_{\substack{s=0\\r+s\ge 1}}^{n_1}\sum\limits_{t=1}^{\Delta n}
\sum\limits_{\substack{\mathbf{c}\in [1,n_1]_<^r\\\mathbf{c'}\in [1,n_1]_<^s}}
\sum\limits_{\substack{\mathbf{d}\in [n_1+2,n_2]_<^r\\\mathbf{d'}\in [n_1+1,n_2]_<^t\\d'_1=n_1+1}}
\sum\limits_{i=r+1}^{2r+s}\det (M_0|\mathbf{v}_i)_{\text{row\,}(i)}.
\notag
\end{align}
We have to have $r+s\ge 1$ to get a non-zero contribution when taking the $h$-derivative. Similarly,
\begin{align}\label{3.60}
q_2'(0)&=\left.\frac{\partial}{\partial h}\right|_{h=0} q_2(h)
\\
&=\sum\limits_{r=1}^{\min(n_1,\Delta n)}\sum\limits_{s=0}^{n_1}\sum\limits_{t=1}^{\Delta n}
\sum\limits_{\substack{\mathbf{c}\in [1,n_1]_<^r\\\mathbf{c'}\in [1,n_1]_<^s}}
\sum\limits_{\substack{\mathbf{d}\in [n_1+1,n_2]_<^r\\\mathbf{d'}\in [n_1+2,n_2]_<^t\\d_1=n_1+1}}
\sum\limits_{i=r+1}^{2r+s}\det (M_0|\mathbf{v}_i)_{\text{row\,}(i)}.
\notag
\end{align}
Expand the determinant in (\ref{3.58}) along row $i$. This gives
\begin{align}\label{3.61}
q_3'(0)=&\sum\limits_{r=0}^{\min(n_1,\Delta n)}\sum\limits_{\substack{s=0\\r+s\ge 1}}^{n_1}\sum\limits_{t=1}^{\Delta n}
\sum\limits_{\substack{\mathbf{c}\in [1,n_1]_<^r\\\mathbf{c'}\in [1,n_1]_<^s}}
\sum\limits_{\substack{\mathbf{d}\in [n_1+2,n_2]_<^r\\\mathbf{d'}\in [n_1+1,n_2]_<^t\\d'_1=n_1+1}}
\sum\limits_{i=r+1}^{2r+s}\sum\limits_{j=1}^{L}(-1)^{i+j+1}
\\
&\times\left(\delta_{f_i,n_1}-a_2^\ast(f_i)\right)\left(\delta_{f_j,n_1+1}-a_3^\ast(f_j)\right)\det M_0(\{i\}',\{j\}'),
\notag
\end{align}
where we have used (\ref{3.29c}) and (\ref{3.57}). Now,
$$
\left(\delta_{f_i,n_1}-a_2^\ast(f_i)\right)\left(\delta_{f_j,n_1+1}-a_3^\ast(f_j)\right)=
\delta_{f_i,n_1}\delta_{f_j,n_1+1}-\delta_{f_i,n_1}a_3^\ast(f_j)-\delta_{f_j,n_1+1}a_2^\ast(f_i)+a_2^\ast(f_i)a_3^\ast(f_j)
$$
leads to a corresponding decomposition
\begin{equation}\label{3.62}
q_3'(0)=q_{3,1}'(0)+q_{3,2}'(0)+q_{3,3}'(0)+q_{3,4}'(0).
\end{equation}
We will now show that $Q_1'(0)=q_{3,1}'(0)+q_{3,3}'(0)$, $Q_3'(0)=q_{3,2}'(0)$ and $Q_4'(0)=q_{3,4}'(0)$. A similar argument for (\ref{3.60}) will give $Q_2'(0)+Q_5'(0)+Q_6'(0)$.

The term $\delta_{f_i,n_1}\delta_{f_j,n_1+1}$ requires $j=2r+s+1$ and $f_{2r+s+1}=d'_1=n_1+1$, and $i=r+s$ and $f_{r+s}=c'_{s}=n_1$. Hence,
$s\ge 1$ and we obtain
\begin{equation}\label{3.63}
q_{3,1}'(0)=\sum\limits_{r=0}^{\min(n_1,\Delta n)}\sum\limits_{s=1}^{n_1}\sum\limits_{t=1}^{\Delta n}
\sum\limits_{\substack{\mathbf{c}\in [1,n_1]_<^r\\\mathbf{c'}\in [1,n_1]_<^s\\c'_s=n_1}}
\sum\limits_{\substack{\mathbf{d}\in [n_1+2,n_2]_<^r\\\mathbf{d'}\in [n_1+1,n_2]_<^t\\d'_1=n_1+1}}(-1)^r\det M_0(\{r+s\}',\{2r+s+1\}').
\end{equation}
The term $-\delta_{f_i,n_1}a_3^\ast(f_j)$ requires $i=r+s$, and gives
\begin{equation}\label{3.64}
q_{3,2}'(0)=\sum\limits_{r=0}^{\min(n_1,\Delta n)}\sum\limits_{s=1}^{n_1}\sum\limits_{t=1}^{\Delta n}
\sum\limits_{\substack{\mathbf{c}\in [1,n_1]_<^r\\\mathbf{c'}\in [1,n_1]_<^s\\c'_s=n_1}}
\sum\limits_{\substack{\mathbf{d}\in [n_1+2,n_2]_<^r\\\mathbf{d'}\in [n_1+1,n_2]_<^t\\d'_1=n_1+1}}
\sum\limits_{j=1}^{L}(-1)^{r+s+j}a_3^\ast(f_j)\det M_0(\{r+s\}',\{j\}'),
\end{equation}
which is equal to $Q_3'(0)$ as defined by (\ref{3.503}). The term $-\delta_{f_j,n_1+1}a_2^\ast(f_i)$ requires $j=2r+s+1$, which gives
\begin{equation}
q_{3,3}'(0)=\sum\limits_{r=0}^{\min(n_1,\Delta n)}\sum\limits_{\substack{s=0\\r+s\ge 1}}^{n_1}\sum\limits_{t=1}^{\Delta n}
\sum\limits_{\substack{\mathbf{c}\in [1,n_1]_<^r\\\mathbf{c'}\in [1,n_1]_<^s}}
\sum\limits_{\substack{\mathbf{d}\in [n_1+2,n_2]_<^r\\\mathbf{d'}\in [n_1+1,n_2]_<^t\\d'_1=n_1+1}}
\sum\limits_{i=r+1}^{2r+s}(-1)^{i+2r+s+1}a_2^\ast(f_i)\det M_0(\{i\}',\{2r+s+1\}').
\notag\end{equation}
If we write
\begin{equation}\label{3.65}
\mathbf{a}_2^\ast=\left(\begin{matrix}
0 \\
a_2^\ast(\mathbf{c'}) \\
a_2^\ast(\mathbf{d}) \\
0
\end{matrix}\right),
\end{equation}
where the blocks have length $r,s,r$ and $t$ respectively, we see that
\begin{equation}\label{3.67}
q_{3,3}'(0)=\sum\limits_{r=0}^{\min(n_1,\Delta n)}\sum\limits_{\substack{s=0\\r+s\ge 1}}^{n_1}\sum\limits_{t=1}^{\Delta n}
\sum\limits_{\substack{\mathbf{c}\in [1,n_1]_<^r\\\mathbf{c'}\in [1,n_1]_<^s}}
\sum\limits_{\substack{\mathbf{d}\in [n_1+2,n_2]_<^r\\\mathbf{d'}\in [n_1+1,n_2]_<^t\\d'_1=n_1+1}}
\det (M_0|\mathbf{a}_2^\ast)_{\text{col\,}(2r+s+1)}.
\end{equation}
Finally, we get
\begin{equation}
q_{3,4}'(0)=\sum\limits_{r=0}^{\min(n_1,\Delta n)}\sum\limits_{\substack{s=0\\r+s\ge 1}}^{n_1}\sum\limits_{t=1}^{\Delta n}
\sum\limits_{\substack{\mathbf{c}\in [1,n_1]_<^r\\\mathbf{c'}\in [1,n_1]_<^s}}
\sum\limits_{\substack{\mathbf{d}\in [n_1+2,n_2]_<^r\\\mathbf{d'}\in [n_1+1,n_2]_<^t\\d'_1=n_1+1}}
\sum\limits_{i=r+1}^{2r+s}\sum\limits_{j=1}^{L}(-1)^{i+j+1}a_2^\ast(f_i)a_3^\ast(f_j)\det M_0(\{i\}',\{j\}'),
\notag\end{equation}
which is $Q_4'(0)$. We can now split (\ref{3.60}) in the same way,
\begin{equation}\label{3.69}
q_2'(0)=q_{2,1}'(0)+q_{2,2}'(0)+q_{2,3}'(0)+q_{2,4}'(0),
\end{equation}
where
\begin{equation}\label{3.70}
q_{2,1}'(0)=\sum\limits_{r=1}^{\min(n_1,\Delta n)}\sum\limits_{s=1}^{n_1}\sum\limits_{t=0}^{\Delta n}
\sum\limits_{\substack{\mathbf{c}\in [1,n_1]_<^r\\\mathbf{c'}\in [1,n_1]_<^s\\c'_s=n_1}}
\sum\limits_{\substack{\mathbf{d}\in [n_1+1,n_2]_<^r\\\mathbf{d'}\in [n_1+2,n_2]_<^t\\d_1=n_1+1}}\det M_0(\{r+s\}',\{2r+s+1\}'),
\end{equation}
$q_{2,2}'(0)=Q_5'(0)$, with $Q_5'(0)$ given by (\ref{3.506}),
\begin{equation}\label{3.72}
q_{2,3}'(0)=\sum\limits_{r=1}^{\min(n_1,\Delta n)}\sum\limits_{s=0}^{n_1}\sum\limits_{t=0}^{\Delta n}
\sum\limits_{\substack{\mathbf{c}\in [1,n_1]_<^r\\\mathbf{c'}\in [1,n_1]_<^s}}
\sum\limits_{\substack{\mathbf{d}\in [n_1+1,n_2]_<^r\\\mathbf{d'}\in [n_1+2,n_2]_<^t\\d_1=n_1+1}}
\det (M_0|\mathbf{a}_2^\ast)_{\text{col\,}(r+s+1)},
\end{equation}
and, with $Q_6'(0)$ given by (\ref{3.506'}), $q_{2,4}'(0)=Q_6'(0)$.

From (\ref{3.56}), (\ref{3.62}) and  (\ref{3.69}) we see that
\begin{equation}\label{3.74}
\left.\frac{\partial}{\partial h}\right|_{h=0}Q(h)
= q_{3,1}'(0)+q_{3,3}'(0)+q_{2,1}'(0)+q_{2,3}'(0)+\sum\limits_{k=3}^6Q_k'(0).
\end{equation}
In order to prove the lemma it remains to show that
\begin{equation}\label{3.75}
Q_1'(0)=q_{3,1}'(0)+q_{3,3}'(0)\,\,,\,\,Q_2'(0)=q_{2,1}'(0)+q_{2,3}'(0).
\end{equation}
In the expression (\ref{3.63}) for $q_{3,1}'(0)$ we move row $2r+s+1$ to row $r+s+1$. This gives a sign change $(-1)^r$. 
We then shift the $s$-and $t$-summations by 1, using the fact that $d'_1=n_1+1$ and $c'_s=n_1$ are fixed. This gives
\begin{align}\label{3.81}
q_{3,1}'(0)=&\sum\limits_{r=0}^{\min(n_1,\Delta n)}\sum\limits_{s=0}^{n_1-1}\sum\limits_{t=0}^{\Delta n-1}
\sum\limits_{\substack{\mathbf{c}\in [1,n_1]_<^r\\\mathbf{c'}\in [1,n_1-1]_<^s}}
\sum\limits_{\substack{\mathbf{d}\in [n_1+2,n_2]_<^r\\\mathbf{d'}\in [n_1+2,n_2]_<^t}}
\\
&\det\left( \begin{matrix}
b(\mathbf{c},\mathbf{c}) &b(\mathbf{c},\mathbf{c'}) &b(\mathbf{c},n_1) &b(\mathbf{c},\mathbf{d}) &b(\mathbf{c},\mathbf{d'})
\\
a_0(\mathbf{c'},\mathbf{c}) &a_0(\mathbf{c'},\mathbf{c'}) &a_0(\mathbf{c'},n_1) &a_0(\mathbf{c'},\mathbf{d}) &a_0(\mathbf{c'},\mathbf{d'})
\\
b(n_1+1,\mathbf{c}) &b(n_1+1,\mathbf{c'}) &b(n_1+1,n_1) &b(n_1+1,\mathbf{d}) &b(n_1+1,\mathbf{d'})
\\
a_0(\mathbf{d},\mathbf{c}) &a_0(\mathbf{d},\mathbf{c'}) &a_0(\mathbf{d},n_1) &a_0(\mathbf{d},\mathbf{d}) &a_0(\mathbf{d},\mathbf{d'})
\\
b(\mathbf{d'},\mathbf{c}) &b(\mathbf{d'},\mathbf{c'}) &b(\mathbf{d'},n_1) &b(\mathbf{d'},\mathbf{d}) &b(\mathbf{d'},\mathbf{d'}).
\end{matrix}\right)
\notag
\end{align}
In the expression (\ref{3.67}) for $q_{3,3}'(0)$ we move row $2r+s+1$ to row $r+s+1$ and column $2r+s+1$ to column $r+s+1$. This
gives no net sign change. Note that if $r+s=0$ then $\mathbf{a}_2^\ast=0$ so we can remove the condition $r+s\ge 1$ in the summation
in (\ref{3.67}). Also, we shift the $t$-summation by 1. We obtain
\begin{align}\label{3.82}
q_{3,3}'(0)=&\sum\limits_{r=0}^{\min(n_1,\Delta n)}\sum\limits_{s=0}^{n_1}\sum\limits_{t=0}^{\Delta n-1}
\sum\limits_{\substack{\mathbf{c}\in [1,n_1]_<^r\\\mathbf{c'}\in [1,n_1]_<^s}}
\sum\limits_{\substack{\mathbf{d}\in [n_1+2,n_2]_<^r\\\mathbf{d'}\in [n_1+2,n_2]_<^t}}
\\
&\det\left( \begin{matrix}
b(\mathbf{c},\mathbf{c}) &b(\mathbf{c},\mathbf{c'}) &0    &b(\mathbf{c},\mathbf{d}) &b(\mathbf{c},\mathbf{d'})
\\
a_0(\mathbf{c'},\mathbf{c}) &a_0(\mathbf{c'},\mathbf{c'}) &a_2^\ast(\mathbf{c'}) &a_0(\mathbf{c'},\mathbf{d}) &a_0(\mathbf{c'},\mathbf{d'})
\\
b(n_1+1,\mathbf{c}) &b(n_1+1,\mathbf{c'}) &0    &b(n_1+1,\mathbf{d}) &b(n_1+1,\mathbf{d'})
\\
a_0(\mathbf{d},\mathbf{c}) &a_0(\mathbf{d},\mathbf{c'}) &a_2^\ast(\mathbf{d})   &a_0(\mathbf{d},\mathbf{d}) &a_0(\mathbf{d},\mathbf{d'})
\\
b(\mathbf{d'},\mathbf{c}) &b(\mathbf{d'},\mathbf{c'}) &0     &b(\mathbf{d'},\mathbf{d}) &b(\mathbf{d'},\mathbf{d'})
\end{matrix}\right).
\notag
\end{align}
Note that the $\mathbf{c'}$-summation in (\ref{3.81}) can be extended to  $[1,n_1]_<^s$, since if $c'_s=n_1$, then two columns in the
determinant in (\ref{3.81}) are equal. In fact, if $s\ge 1$ so that the sum is non-trivial, then extending the summation to $\mathbf{c'}\in[1,n_1]_<^s$ means that we also have the case
$c_s'=n_1$, and in this case the columns $r+s$ and $r+s+1$ are equal.
Also, we can extend the $s-$summation to $s=n_1$ since in that case we must have $c'_s=n_1$. We can thus add the two formulas 
(\ref{3.81}) and (\ref{3.82}) and this gives the first formula in (\ref{3.75}) with $\tilde{a}_0(\ell, n_1)=a_0((\ell, n_1)+a_2^\ast(\ell)$, 
which agrees with (\ref{3.492}). The proof of the second formula in (\ref{3.75}) is analogous.

\end{proof}

\section{Asymptotics and proof of the main theorem}\label{sect4}

We begin by recalling some notation from section \ref{sect1}, (\ref{1.8}). Let $\lambda_i=\eta_i-\nu_i^2$, $i=1,2$ and write
$$
\Delta\lambda=\lambda_2\left(\frac{t_2}{\Delta t}\right)^{1/3}-\lambda_1\left(\frac{t_1}{\Delta t}\right)^{1/3}.
$$
Then,
\begin{equation}\label{4.1}
\Delta\eta=\Delta\lambda+\Delta\nu^2,
\end{equation}
where $\Delta\nu$ is given by (\ref{1.8}). We will write
\begin{equation}\label{4.2}
N_1=t_1M\,\,,\,\, N_2=\Delta t M,
\end{equation}
where we will let $M\to\infty$ as in theorem \ref{thm1.1}. The scalings in  (\ref{scaling}) and in the arguments $\ell, k$ can then be written
\begin{align}\label{4.3}
n_1&=N_1+\nu_1N_1^{2/3}\,,\,n_2=t_2M+\nu_2(t_2M)^{2/3}\,,\,\Delta n=n_2-n_1=N_2+\Delta\nu N_2^{2/3}
\\
\mu_1&=N_1-\nu_1N_1^{2/3}\,,\,\mu_2=t_2M-\nu_2(t_2M)^{2/3}\,,\,\Delta \mu=\mu_2-\mu_1=N_2-\Delta\nu N_2^{2/3}
\notag\\
\xi_1&=2N_1+\lambda_1N_1^{1/3}\,,\,\xi_2=2t_2M+\lambda_2(t_2M)^{1/3}\,,\,\Delta \xi=\xi_2-\xi_1=2N_2+\Delta\lambda N_2^{2/3}
\notag\\
\ell&=n_1+1+xN_1^{1/3}\,,\,k=n_1+yN_1^{1/3},
\notag
\end{align}
where we have ignored integer parts.

We will now state two lemmas that we will need in order to prove theorem \ref{thm1.1} from proposition \ref{prop3.2} and lemma \ref{lem3.4}. The proofs
of the lemmas is postponed to section \ref{sect6}.

\begin{lemma}\label{lem4.1}
Recall (\ref{3.24}) to (\ref{3.27}). Under the scalings (\ref{4.3}) with $N_1,N_2$ given by (\ref{4.2}) we have the following limits, uniformly for 
$\nu_i, \eta_i,x,y$ in compact sets,
\begin{equation}\label{4.4}
\lim_{M\to\infty} N_1^{1/3}a_{0,1}(\ell,k)=\phi_1(x,y),
\end{equation}
\begin{equation}\label{4.5}
\lim_{M\to\infty} N_1^{1/3}b_1(\ell,k)=\psi_1(x,y),
\end{equation}
\begin{equation}\label{4.6}
\lim_{M\to\infty} N_1^{1/3}c_2(\ell,k)=\phi_2(x,y),
\end{equation}
\begin{equation}\label{4.7}
\lim_{M\to\infty} N_1^{1/3}c_3(\ell,k)=\phi_3(x,y),
\end{equation}
where $\phi_i,\psi_1$ are given by (\ref{1.10}) to  (\ref{1.13}).
\end{lemma}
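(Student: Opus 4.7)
The strategy is classical steepest-descent analysis around the double critical point of $G_{n,\mu,\xi}(z) = z^n e^{\mu z^2/2 - \xi z}$. Setting $\partial_z \log G = n/z + \mu z - \xi = 0$, the two critical points coalesce at $z=1$ under the scaling (\ref{4.3}), and the Taylor expansion
\[
\log G_{n,\mu,\xi}(1+u) - \log G_{n,\mu,\xi}(1) = (n+\mu-\xi)\,u + \tfrac{1}{2}(\mu-n)\,u^2 + \tfrac{n}{3}\,u^3 + O(u^4)
\]
becomes an Airy cubic after $u = \tilde z/N^{1/3}$, where $N = N_1$ for parameters $(n_1,\mu_1,\xi_1)$ or $(k,\mu_1,\xi_1)$ and $N = N_2$ for $(\Delta n,\Delta\mu,\Delta\xi)$ or $(n_2+1-\ell,\Delta\mu,\Delta\xi)$. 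A subsequent shift $\tilde z = \tilde z' + \nu$ (with $\nu = \nu_1$ or $\Delta\nu$) eliminates the quadratic term and puts the exponent into the canonical form $\tilde z'^3/3 - a\tilde z' + \mathrm{const}$, where $a$ is $\eta_1 - x$ or $\eta_1 - y$ in the $z$/$\zeta$-variables and $\Delta\eta + \alpha x$ or $\Delta\eta + \alpha y$ in the $w$/$\omega$-variables, using $\Delta\eta = \Delta\lambda + \Delta\nu^2$ from (\ref{4.1}). The constants, combined across numerator and denominator ratios, produce the exponential prefactors in the $\phi_i$.

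For $c_3$ and $c_2$ I would deform the contours $\Gamma_D$ and $\gamma_\tau$ to pass through $z=1$ along the steepest-descent rays (at angles $\pm\pi/3$), apply the substitution above, and use that the $z=1$ constants cancel in the $G/G$ ratio. The Jacobian contributes $N^{-2/3}$ and the pole factor $(z-\zeta)^{-1}$ contributes $N^{1/3}$, giving the overall rate $N^{-1/3}$. The limiting integral matches the standard double-contour representation
\[
K_{\Ai}(a,b) = \frac{1}{(2\pi i)^2}\int d\tilde z'\int d\tilde\zeta'\,\frac{e^{\tilde z'^3/3 - a\tilde z'}}{e^{\tilde\zeta'^3/3 - b\tilde\zeta'}(\tilde z' - \tilde\zeta')}
\]
on contours with $\re\tilde z' > \re\tilde\zeta'$. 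This gives $\phi_3(x,y) = e^{\nu_1(x-y)}K_{\Ai}(\eta_1-x,\eta_1-y)$ for $c_3$, and for $c_2$, where the natural scaling is $N_2^{1/3}$ while the arguments $x,y$ come from $\ell,k$ scaled by $N_1^{1/3}$, the ratio $N_1^{1/3}/N_2^{1/3} = \alpha$ produces the explicit $\alpha$ in $\phi_2$ together with the arguments $\Delta\eta + \alpha x,\Delta\eta + \alpha y$.

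For the four-dimensional $a_{0,1}$ and $b_1$ the key additional step is to decouple the $(z,\zeta)$ and $(w,\omega)$ blocks via
\[
\frac{1}{z-w} = -\int_0^\infty e^{-\tau(w-z)}\,d\tau\qquad(\re(w-z)>0),
\]
which applies to $a_{0,1}$ since $D_1 < D_2$; the analogous identity with positive sign applies to $b_1$, whose $z$- and $w$-contours are swapped. The factors $e^{\pm\tau z}$ and $e^{\mp\tau w}$ absorb into the $G$'s as shifts $\xi_1\mapsto\xi_1\mp\tau$ and $\Delta\xi\mapsto\Delta\xi\pm\tau$; after the rescaling $\tau = \hat\tau N_1^{1/3}$ these become the shifts $\eta_1\mapsto\eta_1\mp\hat\tau$ and $\Delta\eta\mapsto\Delta\eta\pm\alpha\hat\tau$ of the Airy kernel arguments. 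The growing factors $e^{\pm\tau}$ coming from the $G(1)$ evaluations cancel between the two blocks. Applying the $c_3$- and $c_2$-style analyses to the decoupled 2D integrals, collecting the residual exponentials $e^{\nu_1(\hat\tau-y)}$ and $e^{\alpha\Delta\nu(x-\hat\tau)}$ arising from the variable shifts, and combining the Jacobian $d\tau = N_1^{1/3}d\hat\tau$ with the two $N^{-2/3}$ Jacobians and two $N^{1/3}$ pole contributions, yields the rate $N_1^{-1/3}$ and the explicit formulas $\phi_1$ and $\psi_1$ after renaming $\hat\tau$ back to $\tau$; the sign differences between $\phi_1$ and $\psi_1$ originate from the opposite sign in the Laplace identity applied to $b_1$.

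The main technical obstacle will be justifying the interchange of $M\to\infty$ with the contour integrations (and, in the 4D case, with the outer $\tau$-integration). One must globally deform $\Gamma_{D_i}$ and $\gamma_{\tau_j}$ to steepest-descent contours so that $\re\log G$ decreases sufficiently fast outside a shrinking neighborhood of $z=1$, producing uniform tail bounds as $M\to\infty$ together with a $\tau$-integrable majorant for $\tau\to\infty$ (uniformly in the other parameters on compact sets). These are the standard but delicate saddle-point estimates postponed by the author to section \ref{sect6}.
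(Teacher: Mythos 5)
Your proposal is correct and follows essentially the same route as the paper's proof in Section~\ref{sect6}: rescale the contours around the double critical point $z=1$ with $z=1+uN^{-1/3}$ ($N=N_1$ or $N_2$ depending on the block), use the cubic Taylor expansion of $\log G_{n,\mu,\xi}$, bound the remainder via a steepest-descent type estimate (the paper's Lemma~\ref{lem6.1}), pass to the limit by dominated convergence, and convert the limiting double-contour integrals to Airy kernel form. The two cosmetic differences --- you decouple the $(z,\zeta)$ and $(w,\omega)$ blocks via a Laplace integral for $1/(z-w)$ before passing to the limit, while the paper takes the $M\to\infty$ limit first and applies the same identity (\ref{6.38}) afterwards; and you eliminate the quadratic term by shifting $\tilde z\mapsto\tilde z'+\nu$ whereas the paper keeps it and completes the cube using (\ref{6.36})--(\ref{6.37}) --- do not change the substance of the argument.
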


We will also need some estimates in order to control the convergence of the whole expansion.

\begin{lemma}\label{lem4.2}
Assume that we have the scalings (\ref{4.3}) with $N_1,N_2$ given by (\ref{4.2}). There are constants $c,C>0$, which depend on 
$t_i, \nu_i, \eta_i$, such that for all $M\ge 1$,
\begin{equation}\label{4.8}
\left|N_1^{1/3}a_{0,1}(\ell,k)\right|\le Ce^{-c(x_+^{3/2}+(-y)_+^{3/2})+C(y_++(-x)_+)},
\end{equation}
\begin{equation}\label{4.9}
\left|N_1^{1/3}b_1(\ell,k)\right|\le Ce^{-c(x_+^{3/2}+(-y)_+^{3/2})+C(y_++(-x)_+)},
\end{equation}
\begin{equation}\label{4.10}
\left|N_1^{1/3}c_2(\ell,k)\right|\le Ce^{-c(x_+^{3/2}+y_+^{3/2})+C((-y)_+ +(-x)_+)},
\end{equation}
\begin{equation}\label{4.11}
\left|N_1^{1/3}c_3(\ell,k)\right|\le Ce^{-c((-x)_+^{3/2}+(-y)_+^{3/2})+C(y_++x_+)},
\end{equation}
for all $1\le\ell,k\le n_2$, where $a_+=\max(0,a)$.
\end{lemma}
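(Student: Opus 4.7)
The plan is to prove the four estimates via a steepest-descent analysis of the defining contour integrals (\ref{3.24})--(\ref{3.27}). Under the scaling (\ref{4.3}), the exponent of each factor $G_{n,\mu,\xi}$ is of size $N_1$ (or $N_2=\alpha^{-3}N_1$) times the function $\log z + z^2/2 - 2z$, which has a triple-zero critical point at $z=1$. A direct expansion shows that setting $z = 1 + (\nu_1 + u)/N_1^{1/3}$ converts the $z$-exponent in $G_{n_1,\mu_1,\xi_1}(z)$ into $u^3/3 - \eta_1 u + \mathrm{const} + O(N_1^{-1/3})$, and setting $\zeta = 1 + (\nu_1 + v)/N_1^{1/3}$ converts the $\zeta$-exponent in $1/G_{k,\mu_1,\xi_1}(\zeta)$ into $-v^3/3 + (\eta_1 - y) v + \mathrm{const} + O(N_1^{-1/3})$; the analogous substitution $w = 1 + (\Delta\nu + \alpha u)/N_1^{1/3}$ for the $w,\omega$ variables yields cubic exponents $u^3/3 - \Delta\eta\, u$ and $-v^3/3 + (\Delta\eta + \alpha x)v$. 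This is of course consistent with the Airy-integral limits of Lemma \ref{lem4.1}.

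To obtain pointwise bounds valid for all $M\ge 1$ I would deform the contours $\Gamma_{D_j}$ and the small circles $\gamma_{\tau_i}$ to steepest-descent curves through the shifted saddles, going off to infinity along angles where $\re u^3\to -\infty$ (roughly $\pm\pi/3$ for $z,w$ and $\pm 2\pi/3$ for $\zeta,\omega$), arranged so that the $z$-contour stays outside the $\zeta$-contour and the $w$-contour outside the $\omega$-contour so that no pole of $1/(z-\zeta)$ or $1/(w-\omega)$ is crossed, and so that the $z$- and $w$-contours do not meet. These deformations are available uniformly for $\nu_i,\eta_i$ in compact sets once $D_j$ is large enough and $\tau_i$ is small enough. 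When $y\le 0$ one additionally shifts the $\zeta$-saddle in the positive real direction by an amount of order $(-y)^{1/2}$ (the standard Airy trick), producing an extra factor $e^{-c(-y)_+^{3/2}}$; in the opposite regime $y\ge 0$ the same choice of contour produces at most a linear loss $e^{Cy_+}$ coming from the linear $(\eta_1-y)v$ term in the exponent. The analogous saddle shifts for the remaining three variables yield the corresponding decay or linear growth in $x_+$, $(-x)_+$, $(-y)_+$ and $y_+$.

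For $a_{0,1}$ and $b_1$ all four variables are present, and the four Airy-type factors combine to the claimed $e^{-c(x_+^{3/2}+(-y)_+^{3/2})+C(y_++(-x)_+)}$. For $c_2$ only the pair $(w,\omega)$ survives, and the resulting two-variable saddle analysis gives decay in both $x$ and $y$ on the positive side, namely $e^{-c(x_+^{3/2}+y_+^{3/2})+C((-y)_++(-x)_+)}$; for $c_3$ only the pair $(z,\zeta)$ is present, which reverses the sign of $x$ and $y$ in the decay and produces the mirror bound $e^{-c((-x)_+^{3/2}+(-y)_+^{3/2})+C(y_++x_+)}$. On the near-saddle part of each contour the remaining rational factors $1/(z-w)$, $1/(z-\zeta)$, $1/(w-\omega)$ are uniformly $O(N_1^{1/3})$ after rescaling, which combined with the three Jacobian factors $N_1^{-1/3}$ from the change of variables in $w,\zeta,\omega$ and the overall $N_1^{1/3}$ prefactor reproduces the Airy integrand and gives a bounded contribution.

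The main obstacle is uniformity over the full range $1\le \ell,k\le n_2$, i.e.\ for $x,y$ ranging through $[-O(N_1^{2/3}),O(N_1^{2/3})]$, not just in the window $|x|,|y|=O(1)$ where the cubic saddle picture is standard. For $|x|$ or $|y|$ much larger than $N_1^{1/3}$ the cubic approximation of the exponent is no longer adequate, and one has to use the full exponent of $G_{n,\mu,\xi}$ on the original $\Gamma_D$, $\gamma_\tau$ contours together with crude absolute-value bounds on the rational factors; however the very rapid decay arising in that regime easily dominates the claimed bound. A uniform interpolation between the near- and far-from-saddle regimes, combined with the elementary inequality $\xi^{3/2}\le \xi$ for $\xi$ bounded, then yields Lemma \ref{lem4.2}; the details will be carried out in Section \ref{sect6}.
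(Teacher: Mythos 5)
The general philosophy is the same as the paper's --- saddle-point analysis with contours shifted by an amount proportional to $(-y)_+^{1/2}$, $x_+^{1/2}$ to produce the cubic decay --- but there are two genuine gaps in your execution.

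First, the proposed contour deformation for $\zeta,\omega$ does not converge. In $a_{0,1},b_1,c_2,c_3$ the factors $G_{k,\mu_1,\xi_1}(\zeta)=\zeta^k e^{\frac 12\mu_1\zeta^2-\xi_1\zeta}$ and $G_{n_2+1-\ell,\Delta\mu,\Delta\xi}(\omega)$ sit in the denominator, and the original contours $\gamma_{\tau}$ are bounded circles. If you try to deform these to unbounded rays at angles $\pm 2\pi/3$, the cubic approximation only controls the integrand in a window of size $O(N^{-1/3})$ around the saddle; further out, the exponent of $1/G$ is dominated by $-\frac 12\mu_1\zeta^2$, and on rays at $\pm 2\pi/3$ one has $\re\zeta^2<0$, so $e^{-\frac 12\mu_1\re\zeta^2}\to\infty$. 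The paper avoids this entirely by never opening up the circles: it keeps $\zeta,\omega$ on circles of radius $1-d_2N_1^{-1/3}$, $1-d_4N_2^{-1/3}$ and keeps $z,w$ on vertical lines $1+(d_j+it)N^{-1/3}$ (see (\ref{6.2})), where convergence is automatic.

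Second, and more fundamentally, you acknowledge the uniformity problem for $x,y$ up to $O(N_1^{2/3})$ but dismiss it with ``the very rapid decay \dots easily dominates the claimed bound.'' That is precisely where the hard work is, and it cannot be interpolated away: for $|x|,|y|\gg N_1^{1/3}$ the local cubic picture fails and one needs a \emph{global} estimate on the exponent along the chosen contours. The paper's Lemma~\ref{lem6.1} is exactly that tool: it establishes, for the explicit line/circle contours of (\ref{6.2}), the quadratic bounds
$g_1(t_1;x)-g_1(0;x)\le-\tfrac{\Delta_1}{20}t_1^2$ and $h_1(s_1;x)-h_1(0;x)\ge\tfrac{\Delta_2}{20}s_1^2$
for all $t_1\in\mathbb{R}$, $s_1\in I_1$ and all admissible $d_i$, provided only that the $\Delta_i$ defined in (\ref{6.3'}) are $\ge 1$ and the $d_i$ stay below $\tfrac 12 N^{1/3}$. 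This is the structural replacement for the non-uniform cubic steepest-descent picture. With it, the proof of Lemma~\ref{lem4.2} reduces to bounding $g_1(0;0)-h_1(0;y)+g_2(0;0)-h_2(0;x)$ via the elementary inequalities (\ref{6.28}), (\ref{6.29}), and then optimizing the offsets $d_2 = k_2 + \delta_2(-y)_+^{1/2}$, $d_4 = k_4 + \delta_4 x_+^{1/2}$ in (\ref{6.34}). The explicit quantities $\Delta_1,\ldots,\Delta_4$ encode exactly the $x,y$-dependence that allows these choices to stay admissible for the full range $1\le\ell,k\le n_2$. Your proposal would need something equivalent to Lemma~\ref{lem6.1} to close the argument, and writing that down is the substance of the proof.

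A smaller issue: you say the $\zeta$-saddle should be shifted ``in the positive real direction by $(-y)^{1/2}$,'' but the effective $\zeta$-exponent in $1/G$ is $-\tfrac 13 v^3 + (\eta_1-y)v$ (in your notation), whose saddle for $\eta_1-y>0$ is at $v=-\sqrt{\eta_1-y}$; the paper correspondingly shrinks the radius, i.e.\ moves $\re\zeta$ in the \emph{negative} direction by increasing $d_2$. The sign is not fatal but indicates the contour geometry has not been worked out.
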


As an immediate corollary of this lemma and the definitions (\ref{3.28}), (\ref{3.29}) and (\ref{3.492}), we obtain

\begin{corollary}\label{cor4.3}
Assume that we have the scalings (\ref{4.3}) with $N_1,N_2$ given by (\ref{4.2}). There are constants $c,C>0$, which depend on 
$t_i, \nu_i, \eta_i$, such that for all $M\ge 1$, and all $1\le\ell,k\le n_2$,
\begin{subequations}\label{4.12}
\begin{align}
\left|N_1^{1/3}a_{0}(\ell,k)\right|&\le Ce^{-c(x_+^{3/2}+(-y)_+^{3/2})+C(y_++(-x)_+)},
\\
\left|N_1^{1/3}b(\ell,k)\right|&\le Ce^{-c(x_+^{3/2}+(-y)_+^{3/2})+C(y_++(-x)_+)},
\\
\left|N_1^{1/3}\tilde{a}_{0}(\ell,n_1)\right|&\le Ce^{-cx_+^{3/2}+C(-x)_+},
\\
\left|N_1^{1/3}a_2^\ast(\ell)\right|&\le Ce^{-cx_+^{3/2}+C(-x)_+},
\\
\left|N_1^{1/3}a_3^\ast(k)\right|&\le Ce^{-c(-y)_+^{3/2}+Cy_+}.
\end{align}
\end{subequations}
\end{corollary}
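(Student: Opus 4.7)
The plan is to combine Lemma \ref{lem4.2} with the decompositions (\ref{3.28}), (\ref{3.29}), (\ref{3.492}) via the triangle inequality, using the elementary observation
\[
e^{-ct^{3/2}} \le C' e^{C't} \qquad (t \ge 0)
\]
to convert unwanted cubic decay in one variable into the linear growth that the target bound (\ref{4.12}) permits. The role of the indicator functions $1(k>n_1)$, $1(\ell \le n_1)$, etc.\ appearing in (\ref{3.28}) is precisely to restrict the relevant variable to the half-line on which this one-sided conversion is valid.

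Concretely, for $a_0 = a_{0,1} - a_{0,2} - a_{0,3}$, the term $a_{0,1}$ already satisfies the required bound by (\ref{4.8}). In $a_{0,2} = -1(k>n_1)c_2$, the indicator forces $y$ to be non-negative (up to an $O(N_1^{-1/3})$ correction), so $(-y)_+=0$ and Lemma \ref{lem4.2} gives $|c_2| \le C e^{-c(x_+^{3/2}+y^{3/2})+C(-x)_+}$; applying $e^{-cy^{3/2}} \le Ce^{Cy}$ on $y\ge 0$ turns this into the form $Ce^{-cx_+^{3/2}+C(y_+ + (-x)_+)}$, which is bounded by the right-hand side of (\ref{4.12}). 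Symmetrically, in $a_{0,3} = 1(\ell\le n_1)c_3$ the indicator gives $x<0$, so $x_+=0$, and converting $e^{-c(-x)^{3/2}} \le Ce^{C(-x)}$ in the bound (\ref{4.11}) again produces the target form. The same argument applied to $b = -b_1 + b_2 + b_3$ yields (\ref{4.12}b), with only a trivial shift of the indicator thresholds from $n_1$ to $n_1+1$, which is absorbed into the constants.

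For the three quantities with a fixed index, specialize directly. With $k=n_1$ (i.e.\ $y=0$) in (\ref{4.10}) one obtains immediately $|N_1^{1/3}a_2^\ast(\ell)|\le Ce^{-cx_+^{3/2}+C(-x)_+}$; with $\ell=n_1+1$ (i.e.\ $x=0$) in (\ref{4.11}) one obtains $|N_1^{1/3}a_3^\ast(k)|\le Ce^{-c(-y)_+^{3/2}+Cy_+}$. For $\tilde{a}_0(\ell,n_1) = a_{0,1}(\ell,n_1)+c_2(\ell,n_1)-1(\ell\le n_1)c_3(\ell,n_1)$, set $y=0$ throughout: the three summands are estimated respectively by (\ref{4.8}), (\ref{4.10}) and (by the same indicator trick as above) (\ref{4.11}), each giving $Ce^{-cx_+^{3/2}+C(-x)_+}$.

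There is no genuine obstacle here; the corollary is essentially a bookkeeping consequence of Lemma \ref{lem4.2}, packaged in the form that will actually be used in the determinant estimates of Section \ref{sect4}. The only point requiring any care is that the indicator functions in (\ref{3.28}) must be used \emph{before} applying the one-sided inequality $e^{-ct^{3/2}}\le Ce^{Ct}$, since the latter is false on the opposite half-line.
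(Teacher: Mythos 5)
Your proof is correct and matches the paper's intent: the paper states the corollary is "an immediate corollary" of Lemma~\ref{lem4.2} and the definitions (\ref{3.28}), (\ref{3.29}), (\ref{3.492}), and your argument (triangle inequality, using the indicators to fix the sign of $x$ or $y$, then discarding the now-superfluous cubic decay — note $e^{-ct^{3/2}}\le 1\le e^{Ct}$ for $t\ge 0$ makes your "conversion" trivial) is exactly the bookkeeping the author is alluding to.
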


 Recall the formula (\ref{3.507}) in lemma \ref{lem3.4}. We want to control the terms $Q_k'(0)$ asymptotically as $M\to\infty$.
\begin{lemma}\label{lem4.4}
We have the following limits.
\begin{equation}\label{4.14}
\lim_{M\to\infty} N_1^{1/3}Q_k'(0)=0
\end{equation}
for $3\le k\le 6$,
\begin{align}\label{4.16}
&\Psi^{(1)}(\eta_1,\eta_2):=\lim_{M\to\infty} N_1^{1/3}Q_1'(0)
\\
&=\sum\limits_{r,s,t=0}^\infty\frac 1{(r!)^2s!t!}\int\limits_{(-\infty,0]^r}d^rx\int\limits_{(-\infty,0]^s}d^sx'
\int\limits_{[0,\infty)^r}d^ry\int\limits_{[0,\infty)^t}d^ty' 
W_{r,s,r,t}^{(1)}(\mathbf{x},\mathbf{x'},\mathbf{y},\mathbf{y'}),
\notag
\end{align}
where $W_{r,s,r,t}^{(1)}$ is given by (\ref{1.16}), and
\begin{align}\label{4.17}
&\Psi^{(2)}(\eta_1,\eta_2):=\lim_{M\to\infty} N_1^{1/3}Q_2'(0)
\\
&=\sum\limits_{r=1,s,t=0}^\infty\frac 1{r!s!(r-1)!t!}\int\limits_{(-\infty,0]^r}d^rx\int\limits_{(-\infty,0]^s}d^sx'
\int\limits_{[0,\infty)^{r-1}}d^{r-1}y\int\limits_{[0,\infty)^t}d^ty' 
W_{r,s,r-1,t}^{(2)}(\mathbf{x},\mathbf{x'},\mathbf{y},\mathbf{y'}),
\notag
\end{align}
where $W_{r,s,r,t}^{(2)}$ is given by (\ref{1.17}).
\end{lemma}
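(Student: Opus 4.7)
The plan is to combine the pointwise convergence statements of Lemma \ref{lem4.1} with the uniform exponential estimates of Corollary \ref{cor4.3} in a Riemann-sum-to-integral argument, justified termwise by dominated convergence using Hadamard's inequality on the determinants.

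First, I perform the change of variables $\ell = n_1+1+xN_1^{1/3}$, $k = n_1+yN_1^{1/3}$ in every summation index $\mathbf{c},\mathbf{c'},\mathbf{d},\mathbf{d'}$ appearing in $Q_k'(0)$. Free sums over $[1,n_1]$ become Riemann sums on $(-\infty,0]$ and sums over $[n_1+2,n_2]$ become Riemann sums on $[0,\infty)$, each contributing a factor $N_1^{1/3}$ to balance the mesh size $N_1^{-1/3}$. The entries of the scaled matrices converge pointwise by Lemma \ref{lem4.1} combined with (\ref{3.28})--(\ref{3.29}) and (\ref{3.492}): namely $N_1^{1/3}a_0(\ell,k)\to\phi(x,y)$ and $N_1^{1/3}b(\ell,k)\to\psi(x,y)$ using the indicator decompositions in (\ref{1.13}), (\ref{1.14}); the middle-column entries $N_1^{1/3}\tilde a_0(\ell,n_1)\to\phi(x,0)$ (the $+a_2^\ast$ correction converts $1(y\ge 0)$ into equality at $y=0$); the middle-row entries $N_1^{1/3}b(n_1+1,k)\to\psi(0,y)$; and the boundary factors $N_1^{1/3}a_2^\ast(\ell)\to\phi_2(x,0)$, $N_1^{1/3}a_3^\ast(k)\to\phi_3(0,y)$. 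With these identifications the block matrix $V$ in (\ref{3.493}) matches row-by-row and column-by-column the block structure (\ref{1.16}) of $W^{(1)}_{r,s,r,t}$, and similarly $U$ in (\ref{3.494}) matches (\ref{1.17}) for $W^{(2)}_{r,s,r-1,t}$.

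Second, a power count in $N_1^{1/3}$ organizes which terms survive. For $Q_1'(0)$ the determinant $\det V$ is $(2r+s+t+1)\times(2r+s+t+1)$, scaling like $N_1^{-(2r+s+t+1)/3}$; the $2r+s+t$ free summations contribute $N_1^{(2r+s+t)/3}$; multiplied by the global $N_1^{1/3}$ prefactor, the product is $O(1)$, and by dominated convergence the sum converges to the integral in (\ref{4.16}). The same count applies to $Q_2'(0)\to\Psi^{(2)}$. For $Q_3',\ldots,Q_6'$ the determinants $\det M_0(\{i\}',\{j\}')$ have size $2r+s+t-1$ contributing $N_1^{-(2r+s+t-1)/3}$, but each fixed index ($c_s'=n_1$, $d_1'=n_1+1$, or $d_1=n_1+1$) removes one free summation and each factor $a_2^\ast$ or $a_3^\ast$ brings an extra $N_1^{-1/3}$. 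The bookkeeping in each case gives a net factor $N_1^{-1/3}\to 0$, proving (\ref{4.14}); the $\sum_j$ over the row/column index is harmless since $L=2r+s+t$ is finite for each term.

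Third, the passage of the limit through the infinite sums in $r,s,t$ uses Hadamard's inequality $|\det V|\le\prod_i\|V_{i,\cdot}\|_2$ together with Corollary \ref{cor4.3}: each row or column of the rescaled matrix has $L^2$ norm bounded (after Riemann-sum normalization) by a constant depending on the row type times $\exp\bigl(-c x_+^{3/2}\bigr)$ or $\exp\bigl(-c(-x)_+^{3/2}\bigr)$ depending on whether the index is in the $\mathbf{c},\mathbf{c'}$ block (decay as $x\to-\infty$) or in the $\mathbf{d},\mathbf{d'}$ block (decay as $x\to+\infty$). This produces a summable majorant $C^{L}\prod_i e^{-cx_i^{3/2}}$ for the determinant, and the factorial denominators $1/(r!)^2 s!t!$ (or $1/(r!(r-1)!s!t!)$ for $Q_2'$) then absorb the resulting $C^L$ as in a standard Fredholm expansion, giving absolute convergence uniform in $M$.

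The main obstacle is precisely this last step, i.e. verifying that the Gaussian-type decay extracted from Corollary \ref{cor4.3} survives Hadamard bounding of matrices of unbounded size and combines with the factorials to give a dominant which is both integrable in $(\mathbf{x},\mathbf{x'},\mathbf{y},\mathbf{y'})$ and summable in $(r,s,t)$ uniformly in $M$. Once this uniform bound is in place, the pointwise convergence of every determinant entry propagates to the determinants, and further to the Riemann sum limit, delivering (\ref{4.14}), (\ref{4.16}), (\ref{4.17}) simultaneously.
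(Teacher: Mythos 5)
Your overall plan matches the paper's: Riemann-sum normalization, pointwise convergence of the entries via Lemma \ref{lem4.1}, a power count in $N_1^{-1/3}$ to make $Q_3',\dots,Q_6'$ vanish, and a Hadamard-plus-dominated-convergence argument for $Q_1',Q_2'$. The power count and the identification of $V\leftrightarrow W^{(1)}$ and $U\leftrightarrow W^{(2)}$ (including the $\tilde a_0(\ell,n_1)\to\phi(x,0)$ and $b(n_1+1,k)\to\psi(0,y)$ matchings) are correct.

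However, the domination step contains a genuine gap. You write the row Hadamard bound $|\det V|\le\prod_i\|V_{i,\cdot}\|_2$ and then claim that a row in the $\mathbf{c},\mathbf{c'}$ block has $\ell^2$-norm with decay $e^{-c(-x)_+^{3/2}}$. That is not what Corollary \ref{cor4.3} gives. For a row indexed by $c_i$ (so $\ell=n_1+x_iN_1^{1/3}$ with $x_i\le 0$, hence $x_+=0$ and $(-x)_+=-x_i$), the entrywise bound is $Ce^{C(-x_i)}\cdot(\text{column factor})$: the row norm \emph{grows} like $e^{C(-x_i)}e^{CY_{\max}}$ as $x_i\to-\infty$, with no decay at all in $x_i$. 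The decay $e^{-c(-x_i)^{3/2}}$ lives entirely in the \emph{column} norm (since $c_i$ is also a column index, for which $(-y)_+=-x_i$ gives decay). Symmetrically, the $d,d'$ variables only decay from the rows, not the columns. So the single row Hadamard inequality produces a dominant that is unbounded in the $\mathbf{x},\mathbf{x'}$ variables and the argument as stated fails. The paper resolves this by combining the two Hadamard bounds into $|\det|\le\bigl(\prod_i\|\text{row}_i\|\bigr)^{1/2}\bigl(\prod_j\|\text{col}_j\|\bigr)^{1/2}$ (its (4.21)); equivalently one can conjugate $M_0$ by a diagonal matrix that trades half of each row's growth for half of the corresponding column's decay before applying Hadamard. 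This yields the estimate (4.22) with prefactor $C^LL^{L/2}$ and decay in \emph{all} the $x,x',y,y'$ variables, which is the bound you need. A related subtlety: it is specifically the $L^{L/2}$ (rather than the $L!$ one would get from crude termwise estimation of the Leibniz expansion) that is beaten by the $1/((r!)^2s!t!)$; with an $L!$ prefactor the sum over $r,s,t$ does not converge. So the Hadamard route, correctly implemented with both row and column norms, is not merely a convenience but essential to the convergence of the expansion.
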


\begin{proof}
Consider $M_0$ given by (\ref{3.50}) with $h=0$. Recall (\ref{3.30}). Let $[M_0]_i$ denote the $i$:th row in $M_0$, and $[M_0]^j$
the $j$:th column. We will use the following scalings
\begin{align}\label{4.18}
c_i&=n_1+x_iN_1^{1/3}\,\,,\,\,  c'_i=n_1+x'_iN_1^{1/3}
\\
d_i&=n_1+1+y_iN_1^{1/3}\,\,,\,\,  d'_i=n_1+1+y'_iN_1^{1/3}
\notag
\end{align}
so that $x_i\le 0$, $x'_i\le 0$, $y_i\ge 0$, and $y'_i\ge 0$.
Set
$$
Y_{\max}=\max\limits_{1\le j\le r} y_j+\max\limits_{1\le j\le t} y'_j\,,\,
X_{\max}=\max\limits_{1\le j\le r} (-x_j)+\max\limits_{1\le j\le s} (-x'_j).
$$
It follows from corollary \ref{cor4.3} that under the scaling (\ref{4.18}) there exist constants $c,C>0$ such that
\begin{equation}\label{4.19}
\begin{cases}
||N_1^{1/3}[M_0]_i||_2\le CL^{1/2}e^{C(Y_{\max}-x_i)}  & \text{if } 1\le i\le r \\
 ||N_1^{1/3}[M_0]_i||_2\le CL^{1/2}e^{C(Y_{\max}-x'_{i-r})}     & \text{if } r< i\le r+s\\
 ||N_1^{1/3}[M_0]_i||_2\le CL^{1/2}e^{-c y_{i-(r+s)}^{3/2}+CY_{\max}} & \text{if } r+s< i\le 2r+s \\
||N_1^{1/3}[M_0]_i||_2\le CL^{1/2}e^{-c {y'}_{i-(2r+s)}^{3/2}+CY_{\max}}  & \text{if } 2r+s< i\le L
  \end{cases},
\end{equation}
where $L=2r+s+t$, and
\begin{equation}\label{4.20}
\begin{cases}
||N_1^{1/3}[M_0]^j||_2\le  CL^{1/2}e^{-c (-x_j)_{i}^{3/2}+CX_{\max}} & \text{if } 1\le j\le r \\
 ||N_1^{1/3}[M_0]^j||_2\le CL^{1/2}e^{-c (-x_{j-r})_{i}^{3/2}+CX_{\max}}    & \text{if } r< j\le r+s\\
 ||N_1^{1/3}[M_0]^j||_2\le  CL^{1/2}e^{C(X_{\max}-y_{j-(r+s)})} & \text{if } r+s< j\le 2r+s \\
||N_1^{1/3}[M_0]^j||_2\le  CL^{1/2}e^{C(X_{\max}-{y'}_{j-(2r+s)})} & \text{if } 2r+s< j\le L
  \end{cases}.
\end{equation}

From Hadamard's inequality we get the estimates
$$
\left|\det \left(N_1^{1/3}M_0\right)\right|\le\prod\limits_{i=1}^L||N_1^{1/3}[M_0]_i||_2,
$$
and
$$
\left|\det \left(N_1^{1/3}M_0\right)\right|\le\prod\limits_{j=1}^L||N_1^{1/3}[M_0]^j||_2,
$$
from which it follows by taking the product that
\begin{equation}\label{4.21}
\left|\det \left(N_1^{1/3}M_0\right)\right|\le\prod\limits_{i=1}^L||N_1^{1/3}[M_0]_i||_2^{1/2}\prod\limits_{j=1}^L||N_1^{1/3}[M_0]^j||_2^{1/2}.
\end{equation}
If we use the estimates (\ref{4.19}) and (\ref{4.20}) in (\ref{4.21}) we see that there are constants $c,C>0$ such that
\begin{equation}\label{4.22}
\left|\det \left(N_1^{1/3}M_0\right)\right|\le C^LL^{L/2}\prod\limits_{j=1}^re^{-c(-x_j)^{3/2}}\prod\limits_{j=1}^se^{-c(-x'_j)^{3/2}}
\prod\limits_{j=1}^re^{-cy_j^{3/2}}\prod\limits_{j=1}^te^{-c{y'}_j^{3/2}}.
\end{equation}
Here we have also used the fact that given a constant $c>0$, there is a constant $C$ so that
\begin{equation*}
Y_{\max}\le C+\frac c2 \sum\limits_{j=1}^r y_j^{3/2}+\frac c2 \sum\limits_{j=1}^ty_j'^{3/2},
\end{equation*}
and an analogous estimate for $X_{\max}$.

Consider now the expression for $Q_3'(0)$ in (\ref{3.503}). If we use the estimate
$$
\left|N_1^{1/3}a_3^\ast(n_1+yN_1^{1/3})\right|\le Ce^{-c(-y)_+^{3/2}+Cy_+}
$$
from (\ref{4.12}) and the same estimates and arguments as above we see that
\begin{align}\label{4.23}
&\left|N_1^{L/3}\sum\limits_{j=1}^L(-1)^{r+s+j}a_3^\ast(f_j)\det M_0(\{r+s\}',\{j\}')\right|
\\
&\le C^LL^{L/2}\prod\limits_{j=1}^re^{-c(-x_j)^{3/2}}\prod\limits_{j=1}^se^{-c(-x'_j)^{3/2}}
\prod\limits_{j=1}^re^{-cy_j^{3/2}}\prod\limits_{j=1}^te^{-c{y'}_j^{3/2}}.
\notag
\end{align}
Note that in (\ref{3.503}), $y_1'=0$ and $x'_s=0$, so if we write
\begin{align}
N_1^{1/3}Q'_3(0)&=\frac 1{N_1^{1/3}}\sum\limits_{r=0}^{\min(n_1,\Delta n)}\sum\limits_{s=1}^{n_1}\sum\limits_{t=1}^{\Delta n}
\sum\limits_{\substack{\mathbf{c}\in [1,n_1]_<^r\\\mathbf{c'}\in [1,n_1]_<^s\\c'_s=n_1}}\frac 1{N_1^{(r+s-1)/3}}
\sum\limits_{\substack{\mathbf{d}\in [n_1+2,n_2]_<^{r}\\\mathbf{d'}\in [n_1+1,n_2]_<^t\\d'_1=n_1+1}}\frac 1{N_1^{(r+t-1)/3}}
\notag\\
&\times N_1^{L/3}\sum\limits_{j=1}^L(-1)^{r+s+j}a_3^\ast(f_j)\det M_0(\{r+s\}',\{j\}'),
\notag
\end{align}
we see that we can control the convergence of the Riemann sum using (\ref{4.23}) (note ordered variables instead of factorials), but
since we have the factor $1/N_1^{1/3}$ in front of the whole expression we see that it $\to 0$ as $M\to\infty$. From (\ref{3.505}) we can write
\begin{align}
N_1^{1/3}Q'_4(0)&=\frac 1{N_1^{1/3}}\sum\limits_{r=0}^{\min(n_1,\Delta n)}\sum\limits_{\substack{s=0\\r+s\ge 1}}^{n_1}\sum\limits_{t=1}^{\Delta n}
\sum\limits_{\substack{\mathbf{c}\in [1,n_1]_<^r\\\mathbf{c'}\in [1,n_1]_<^s}}\frac 1{N_1^{(r+s)/3}}
\sum\limits_{\substack{\mathbf{d}\in [n_1+2,n_2]_<^{r}\\\mathbf{d'}\in [n_1+1,n_2]_<^t\\d'_1=n_1+1}}\frac 1{N_1^{(r+t-1)/3}}
\notag\\
&\times N_1^{(L+1)/3}\sum\limits_{i=r+1}^{2r+s}\sum\limits_{j=1}^L(-1)^{i+j+1}a_2^\ast(f_i)a_3^\ast(f_j)\det M_0(\{i\}',\{j\}').
\notag
\end{align}
Using the estimates of $a_2^\ast$  and $a_3^\ast$ from corollary \ref{cor4.3} it follows that we can prove an estimate analogous to (\ref{4.23}) 
and again we see that $N_1^{1/3}Q'_4(0)\to 0$ as $M\to\infty$. This proves (\ref{4.14}) for $k=3,4$. The proof for $k=5,6$ is a analogous.

From the estimates in corollary \ref{cor4.3} we see that in analogy with the proof of (\ref{4.22}) we can prove
\begin{equation}\label{4.24}
\left|\det \left(N_1^{1/3}V\right)\right|\le C^{L+1}(L+1)^{(L+1)/2}\prod\limits_{j=1}^re^{-c(-x_j)^{3/2}}\prod\limits_{j=1}^se^{-c(-x'_j)^{3/2}}
\prod\limits_{j=1}^re^{-cy_j^{3/2}}\prod\limits_{j=1}^te^{-c{y'}_j^{3/2}}.
\end{equation}
where $V$ is given by (\ref{3.492}). From (\ref{3.501}) we can write
\begin{equation}\label{4.25}
N_1^{1/3}Q'_1(0)=\sum\limits_{r=0}^{\min(n_1,\Delta n)}\sum\limits_{s=0}^{n_1}\sum\limits_{t=0}^{\Delta n}
\sum\limits_{\substack{\mathbf{c}\in [1,n_1]_<^r\\\mathbf{c'}\in [1,n_1]_<^s}}\frac 1{N_1^{(r+s)/3}}
\sum\limits_{\substack{\mathbf{d}\in [n_1+2,n_2]_<^{r}\\\mathbf{d'}\in [n_1+2,n_2]_<^t}}\frac 1{N_1^{(r+t)/3}}
\det \left(N_1^{1/3}V\right).
\end{equation}
It follows from lemma \ref{lem4.1}, (\ref{3.28}), (\ref{3.29}) and (\ref{3.492}) that
$$
\lim_{M\to\infty} \det \left(N_1^{1/3}V\right) =W_{r,s,r,t}^{(1)}(\mathbf{x},\mathbf{x'},\mathbf{y},\mathbf{y'}).
$$
From the estimate (\ref{4.24}) we see that we can take the limit in (\ref{4.25}) and obtain (\ref{4.16}). The proof of (\ref{4.17})
is completely analogous.
\end{proof}

We now have all the results that we need to prove theorem \ref{thm1.1}.

\begin{proof} ({\it Proof of theorem \ref{thm1.1}}) 
Recall Proposition \ref{prop2.4}. In the scaling (\ref{4.3}) we see that
\begin{equation}\label{4.26}
\frac{\partial}{\partial\eta_1}\mathbb{P}\left[H(\mu_1,n_1)\le\xi_1,H(\mu_2,n_2)\le\xi_2\right]=\left.\frac{\partial}{\partial h}\right|_{h=0}
N_1^{1/3}Q(h).
\end{equation}
From lemma \ref{lem3.4} and lemma \ref{lem4.4} we see that
\begin{equation}\label{4.27}
\lim_{M\to\infty}\left.\frac{\partial}{\partial h}\right|_{h=0}N_1^{1/3}Q(h)=\Psi^{(1)}(\eta_1,\eta_2)+\Psi^{(2)}(\eta_1,\eta_2):=\Psi(\eta_1,\eta_2)
\end{equation}
uniformly for $\eta_1,\eta_2$ in a compact set. Let
$$
X_M=\frac{H(\mu_1,n_1)-2t_1M}{(t_1M)^{1/3}}+\nu_1^2\,,\,
Y_M=\frac{H(\mu_2,n_2)-2t_2M}{(t_2M)^{1/3}}+\nu_2^2.
$$
Then (\ref{4.26}) can be written
\begin{equation}
\frac{\partial}{\partial\eta_1}\mathbb{P}\left[X_M\le\eta_1,Y_M\le\eta_2\right]=\left.\frac{\partial}{\partial h}\right|_{h=0}
N_1^{1/3}Q(h)
\notag
\end{equation}
and for fixed $\eta_1^\ast$ and $\tilde{\eta}_1$ we see that
$$
\mathbb{P}\left[\eta_1^\ast<X_M\le\tilde{\eta}_1,Y_M\le\eta_2\right]=\int_{\eta_1^\ast}^{\tilde{\eta}_1}\left.\frac{\partial}{\partial h}\right|_{h=0}
N_1^{1/3}Q(h)\,d\eta_1.
$$
From (\ref{4.27}) it follows that
\begin{equation}\label{4.28}
\lim_{M\to\infty}\mathbb{P}\left[\eta_1^\ast<X_M\le\tilde{\eta}_1,Y_M\le\eta_2\right]=\int_{\eta_1^\ast}^{\tilde{\eta}_1}\Psi(\eta_1,\eta_2)\,d\eta_1.
\end{equation}
Now,
\begin{align}\label{4.29}
\mathbb{P}\left[\eta_1^\ast<X_M\le\tilde{\eta}_1,Y_M\le\eta_2\right]&
\le\mathbb{P}\left[\eta_1^\ast<X_M,Y_M\le\eta_2\right]
\\
&\le\mathbb{P}\left[\eta_1^\ast<X_M\le\tilde{\eta}_1,Y_M\le\eta_2\right]+\mathbb{P}\left[X_M>\tilde{\eta}_1\right].
\notag
\end{align}

From (\ref{1.4}), (\ref{4.28}) and (\ref{4.29}) we see that
\begin{align}\label{4.30}
&\int_{\eta_1^\ast}^{\tilde{\eta}_1}\Psi(\eta_1,\eta_2)\,d\eta_1\le \liminf\limits_{M\to\infty} \mathbb{P}\left[\eta_1^\ast<X_M,Y_M\le\eta_2\right]
\\
&\le \limsup\limits_{M\to\infty}\mathbb{P}\left[\eta_1^\ast<X_M,Y_M\le\eta_2\right]\le \int_{\eta_1^\ast}^{\tilde{\eta}_1}\Psi(\eta_1,\eta_2)\,d\eta_1
+1-F_2(\tilde{\eta}_1).
\notag
\end{align}
If we let $\tilde{\eta}_1\to\infty$ in (\ref{4.30}) we see that
$$
\lim_{M\to\infty} \mathbb{P}\left[\eta_1^\ast<X_M,Y_M\le\eta_2\right]=\int_{\eta_1^\ast}^{\infty}\Psi(\eta_1,\eta_2)\,d\eta_1,
$$
which is what we wanted to prove. 

Note that in order for this last argument to work we need an estimate of $\Psi(\eta_1,\eta_2)$
in terms of $\eta_1$. In fact, there are constants $c,C>0$ such that
\begin{equation}\label{4.31}
\left|\Psi(\eta_1,\eta_2)\right|\le Ce^{-c(\eta_1)_+^{3/2}}.
\end{equation}
We will only sketch the argument for (\ref{4.31}). Note that $\phi_1$, $\psi_1$ and $\phi_3$ all have a decay of the form
$e^{-c(\eta_1)_+^{3/2}}$ in $\eta_1$ by known estimates for the Airy function. Hence, the difficulty is in the presence of $\phi_2$.
If $r\ge 1$, the first column in $W_{r,s,r,t}^{(1)}$ does not depend on $\phi_2$ (we can assume $x_1<0$) and hence the
first column (in a Hadamard estimate) will give the right $\eta_1$-decay. If $r=0$, but $s\ge 1$, we can again consider 
the first column ($x'_1<0$), and get the right $\eta_1$-decay. If $r=s=0$,
$$
W_{0,0,0,t}^{(1)}(\mathbf{x},\mathbf{x'},\mathbf{y},\mathbf{y'})=\left|
\begin{matrix} \psi(0,0) &\psi(0,\mathbf{y'})\\
\psi(\mathbf{y'},0) &\psi(\mathbf{y'},\mathbf{y'})
\end{matrix}\right|
$$
and again the first column does not depend on $\phi_2$. The argument for $W_{r,s,r-1,t}^{(2)}$ is easier since we now always have $r\ge 1$.
\end{proof}

\section{Proof of the combinatorial identities}\label{sect5}

In this section we will prove lemma 2.3.

\begin{proof}
Consider first the identity (\ref{2.9}). We can write
\begin{equation*}
\prod_{i<j}\left(\frac 1{w_{\sigma(i)}w_{\sigma(j)}}-\frac 1{w_{\sigma(i)}}\right)=\prod_{i<j}\frac 1{w_{\sigma(i)}w_{\sigma(j)}}(1-w_{\sigma(j)}).
\end{equation*}
Now,
\begin{equation*}
\prod_{i<j}\frac 1{w_{\sigma(i)}w_{\sigma(j)}}=\prod\limits_{i=1}^{n-1}\frac 1{w_{\sigma(i)}^{n-i}}\prod\limits_{j=2}^n\frac 1{w_{\sigma(j)}^{j-1}}
=\prod\limits_{j=1}^{n}\frac 1{w_{\sigma(j)}^{n-1-j}}\prod\limits_{j=1}^n\frac 1{w_{\sigma(j)}^{j}}
\end{equation*}
and
\begin{equation*}
\prod_{i<j}(1-w_{\sigma(j)})=\prod\limits_{j=2}^n(1-w_{\sigma(j)})^{j-1}=\prod\limits_{j=1}^n\frac 1{1-w_{\sigma(j)}}\prod\limits_{j=1}^n(1-w_{\sigma(j)})^{j}.
\end{equation*}
Thus, we have the identity
$$
\prod_{i<j}\left(\frac 1{w_{\sigma(i)}w_{\sigma(j)}}-\frac 1{w_{\sigma(i)}}\right)=\prod\limits_{j=1}^n
\frac 1{(1-w_j)w_{\sigma(j)}^{n-1-j}}\left(\frac{1-w_{\sigma(j)}}{w_{\sigma(j)}}\right)^j.
$$
Hence, the left side of (\ref{2.9}) can be written
\begin{align}
&\sum\limits_{\sigma\in S_n}\sgn(\sigma)\prod_{i<j}\left(\frac 1{w_{\sigma(i)}w_{\sigma(j)}}-\frac 1{w_{\sigma(i)}}\right)
\frac{\prod\limits_{j=1}^n(1-w_j)w_{\sigma(j)}^{n-1-j}}
{(1-w_{\sigma(1)})\cdots(1-w_{\sigma(1)}\cdots w_{\sigma(n)})}
\notag\\
&=\sum\limits_{\sigma\in S_n}\sgn(\sigma)\prod_{i<j}\left(\frac 1{w_{\sigma(i)}w_{\sigma(j)}}-\frac 1{w_{\sigma(i)}}\right)
\frac{\prod\limits_{j=1}^nw_j^{-2}(1-w_j)}
{(\frac 1{w_{\sigma(1)}}-1)\cdots(\frac 1{w_{\sigma(1)}\cdots w_{\sigma(n)}}-1)}.
\notag
\end{align}
By the identity (1.7) in \cite{TrWi} with $p=0,q=1$ the last expression equals
$$
\prod\limits_{j=1}^nw_j^{-1}\left(\frac 1{w_j}-1\right)\frac 1{\frac 1{w_j}-1}\det\left(\frac 1{w_j^{i-1}}\right)=(-1)^{n(n-1)/2}
\prod\limits_{j=1}^n\frac 1{w_j^n}\det\left(w_j^{i-1}\right),
$$
where we also used (\ref{vandermondeidentity}). This proves (\ref{2.9}).

We now turn to the proof of (\ref{2.10}). Denote the left side of (\ref{2.10}) by $\omega_n(z,w)$. We will use induction on $n$. It
is easy to see that the identity is true for $n=1$. Fix $\sigma_1(n)=k$ and $\sigma_2(n)=\ell$. Then
\begin{align}\label{5.1}
&\sum\limits_{k,\ell=1}^n\sum\limits_{\substack{\sigma_1,\sigma_2\in S_n\\\sigma_1(n)=k,\sigma_2(n)=\ell}}\sgn(\sigma_1)\sgn(\sigma_2)
\left(\frac{1-z_k}{z_k}\right)^n\left(\frac{1-w_\ell}{w_\ell}\right)^n
\prod\limits_{j=1}^{n-1}\left(\frac{1-z_{\sigma_1(j)}}{z_{\sigma_1(j)}}\right)^j\left(\frac{w_{\sigma_2(j)}}{1-w_{\sigma_2(j)}}\right)^j
\\
&\times\frac 1{1-\frac{z_1\cdots z_n}{w_1\cdots w_n}}\frac 1{\left(1-\frac{z_{\sigma_1(1)}}{w_{\sigma_2(1)}}\right)
\left(1-\frac{z_{\sigma_1(1)}z_{\sigma_1(2)}}{w_{\sigma_2(1)}w_{\sigma_2(2)}}\right)
\cdots\left(1-\frac{z_{\sigma_1(1)}\cdots z_{\sigma_1(n-1)}}{w_{\sigma_2(1)}\cdots w_{\sigma_2(n-1)}}\right)}
\notag\\
&=\sum\limits_{k,\ell=1}^n\frac {(-1)^{n-k+n-\ell}}{1-\frac{z_1\cdots z_n}{w_1\cdots w_n}}\left(\frac{1-z_k}{z_k}\right)^n\left(\frac{w_\ell}{1-w_\ell}\right)^n
\omega_{n-1}(z_1,\dots,\hat{z_k},\dots, z_n,w_1,\dots,\hat{w_\ell},\dots, w_n),
\notag
\end{align}
where $\hat{z_k} (\hat{w_\ell})$ means that we leave out $z_k (w_\ell)$. By the induction hypothesis the last expression in (\ref{5.1}) equals
\begin{align}\label{5.2}
&\sum\limits_{k,\ell=1}^n\frac {(-1)^{k+\ell}}{1-\frac{z_1\cdots z_n}{w_1\cdots w_n}}\left(\frac{1-z_k}{z_k}\right)^n\left(\frac{w_\ell}{1-w_\ell}\right)^n
\prod\limits_{j\neq k}\frac{(1-z_j)^{n-1}}{z_j^{n-1}}\prod\limits_{j\neq \ell}\frac{w_j^{n}}{(1-w_j)^{n-1}}
\\
&\times\det\left(\frac 1{w_k-z_j}\right)_{1\le j,k\le n}\frac{\prod\limits_{j=1}^n(w_\ell-z_j)(w_j-z_k)}{(w_\ell-z_k)\prod\limits_{j=1}^{\ell-1}
(w_j-w_\ell)\prod\limits_{j=\ell+1}^{n}(w_\ell-w_j)\prod\limits_{j=1}^{k-1}
(z_k-z_j)\prod\limits_{j=k+1}^{n}(z_j-z_k)},
\notag
\end{align}
where we also used the Cauchy determinant formula. The expression in (\ref{5.2}) can be written
\begin{equation}\label{5.3}
\frac{\det\left(\frac 1{w_k-z_j}\right)}{1-\frac{z_1\cdots z_n}{w_1\cdots w_n}}\prod\limits_{j=1}^{n}\frac{w_j^n(1-z_j)^{n-1}}{z_j^{n-1}(1-w_j)^{n-1}}
\sum\limits_{k,\ell=1}^n\frac{(-1)^{n-1}(1-z_k)\prod\limits_{j=1}^n(w_\ell-z_j)(w_j-z_k)}{z_k(1-w_\ell)(w_\ell-z_k)\prod\limits_{j\neq \ell}
(w_\ell-w_j)\prod\limits_{j\neq k}(z_k-z_j)}.
\end{equation}
We see from (\ref{5.3}) and the final formula (\ref{2.10}) that in order to complete the proof we have to show the identity
\begin{align}\label{5.4}
&(-1)^{n-1}\sum\limits_{k,\ell=1}^n\frac{(1-z_k)}{z_k(1-w_\ell)(w_\ell-z_k)}\frac{\prod\limits_{j=1}^n(w_\ell-z_j)(w_j-z_k)}{\prod\limits_{j\neq \ell}
(w_\ell-w_j)\prod\limits_{j\neq k}(z_k-z_j)}
\\
&=\prod\limits_{j=1}^n\frac{w_j(1-z_j)}{z_j(1-w_j)}\left(1-\frac{z_1\dots z_n}{w_1\dots w_n}\right)=
\prod\limits_{j=1}^n\frac{w_j(1-z_j)}{z_j(1-w_j)}-\prod\limits_{j=1}^n\frac{1-z_j}{1-w_j}.
\notag
\end{align}
We can assume that $|z_j|, |w_j|<1$, $1\le j\le n$. Take $0<r_1<r_2<1$ such that $|z_j|<r_1$, $|w_j|<r_2$ for $1\le j\le n$. Consider the contour integral
\begin{align}\label{5.5}
&-\frac{1}{(2\pi i)^2}\int_{\gamma_{r_1}}dz\int_{\gamma_{r_2}}dw\frac{1-z}{z(1-w)(w-z)}\prod\limits_{j=1}^n
\frac{(w-z_j)(z-w_j)}{(w-w_j)(z-z_j)}=-\frac{1}{2\pi i}\int_{\gamma_{r_2}}dw\frac 1{(1-w)w}\prod\limits_{j=1}^n\frac{(w-z_j)w_j}{(w-w_j)z_j}
\\
&-\sum\limits_{k=1}^n\frac{1}{2\pi i}\int_{\gamma_{r_2}}dw\frac{1-z_k}{z_k(1-w)(w-z_k)}\prod\limits_{j=1}^n
\frac{(w-z_j)(z_k-w_j)}{w-w_j}\prod\limits_{j\neq k}\frac 1{z_k-z_j},
\notag
\end{align}
where we have computed the $z$-integral. The first expression in the right side of (\ref{5.5}) can be computed by noticing that the only pole 
outside $\gamma_{r_2}$
(including $\infty$) is at $w=1$ and this gives
$$
-\prod\limits_{j=1}^n\frac{w_j(1-z_j)}{z_j(1-w_j)}
$$
The second expression in the right side of (\ref{5.5}) equals
$$
(-1)^{n-1}\sum\limits_{k,\ell=1}^n\frac{(1-z_k)}{z_k(1-w_\ell)(w_\ell-z_k)}\frac{\prod\limits_{j=1}^n(w_\ell-z_j)(w_j-z_k)}{\prod\limits_{j\neq \ell}
(w_\ell-w_j)\prod\limits_{j\neq k}(z_k-z_j)}
$$
and thus by comparing (\ref{5.4}) and (\ref{5.5}) we see that it remains to show 
\begin{equation}\label{5.6}
\frac{1}{(2\pi i)^2}\int_{\gamma_{r_1}}dz\int_{\gamma_{r_2}}dw\frac{1-z}{z(1-w)(w-z)}\prod\limits_{j=1}^n
\frac{(w-z_j)(w_j-z)}{(w-w_j)(z-z_j)}=\prod\limits_{j=1}^n\frac{1-z_j}{1-w_j}.
\end{equation}
The $w$-integral in (\ref{5.6}) has its only pole outside $\gamma_{r_2}$ at $w=1$ which gives
$$
\frac{1}{2\pi i}\int_{\gamma_{r_1}}\frac{dz}{z}\prod\limits_{j=1}^n\frac{w_j-z}{z_j-z}\prod\limits_{j=1}^n\frac{1-z_j}{1-w_j}
=\prod\limits_{j=1}^n\frac{1-z_j}{1-w_j}\frac{1}{2\pi i}\int_{\gamma_{1/r_1}}\frac{dz}{z}\prod\limits_{j=1}^n\frac{zw_j-1}{zz_j-1}
=\prod\limits_{j=1}^n\frac{1-z_j}{1-w_j},
$$
since the only pole in the last $z$-integral is at $z=0$.
\end{proof}

\section{Asymptotic analysis}\label{sect6}

In this section we will prove lemma \ref{lem4.1} and lemma \ref{lem4.2}. Recall the notations and scalings (\ref{4.1}) to (\ref{4.3}).
Define, with $k$ and $\ell$ as in (\ref{4.3}),
\begin{align}
f_1(z;x)&=(\ell-1)\log z+\frac 12\mu_1z^2- \xi_1 z
\notag\\
f_2(z;y)&=(n_2-k)\log z+\frac 12\Delta\mu z^2- \Delta\xi z
\notag
\end{align}
and note that $n_2-k=\Delta n-yN_1^{1/3}$. Recall the notation (\ref{3.7}) and the definitions (\ref{3.24})- (\ref{3.27}).
We have that
\begin{align}\label{6.1}
G_{n_1,\mu_1,\xi_1}(z)&=e^{f_1(z;0)}\,\,,\,\,G_{\Delta n,\Delta\mu,\Delta\xi}(w)=e^{f_2(w;0)}
\\
G_{k,\mu_1,\xi_1}(\zeta)&=e^{f_1(\zeta;y)}\,\,,\,\,G_{n_2+1-\ell,\Delta\mu,\Delta\xi}(\omega)=e^{f_2(\omega;x)}
\notag\\
G_{n_2-k,\Delta\mu,\Delta\xi}(w)&=e^{f_2(w;y)}\,\,,\,\,G_{\ell-1,\mu_1,\xi_1}(z)=e^{f_1(z;x)}
\notag
\end{align}
Let $d_i$, $1\le i\le 4$, be some positive parameters that will be chosen later. Introduce the following
contour parametrizations
\begin{align}\label{6.2}
z(t_1)&=1+(d_1+it_1)N_1^{-1/3}\,\,,\,\,t_1\in\mathbb{R},
\\
\zeta(s_1)&=(1-d_2N_1^{-1/3})e^{is_1N_1^{-1/3}}\,\,,\,\,s_1\in I_1=[-\pi N_1^{1/3},\pi N_1^{1/3}],
\notag\\
w(t_2)&=1+(d_3+it_2)N_2^{-1/3}\,\,,\,\,t_2\in\mathbb{R},
\notag\\
\omega(s_2)&=(1-d_4N_2^{-1/3})e^{is_2N_2^{-1/3}}\,\,,\,\,s_2\in I_1=[-\pi N_2^{1/3},\pi N_2^{1/3}].
\notag
\end{align}
Define
\begin{align}\label{6.3}
g_1(t_1;x)&=\re f_1(z(t_1);x)\,\,,\,\,h_1(s_1;x)=\re f_1(\zeta(s_1);x)
\\
g_2(t_2;x)&=\re f_2(w(t_2);y)\,\,,\,\,h_2(s_2;y)=\re f_1(\omega(s_2);y).
\notag
\end{align}
Let
\begin{align}\label{6.3'}
 \Delta_1&=d_1-\nu_1+\frac12(d_1^2-2\nu_1 d_1-x)N_1^{-1/3}-\frac 12\nu_1d_1^2N_1^{-2/3}\\
\Delta_2&=2(d_2+\nu_1)+(\eta_1-\nu_1^2-2\nu_1d_2)N_1^{-1/3},\notag
\\
\Delta_3&=d_3-\Delta\nu+\frac12(d_3^2-2\Delta\nu d_3+y)N_2^{-1/3}-\frac 12\Delta\nu d_3^2N_2^{-2/3}\notag\\
\Delta_4&=2(d_4+\Delta\nu)+(\Delta\eta-\Delta\nu^2-2\Delta\nu d_4)N_2^{-1/3}.
\notag
\end{align}

\begin{lemma}\label{lem6.1} Assume that, for $M$ large,
\begin{equation}\label{6.4}
1\le d_1\le N_1^{1/3}\,\,,\,\,1\le\Delta_1\le N_1^{1/3},
\end{equation}
\begin{equation}\label{6.5}
1\le d_2\le \frac 12N_1^{1/3}\,\,,\,\,\Delta_2\ge 1,
\end{equation}
\begin{equation}\label{6.14}
1\le d_3\le N_3^{1/3}\,\,,\,\,1\le\Delta_3\le N_3^{1/3},
\end{equation}
\begin{equation}\label{6.15}
1\le d_4\le \frac 12N_2^{1/3}\,\,,\,\,\Delta_4\ge 1.
\end{equation}
Then,
\begin{equation}\label{6.6}
g_1(t_1;x)-g_1(0;x)\le -\frac{\Delta_1}{20} t_1^2
\end{equation}
for all $t_1\in\mathbb{R}$, and
\begin{equation}\label{6.7}
h_1(s_1;x)-h_1(0;x)\ge \frac{\Delta_2}{20} s_1^2
\end{equation}
for all $s_1\in I_1$. Furthermore
\begin{equation}\label{6.16}
g_2(t_2;y)-g_2(0;y)\le -\frac{\Delta_3}{20} t_2^2
\end{equation}
for all $t_2\in\mathbb{R}$, and
\begin{equation}\label{6.17}
h_2(s_2;y)-h_2(0;y)\ge \frac{\Delta_4}{20} s_2^2
\end{equation}
for all $s_2\in I_2$.
\end{lemma}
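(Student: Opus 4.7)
The four inequalities (6.6)--(6.17) come in two structurally identical pairs, so my plan is to do (6.6) and (6.7) carefully and then note that (6.16), (6.17) follow by the obvious relabelling $N_1\to N_2$, $\mu_1\to\Delta\mu$, $\xi_1\to\Delta\xi$, $\ell-1\to n_2-k$, $\nu_1\to\Delta\nu$, $\eta_1-\nu_1^2\to\Delta\lambda$, $d_1,d_2\to d_3,d_4$, because (4.3) gives the scalings in $N_2$ the same shape as those in $N_1$, and $\Delta_3,\Delta_4$ are defined from $(d_3,d_4,\Delta\nu,\Delta\eta)$ in exactly the same way that $\Delta_1,\Delta_2$ are defined from $(d_1,d_2,\nu_1,\eta_1)$.

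For (6.6) I would plug $z(t_1)=1+(d_1+it_1)N_1^{-1/3}$ into $f_1(z;x)=(\ell-1)\log z+\tfrac12\mu_1 z^2-\xi_1 z$ and take the real part. Setting $a=1+d_1N_1^{-1/3}$, a direct computation gives
$$
g_1(t_1;x)-g_1(0;x)=\frac{\ell-1}{2}\log\!\Bigl(1+\frac{t_1^2 N_1^{-2/3}}{a^2}\Bigr)-\frac{\mu_1}{2}\,t_1^2 N_1^{-2/3}.
$$
I then apply $\log(1+u)\le u$ (valid for all $u\ge 0$) to bound this from above by
$
\frac{t_1^2 N_1^{-2/3}}{2a^2}\bigl(\ell-1-\mu_1 a^2\bigr).
$
Substituting (4.3) and expanding, the bracket simplifies (this is the one unglamorous but essential calculation) to $-2N_1^{2/3}\Delta_1$ with $\Delta_1$ exactly as in (6.3'). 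Assumption (6.4) gives $d_1 N_1^{-1/3}\le 1$ so $a\le 2$ and $1/a^2\ge 1/4$, yielding $g_1(t_1;x)-g_1(0;x)\le -\tfrac14\Delta_1 t_1^2$, which is stronger than (6.6).

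For (6.7) I would write $\zeta(s_1)=b\,e^{i s_1 N_1^{-1/3}}$ with $b=1-d_2 N_1^{-1/3}$ and use $\re\log\zeta=\log b$, $\re\zeta=b\cos(s_1N_1^{-1/3})$, $\re\zeta^2=b^2\cos(2s_1N_1^{-1/3})$. The log term cancels, and the identities $1-\cos\alpha=2\sin^2(\alpha/2)$, $\cos 2\alpha-1=-2\sin^2\alpha$, $\sin^2\alpha=4\sin^2(\alpha/2)\cos^2(\alpha/2)$ give
$$
h_1(s_1;x)-h_1(0;x)=2b\sin^2\!\bigl(\tfrac12 s_1 N_1^{-1/3}\bigr)\bigl[\xi_1-2\mu_1 b\cos^2\!\bigl(\tfrac12 s_1 N_1^{-1/3}\bigr)\bigr].
$$
Bounding $\cos^2\le 1$ inside the bracket (legitimate since $\mu_1 b\ge 0$) and computing as before shows $\xi_1-2\mu_1 b=N_1^{2/3}\Delta_2$ with $\Delta_2$ as in (6.3'). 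On $I_1$ we have $|s_1 N_1^{-1/3}/2|\le\pi/2$, so $\sin^2(u)\ge 4u^2/\pi^2$ gives $\sin^2(s_1 N_1^{-1/3}/2)\ge s_1^2 N_1^{-2/3}/\pi^2$. Combined with $b\ge 1/2$ from (6.5), this gives $h_1(s_1;x)-h_1(0;x)\ge \Delta_2 s_1^2/\pi^2\ge \Delta_2 s_1^2/10$, which implies (6.7).

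The main obstacle is really the bookkeeping: one must verify that the algebraic simplifications $\ell-1-\mu_1 a^2=-2N_1^{2/3}\Delta_1$ and $\xi_1-2\mu_1 b=N_1^{2/3}\Delta_2$ (and their $N_2$ analogues) produce \emph{exactly} the $\Delta_i$ defined in (6.3'), including the lower-order terms. Everything else is a combination of the elementary inequalities $\log(1+u)\le u$ and $\sin^2 u\ge 4u^2/\pi^2$ on $|u|\le\pi/2$, together with the bounds on $a$ and $b$ furnished by (6.4)--(6.15). The looseness of the constant $1/20$ in the statement is comfortable: the proof actually delivers $1/4$ and $1/\pi^2$ respectively.
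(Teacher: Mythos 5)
Your argument is correct, and it takes a mildly different (and cleaner) route than the paper. The paper proves (6.6) by computing $g_1'(t_1;x)$ in (6.8)--(6.9), establishing $g_1'\le -\tfrac{\Delta_1}{5}t_1$ for $0\le t_1\le N_1^{1/3}$ and a separate bound for $t_1>N_1^{1/3}$, and then integrating --- which is exactly where the $\tfrac{1}{20}$ constant comes from. You avoid the case split entirely by writing the difference $g_1(t_1;x)-g_1(0;x)$ in closed form and applying $\log(1+u)\le u$ once, which handles all $t_1$ simultaneously and gives the stronger constant $\tfrac14$; the algebraic identity $\ell-1-\mu_1 a^2=-2N_1^{2/3}\Delta_1$ that you flag as the crux is precisely the same computation that produces the $2\Delta_1$ in the numerator of the paper's (6.9), and it does check out. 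Likewise for (6.7) the paper computes $h_1'$ in (6.12), integrates, and case-splits $|N_1^{-1/3}s_1|\lessgtr\pi/2$; your trigonometric identity plus the concavity bound $\sin^2 u\ge 4u^2/\pi^2$ on $|u|\le\pi/2$ delivers the same estimate in one stroke. Two small points worth flagging when you do the "unglamorous" bookkeeping: (i) both the $\cos^2\le 1$ step and the paper's $\cos\le 1$ step tacitly use $\mu_1 b\ge 0$, i.e.\ $\nu_1\le N_1^{1/3}$ and $d_2\le N_1^{1/3}$, which the hypotheses and the "for $M$ large" proviso cover but you should say explicitly; (ii) for the $N_2$-copy, $n_2-k=\Delta n-yN_1^{1/3}$ carries a factor $N_1^{1/3}=\alpha N_2^{1/3}$ rather than $N_2^{1/3}$, so the $y$-term in $\Delta_3$ you will derive is $\alpha y$ rather than the $y$ written in (6.3'); this is a minor slip in the paper (compare with the $\alpha y$ in (6.21)), and since $y\ge 0$ in the applications it does not affect the conclusions, but it means the "obvious relabelling" is not quite verbatim.
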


\begin{proof}
By (\ref{6.2}) and (\ref{6.3}),
\begin{equation}\label{6.8}
g_1(t_1;x)=\frac{\ell-1}2\log\left((1+d_1N_1^{-1/3})^2+N_1^{-2/3}t_1^2\right)+\frac 12\mu_1\left((1+d_1N_1^{-1/3})^2-N_1^{-2/3}t_1^2\right)
-\xi_1(1+d_1N_1^{-1/3}).
\end{equation}
Thus,
$$
g_1'(t_1;x)=N_1^{-2/3}t_1\left(\frac{\ell-1-\mu_1\left((1+d_1N_1^{-1/3})^2+N_1^{-2/3}t_1^2\right)}
{(1+d_1N_1^{-1/3})^2+N_1^{-2/3}t_1^2}\right),
$$
and by introducing the scalings (\ref{4.3}) we obtain
\begin{equation}\label{6.9}
g_1'(t_1;x)=-t_1\left(\frac{2\Delta_1+\left(N_1^{-1/3}-\nu_1N_1^{-2/3}\right)t_1^2}
{(1+d_1N_1^{-1/3})^2+N_1^{-2/3}t_1^2}\right).
\end{equation}
If $0\le t_1\le N_1^{1/3}$, then $(1+d_1N_1^{-1/3})^2+N_1^{-2/3}t_1^2\le 5$ by (\ref{6.4}), and if $M$ is large enough $N_1^{-1/3}-\nu_1N_1^{-2/3}\ge 0$,
so (\ref{6.9}) gives
\begin{equation}\label{6.10}
g_1'(t_1;x)\le -\frac{\Delta_1}5 t_1.
\end{equation}
If $t_1\ge N_1^{1/3}$ and $M$ is sufficiently large, then (\ref{6.9}) gives
\begin{equation}
g_1'(t_1;x)\le -t_1\frac{\frac12N_1^{-1/3}t_1^2}{(1+d_1N_1^{-1/3})^2+N_1^{-2/3}t_1^2}\le-t_1\frac{\frac12N_1^{-1/3}t_1^2}{5N_1^{-2/3}t_1^2}
\le-\frac 1{10}N_1^{1/3}t_1\le-\frac{\Delta_1}{10}t_1,
\notag
\end{equation}
by (\ref{6.4}). Hence, (\ref{6.10}) holds for all $t_1\ge 0$, and we have proved (\ref{6.6}) for $t_1\ge 0$. The case 
$t_1\le 0$ follows by symmetry.

Consider now $h_1$. We have that
\begin{equation}\label{6.11}
h_1(s_1;x)=(\ell-1)\log(1-d_2N_1^{-1/3})+\frac 12\mu_1(1-d_2N_1^{-1/3})^2\cos 2N_1^{-1/3}s_1 - \xi_1(1-d_2N_1^{-1/3})\cos N_1^{-1/3}s_1 
\end{equation}
and hence
\begin{equation}\label{6.12}
h_1'(s_1;x)=N_1^{-1/3}(1-d_2N_1^{-1/3})\sin N_1^{-1/3}s_1 \left(\xi_1-2\mu_1(1-d_2N_1^{-1/3})\cos N_1^{-1/3}s_1\right).
\end{equation}
From the scaling (\ref{4.3}) we see that if $M$ is sufficiently large then
$\xi_1-2\mu_1(1-d_2N_1^{-1/3})\cos N_1^{-1/3}s_1\ge N_1^{2/3}\Delta_2$ and hence,
$$
h_1(s_1;x)-h_1(0;x)\ge N_1^{1/3}(1-d_2N_1^{-1/3})\Delta_2\int_0^{s_1}\sin N_1^{-1/3}t\,dt\ge\frac {\Delta_2}2N_1^{2/3}
(1-\cos N_1^{-1/3}s_1)
$$
by (\ref{6.5}) for all $s_1\in I_1$. If $|N_1^{-1/3}s_1|\in[0,\pi /2]$, then
$$
\frac 12(1-\cos N_1^{-1/3}s_1)=\sin^2\left(\frac12 N_1^{-1/3}s_1\right)\ge\frac 14N_1^{-2/3}s_1^2
$$ 
and hence $h_1(s_1;x)-h_1(0;x)\ge\frac 14\Delta_2s_1^2$. If $|N_1^{-1/3}s_1|\in[\pi /2,\pi]$, then $1-\cos N_1^{-1/3}s_1\ge 1$,
and
$$
h_1(s_1;x)-h_1(0;x)\ge \frac12\Delta_2N_1^{2/3}\ge\frac 1{2\pi^2}\Delta_2s_1^2\ge\frac {\Delta_2}{20}s_1^2.
$$
Exactly the same argument gives (\ref{6.16}) and (\ref{6.17}).
\end{proof}

We will now prove lemma \ref{lem4.1}.

\begin{proof} ({\it Proof of lemma \ref{lem4.1}})
All the limits below will be uniform for $\nu_i,\eta_i,x,y$ in compact sets. Write
$$
u_1(t_1)=d_1+it_1\,,\,u_2(t_2)=d_3+it_2\,,\,v_1(s_1)=-d_2+is_1\,,\,v_2(s_2)=-d_4+is_2.
$$
Since $\nu_i,\eta_i,x,y$ belongs to a compact set it is clear that we can choose $d_i$,$1\le i\le 4$, constant but so large that
(\ref{6.4}), (\ref{6.5}), (\ref{6.14}) and (\ref{6.15}) hold for all sufficiently large $M$. Recall the definition (\ref{1.7}) of $\alpha$. In (\ref{3.24}) we will use the
parametrizations (\ref{6.2}) and we choose $d_1$ and $d_3$ so that
\begin{equation}\label{6.18}
\alpha d_3-d_1\ge 1
\end{equation}
which ensures that the $z$- and $w$-contours have the right ordering.
If we let
$$
J(t_1,s_1,t_2,s_2)=\frac{(1-d_2N_1^{-1/3})(1-d_4N_1^{-1/3})e^{is_1N_1^{-1/3}+is_2N_2^{-1/3}}}
{N_1^{2/3}N_2^{1/3}(z(t_1)-w(t_2))(z(t_1)-\zeta(s_1))(w(t_2)-\omega(s_2))},
$$
then
\begin{equation}\label{6.19}
\frac{N_1^{1/3}dzdwd\zeta d\omega}{(z-w)(z-\zeta)(w-\omega)}=\alpha J(t_1,s_1,t_2,s_2)dt_1ds_1dt_2ds_2
\end{equation}
and 
\begin{equation}\label{6.20}
J(t_1,s_1,t_2,s_2)\to\frac 1{(u_1(t_1)-\alpha u_2(t_2))(u_1(t_1)-v_1(s_1))(u_2(t_2)-v_2(s_2))}
\end{equation}
as $M\to\infty$; also $J$ is bounded. Furthermore,
\begin{align}\label{6.21}
&f_1(z(t_1);x)-f_1(1;x)\to\frac 13u_1(t_1)^3-\nu_1u_1(t_1)^2-(\lambda_1-x)u_1(t_1),
\\
&f_1(\zeta(s_1);x)-f_1(1;x)\to\frac 13v_1(s_1)^3-\nu_1v_1(s_1)^2-(\lambda_1-x)v_1(s_1),
\\
&f_2(w(t_2);y)-f_2(1;y)\to\frac 13u_2(t_2)^3-\Delta\nu u_2(t_2)^2-(\Delta\lambda+\alpha y)u_2(t_2),
\\
&f_2(\omega(s_2);y)-f_2(1;y)\to\frac 13v_2(s_2)^3-\Delta\nu v_2(s_2)^2-(\Delta\lambda+\alpha y)v_2(s_2)
\notag
\end{align}
as $M\to\infty$.

It follows from (\ref{3.24}) and (\ref{6.1}) that
\begin{equation}\label{6.22}
N_1^{1/3}a_{0,1}(\ell,k)=\frac{\alpha}{(2\pi)^4}\int_{\mathbb{R}}dt_1\int_{I_1}ds_1\int_{\mathbb{R}}dt_2\int_{I_2}ds_2J(t_1,s_1,t_2,s_2)
\frac{e^{f_1(z(t_1);0)+f_2(w(t_2);0)}}{e^{f_1(\zeta(s_1);y)+f_2(\omega(s_2);x)}}.
\end{equation}
The integrand in (\ref{6.22}) is bounded by
\begin{align}
&Ce^{g_1(t_1;0)+g_2(t_2;0)-h_1(s_1;y)-h_2(s_2;x)}
\notag\\
&\le Ce^{g_1(0;0)+g_2(0;0)-h_1(0;y)-h_2(0;x)-\frac 1{20}(t_1^2+s_1^2+t_2^2+s_2^2)}\le
Ce^{-\frac 1{20}(t_1^2+s_1^2+t_2^2+s_2^2)},
\notag
\end{align}
where the first inequality follows from lemma \ref{lem6.1} since $\Delta_i\ge 1$, and the second inequality
follows from (\ref{6.21}) by letting $t_1=s_1=t_2=s_2=0$ and taking real parts. Thus, by the dominated convergence theorem we can take the
limit $M\to\infty$ in (\ref{6.22}) and get
\begin{align}\label{6.23}
&\lim_{M\to\infty}N_1^{1/3}a_{0,1}(\ell,k)
\\
&=\frac{\alpha}{(2\pi)^4}\int_{\mathbb{R}}dt_1\int_{\mathbb{R}}ds_1\int_{\mathbb{R}}dt_2\int_{\mathbb{R}}ds_2
\frac 1{(u_1(t_1)-\alpha u_2(t_2))(u_1(t_1)-v_1(s_1))(u_2(t_2)-v_2(s_2))}
\notag\\
&\times\frac{e^{\frac 13u_1(t_1)^3-\nu_1u_1(t_1)^2-\lambda_1u_1(t_1)+\frac 13u_2(t_2)^3-\Delta\nu u_2(t_2)^2-\Delta\lambda u_2(t_2)}}
{e^{\frac 13v_1(s_1)^3-\nu_1v_1(s_1)^2-(\lambda_1-y)v_1(s_1)+\frac 13v_2(s_2)^3-\Delta\nu v_2(s_2)^2-(\Delta\lambda+\alpha x)v_2(s_2)}}
\notag\\
&=\frac{\alpha}{(2\pi i)^4}\int_{\Gamma_{d_1}}dz\int_{\Gamma_{d_3}}dw\int_{\Gamma_{-d_2}}d\zeta\int_{\Gamma_{-d_4}}d\omega
\frac{e^{\frac 13z^3- \nu_1 z^2- \lambda_1 z+\frac 13w^3-\Delta\nu w^2-\Delta\lambda w}}
{(z-\alpha w)(z-\zeta)(w-\omega)e^{\frac 13\zeta^3- \nu_1 \zeta^2- (\lambda_1-y)\zeta+\frac 13\omega^3-\Delta\nu \omega^2-(\Delta\lambda+\alpha x)\omega}}
\notag\\
&=\phi_1(x,y),
\notag
\end{align}
where $\phi_1$ is given by (\ref{1.9}). Recall the condition in (\ref{6.23}). The last equality is a straightforward
rewriting of the contour integral in terms of Airy functions, see the end of this section. This proves (\ref{4.4}). The
limit of $N_1^{1/3}b_1(\ell,k)$ is the same as the right side of (\ref{6.23}), but we have the condition $d_1>\alpha d_3$ instead.
For $c_2$ we get
\begin{align}\label{6.232}
\lim_{M\to\infty}N_1^{1/3}c_2(\ell,k)&=\frac{\alpha}{(2\pi i)^2}\int_{\Gamma_{d_3}}dw\int_{\Gamma_{-d_4}}d\omega
\frac{e^{\frac 13w^3-\Delta\nu w^2-(\Delta\lambda+\alpha y)w}}
{(w-\omega)e^{\frac 13\omega^3-\Delta\nu \omega^2-(\Delta\lambda+\alpha x)\omega}}
\\
&=\phi_2(x,y),
\notag
\end{align}
and for $c_3$,
\begin{align}\label{6.233}
\lim_{M\to\infty}N_1^{1/3}c_3(\ell,k)&=\frac{\alpha}{(2\pi i)^2}\int_{\Gamma_{d_1}}dz\int_{\Gamma_{-d_2}}d\zeta
\frac{e^{\frac 13z^3- \nu_1 z^2- (\lambda_1-x)z}}
{(z-\zeta)e^{\frac 13\zeta^3- \nu_1 \zeta^2- (\lambda_1-y)\zeta}}
\\
&=\phi_3(x,y).
\notag
\end{align}
\end{proof}

We turn now to the proof of lemma \ref{lem4.2}.

\begin{proof} ({\it Proof of lemma \ref{lem4.2}})
To prove the estimate (\ref{4.8}) we will use (\ref{6.22}) but we will make appropriate choices
of the $d_i$'s in order to get the estimate. From (\ref{6.22}) we find
\begin{equation}\label{6.24}
\left|N_1^{1/3}a_{0,1}(\ell,k)\right|\le \frac{C}{|d_1-\alpha d_3|(d_1+d_2)(d_3+d_4)}
\int_{\mathbb{R}}dt_1\int_{I_1}ds_1\int_{\mathbb{R}}dt_2\int_{I_2}ds_2
e^{g_1(t_1;0)-h_1(s_1;y)+g_2(t_2;0)-h_2(s_2;x)}.
\end{equation}
We will choose $d_i$ so that the conditions (\ref{6.4}), (\ref{6.5}), (\ref{6.14}), (\ref{6.15}) and (\ref{6.18}) are satisfied. Hence,
it follows from (\ref{6.24}) and lemma \ref{lem6.1} that
\begin{equation}\label{6.25}
\left|N_1^{1/3}a_{0,1}(\ell,k)\right|\le 
Ce^{g_1(0;0)-h_1(0;y)+g_2(0;0)-h_2(s_2;x)}.
\end{equation}
From (\ref{6.8}), (\ref{6.11}) and the scalings (\ref{4.3}) we see that
\begin{align}\label{6.26}
g_1(0;x)&=(N_1+\nu_1N_1^{2/3}+xN^{1/3})\log(1+d_1N_1^{-1/3})\\&+\frac 12(N_1-\nu_1N_1^{2/3})(1+d_1N_1^{-1/3})^2
-(2N_1+\lambda_1N^{1/3})(1+d_1N_1^{-1/3})\notag
\end{align}
and
\begin{align}\label{6.27}
h_1(0;y)&=(N_1+\nu_1N_1^{2/3}+yN^{1/3})\log(1-d_2N_1^{-1/3})\\&+\frac 12(N_1-\nu_1N_1^{2/3})(1-d_2N_1^{-1/3})^2
-(2N_1+\lambda_1N^{1/3})(1-d_2N_1^{-1/3})\notag.
\end{align}
It is straightforward to show that
\begin{equation}\label{6.28}
\log(1+x)\le x-\frac{x^2}2+\frac{x^3}3
\end{equation}
for all $x\ge 0$, and
\begin{equation}\label{6.29}
\log(1-x)\ge -x-\frac{x^2}2-\frac{x^3}{3(1-x)^3}
\end{equation}
if $0\le x<1$. If we use the estimate (\ref{6.28}) in (\ref{6.26}) we get
\begin{equation}\label{6.30}
g_1(0;x)\le -\frac 32 N_1-\frac 12\nu_1N_1^{2/3}-\lambda_1N^{1/3}+\frac 13d_1^3\left(1+\nu_1N_1^{-1/3}+xN_1^{2/3}\right)
-\nu_1d_1^2-\lambda_1d_1+x\left(d_1-\frac 12d_1^2N_1^{-1/3}\right).
\end{equation} 
Similarly, using (\ref{6.29}) in (\ref{6.27}) we find
\begin{equation}\label{6.31}
h_1(0;y)\ge -\frac 32 N_1-\frac 12\nu_1N_1^{2/3}-\lambda_1N^{1/3}+\frac 13d_2^3\left(\frac{1+\nu_1N_1^{-1/3}+yN_1^{2/3}}
{(1-d_2N_1^{-1/3})^3}\right)
-\nu_1d_2^2+\lambda_1d_2-y\left(d_2-\frac 12d_2^2N_1^{-1/3}\right).
\end{equation} 
Combining (\ref{6.30}) and (\ref{6.31}) we obtain
\begin{align}\label{6.32}
&g_1(0;x)-h_1(0;y)\le\frac 13d_1^3\left(1+\nu_1N_1^{-1/3}+xN_1^{2/3}\right)-\nu_1d_1^2-\lambda_1d_1+x\left(d_1-\frac 12d_1^2N_1^{-1/3}\right)
\\
&+\frac 13d_2^3\left(\frac{1+\nu_1N_1^{-1/3}+yN_1^{2/3}}
{(1-d_2N_1^{-1/3})^3}\right)+\nu_1d_2^2-\lambda_1d_2+y\left(d_2+\frac 12d_2^2N_1^{-1/3}\right).
\notag
\end{align}
In an analogous way, we obtain
\begin{align}\label{6.33}
&g_2(0;y)-h_2(0;x)\le\frac 13d_3^3\left(1+\Delta\nu N_2^{-1/3}-yN_2^{2/3}\right)-\Delta\nu d_3^2-\Delta\lambda d_3-y\left(d_3-\frac 12d_3^2N_2^{-1/3}\right)
\\
&+\frac 13d_4^3\left(\frac{1+\Delta\nu N_2^{-1/3}-xN_2^{2/3}}
{(1-d_4N_2^{-1/3})^3}\right)+\Delta\nu d_4^2-\Delta\lambda d_4-x\left(d_4+\frac 12d_4^2N_2^{-1/3}\right).
\notag
\end{align}
We will use the estimates (\ref{6.32}) and (\ref{6.33}) in (\ref{6.25}). Take
\begin{equation}\label{6.34}
d_1=k_1\,,\,d_2=k_2+\delta_2(-y)_+^{1/2}\,,\,
d_3=k_3\,,\,d_4=k_4+\delta_4x_+^{1/2},
\end{equation} 
where $k_i$ and $\delta_i$ are to be specified.

Note that since $1\le \ell,k\le n_2$, there is a constant $k_0$ so that $|x|\le k_0N_1^{2/3}$ and $|y|\le k_0N_1^{2/3}$.
First choose $k_1$ large enough so that $\Delta_1\ge 1$ holds. Then (\ref{6.4}) will hold if $M$ is large enough. We can
choose $k_2$ so that $\Delta_2\ge 1$ and  $d_2\ge 1$ hold provided that $d_2\le \frac 12N_1^{1/3}$. Now,
$$
d_2=k_2+\delta_2(-y)_+^{1/2}\le k_2+k_0^{1/2}\delta_2 N_1^{1/3}\le \frac 12 N_1^{1/3}
$$
for large $M$ if we choose $\delta_2$ small enough. With these choices (\ref{6.4}) and (\ref{6.5}) are satisfied for large $M$.
In a similar way we can choose $k_3,k_4$ and $\delta_4$ so that (\ref{6.14}) and (\ref{6.15}) hold, and we can also choose $k_3$
so large that (\ref{6.18}) holds. Note that there is a constant $C$ so that
$$
\frac{1+\nu_1 N_1^{-1/3}+yN_1^{-2/3}}{(1-d_2N_1^{-1/3})^3}\le C\,,\,
\frac{1+\Delta\nu N_2^{-1/3}-xN_2^{-2/3}}{(1-d_4N_2^{-1/3})^3}\le C
$$
and consequently we see from (\ref{6.32}) and (\ref{6.33}) that
\begin{align}\label{6.35}
&g_1(0;0)-h_1(0;y)+g_2(0;0)-h_2(0;x)\le \frac 13d_1^3(1+\nu_1N_1^{-1/3})-\nu_1d_1^2-\lambda_1d_1+Cd_2^3+\nu_1d_2^2-\lambda_1d_2
\\
&+y(d_2+\frac 12d_2^21N_1^{-1/3})+\frac 13d_3^3(1+\Delta\nu N_2^{-1/3})-\Delta\nu d_3^2-\Delta\lambda d_3
+Cd_4^3+\Delta\nu d_4^2-\Delta\lambda d_4-x(d_4+\frac 12d_4^2N_2^{-1/3})
\notag\\
&\le C(1+d_2^3+d_4^3+d_2^2+d_4^2+d_2+d_4)+y(d_2+\frac 12d_2^21N_1^{-1/3})-x(d_4+\frac 12d_4^2N_2^{-1/3}),
\notag
\end{align}
since $d_1$ and $d_3$ are constants. From (\ref{6.34}) we see that
$d_2^3\le 4(k_2^3+\delta_2^3(-y)_+^{3/2})$, $d_2^2\le 2(k_2^2+\delta_2^2(-y)_+)$ and similarly for $d_4$. If $y\ge 0$, then
$$
y(d_2+\frac 12d_2^2N_1^{-1/3})\le Cy=Cy_+,
$$
since $d_2=k_2$, and if $y<0$, then
$$
y(d_2+\frac 12d_2^2N_1^{-1/3})\le d_2y=k_2y-\delta_2(-y)_+^{3/3}\le -\delta_2(-y)_+^{3/3}.
$$ 
Thus
$$
y(d_2+\frac 12d_2^2N_1^{-1/3})\le -\delta_2(-y)_+^{3/3}+Cy_+
$$ 
for all $y$. Similarly,
$$
-x(d_4+\frac 12d_4^2N_2^{-1/3})\le -\delta_4x_+^{3/3}+C(-x)_+.
$$ 
We can pick $\delta_2$ so small that
$$
C(\delta_2(-y)_+^{3/3}+\delta_2^2(-y)_+)-\delta_2(-y)_+^{3/3}\le -c(-y)_+^{3/2},
$$
$c>0$ is a small constant. A similar argument can be done for $\delta_4$. Using these estimates in (\ref{6.35}) it follows 
from (\ref{6.25}) that
$$
\left|N_1^{1/3}a_{0,1}(\ell,k)\right|\le Ce^{-c(x_+^{3/2}+(-y)_+^{3/2})+C(y_++(-x)_+)},
$$
which is what we wanted to prove. The estimates (\ref{4.9}), (\ref{4.10}) and (\ref{4.11}) can be proved in a similar way using
(\ref{6.32}) and (\ref{6.33}). We will not go into the details.
\end{proof}

Let us briefly indicate how we can go from the contour integral form of $\phi_1(x,y)$ in (\ref{6.23}) to the Airy form
in (\ref{1.9}). We use the fact that if $D>0$, then
\begin{equation}\label{6.36}
\frac 1{2\pi i}\int_{\Gamma_D}e^{\frac 13z^3+Az^2+Bz}dz=\Ai(-B+A^2)e^{-AB+\frac 23 A^3}
\end{equation}
and
\begin{equation}\label{6.37}
\frac 1{2\pi i}\int_{\Gamma_{-D}}e^{-\frac 13\zeta^3+A\zeta^2+B\zeta}d\zeta=\Ai(B+A^2)e^{AB+\frac 23 A^3}.
\end{equation}
Also, we write
\begin{equation}\label{6.38}
\frac 1{z-\alpha w}=-\int_0^\infty e^{\tau_1(z-\alpha w)}d\tau_1\,,\,
\frac 1{z-\zeta}=\int_0^\infty e^{-\tau_2(z-\zeta)}d\tau_2\,,\,
\frac 1{w-\omega}=\int_0^\infty e^{-\tau_3(w-\omega)}d\tau_3.
\end{equation}
If we insert (\ref{6.38}) into (\ref{6.23}) and use (\ref{1.6}), (\ref{6.36}) and (\ref{6.37}) we get
(\ref{1.9}) after some manipulations.


\end{document}